\keywords{Process calculi \and Quantum Based Systems \and Encodings}
\newlist{compactitem}{itemize}{2}   
\setlist[compactitem,1]{label=\textbullet,labelindent=*,leftmargin=*,nosep}
\setlist[compactitem,2]{label=$-$,labelindent=*,leftmargin=*,nosep}
\newlist{compactenum}{enumerate}{2}
\setlist[compactenum,1]{font=\normalfont,label=(\arabic*),labelindent=*,leftmargin=*,start=1,nosep}
\setlist[compactenum,2]{font=\normalfont,label=(\alph*),labelindent=*,leftmargin=*,start=1,nosep}
\newlist{compactdesc}{description}{1}
\setlist[compactdesc,1]{labelindent=0mm,labelwidth=*,labelsep=2mm,leftmargin=1cm,nosep}
\begin{document}


\title{Encodability Criteria for Quantum Based Systems}
\thanks{We thank the anonymous reviewers for their constructive feedback and help to improve this paper.}

\author[A.~Schmitt]{Anna Schmitt\lmcsorcid{0000-0001-6675-2879}}[a]
\author[K.~Peters]{Kirstin Peters\lmcsorcid{0000-0002-4281-0074}}[b]
\author[Y.~Deng]{Yuxin Deng\lmcsorcid{0000-0003-0753-418X}}[c]

\address{TU Darmstadt, Germany}
\email{Anna.Schmitt@tu-darmstadt.de}

\address{Augsburg University, Germany}
\email{kirstin.peters@uni-a.de}

\address{East China Normal University, Shanghai}
\email{yxdeng@sei.ecnu.edu.cn}


\begin{abstract}
	Quantum based systems are a relatively new research area for that different modelling languages including process calculi are currently under development. Encodings are often used to compare process calculi. Quality criteria are used then to rule out trivial or meaningless encodings. In this new context of quantum based systems, it is necessary to analyse the applicability of these quality criteria and to potentially extend or adapt them.
	As a first step, we test the suitability of classical criteria for encodings between quantum based languages and discuss new criteria.
	
	Concretely, we present an encoding, from a language inspired by \CQP into a language inspired by \qCCS.
	We show that this encoding satisfies compositionality, name invariance (for channel and qubit names), operational correspondence, divergence reflection, success sensitiveness, and that it preserves the size of quantum registers.
	Then we show that there is no encoding from \qCCS into \CQP that is compositional, operationally corresponding, and success sensitive.
\end{abstract}

\maketitle


\section{Introduction}
\label{sec:introduction}

The technological progress turns quantum based systems from theoretical models to hopefully soon practicable realisations. This progress inspired research on quantum algorithms and protocols.
They allow for a significant increase in efficiency in many cases and provide new approaches to secure systems.
These algorithms and protocols in turn call for verification methods that can deal with the new quantum based setting.

Among the various tools for such verifications, also several process calculi for quantum based systems are developed \cite{JorrandLalire04,GayNagarajan05,gay06,YingFengDuanJi09}.
To compare the expressive power and suitability for different application areas, encodings have been widely used for classical, \ie not quantum based, systems.
To rule out trivial or meaningless encodings, they are required to satisfy quality criteria.
In this new context of quantum based systems, we have to analyse the applicability of these quality criteria and potentially extend or adapt them.

Therefore, we start by considering a well-known framework of quality criteria introduced by Gorla in \cite{gorla10} for the classical setting.
As a case study we want to compare \emph{Communicating Quantum Processes} (\CQP) introduced in \cite{GayNagarajan05} and the \emph{Algebra of Quantum Processes} (\qCCS) introduced in \cite{FengDuanJiYing07,YingFengDuanJi09}.
These two process calculi are particularly interesting, because they model quantum registers and the behaviour of quantum based systems in fundamentally different ways.
\CQP considers closed systems, where qubits are manipulated by unitary transformations and the behaviour is expressed by a probabilistic transition system.
In contrast, \qCCS focuses on open systems and super-operators.
Moreover, the transition system of \qCCS as presented at \cite{YingFengDuanJi09} is non-probabilistic.
(Unitary transformations and super-operators are discussed in the next section.)

Unfortunately, the languages also differ in classical aspects: \CQP has $ \pi $-calculus-like name passing but the CCS based \qCCS does not allow to transfer names; \qCCS has operators for choice and recursion but \CQP in \cite{GayNagarajan05} has not.
Therefore, comparing the languages directly would yield negative results in both directions, that do not depend on their treatment of qubits.
To avoid these obvious negative results and to concentrate on the treatment of qubits, we consider \CQS, a strictly less expressive sublanguage of \CQP that removes name passing and simplifies the syntax/semantics, but as we claim does treat qubits in the same way as \CQP.
As second language we consider \OQS that is similar to \qCCS as presented in \cite{YingFengDuanJi09} extended by an operator for a conditional, but as we claim again does treat qubits in the same way as \qCCS.
Accordingly, our focus is not exactly on the languages \CQP and \qCCS but on how they treat qubits.
The language \CQS, for \emph{closed quantum systems}, inherits from \CQP the closed systems with only unitary transformations and has a semantics that is no longer probabilistic, but explicitly deals with probability distributions.
In contrast \OQS, for \emph{open quantum systems}, inherits from \qCCS the open systems and super-operators and a non-probabilistic semantics without explicitly considering probability distributions.
We further discuss the differences between \CQP and \CQS as well as \qCCS and \OQS when we introduce these languages.

We then show that there exists an encoding from \CQS into \OQS that satisfies the quality criteria of Gorla and thereby that the treatment of qubits in \OQS/\qCCS is strong enough to emulate the treatment of qubits in \CQS/\CQP.
We also show that the opposite direction is more difficult, even if we restrict the classical operators in \qCCS.
In fact, the counterexample that we use to prove the non-existence of an encoding considers the treatment of qubits only, \ie relies on the application of a specific super-operator that has no unitary equivalent.

These two results show that the quality criteria can still be applied in the context of quantum based systems and are still meaningful in this setting.
They may, however, not be exhaustive.
Therefore, we discuss directions of additional quality criteria that might be relevant for quantum based systems.

Our encoding satisfies compositionality, name invariance \wrt channel names and qubit names, strong operational correspondence, divergence reflection, success sensitiveness, and that the encoding preserves the size of quantum registers.
We also show that there is no encoding from \OQS/\qCCS into \CQS/\CQP that satisfies compositionality, operational correspondence, and success sensitiveness, where we consider a variant \qCCS with a measurement operator as given in \cite{FengDuanJiYing07,FengDuanYing12}.

\paragraph{Summary.}
We need a number of preliminaries:
Quantum based systems are briefly discussed in §\ref{sec:qbs}, the considered process calculi are introduced in §\ref{sec:processCalculi}, and §\ref{sec:criteria} presents the quality criteria of Gorla.
§\ref{sec:encoding} introduces the encoding from \CQS into \OQS and comments on its correctness.
The negative result from \OQS/\qCCS with a measurement operator into \CQS/\CQP is presented in §\ref{sec:separation}.
In §\ref{sec:criteriaQBS} we discuss directions for criteria specific to quantum based systems.
We conclude in §\ref{sec:conclusions}.
The present work extends and revises \cite{schmittPetersDeng22,schmittPetersDengTec22}.
In particular, we restore the negative result in §\ref{sec:separation}, since unfortunately the counterexample used in \cite{schmittPetersDeng22} was an invalid super-operator.
Moreover, we revise both of the considered languages to get closer to the original versions of \qCCS and \CQP and more clearly describe the differences to their respective prototypes.
We present detailed proofs of the mentioned results and provide more explanations.


\section{Quantum Based Systems}
\label{sec:qbs}

We briefly introduce the aspects of quantum based systems, which are needed for the rest of this paper.
For more details, we refer to the books by Nielsen and Chuang \cite{NielsenChuang10}, Gruska \cite{gruska09}, and Rieffel and Polak \cite{RieffelPolak00}.

A \emph{quantum bit} or \emph{qubit} is a physical system which has the two base states: $\Ket{0}$ and $\Ket{1}$. These states correspond to one-bit classical values.
The general state of a quantum system is a \emph{superposition} or linear combination of base states, concretely $\Ket{\psi} = \alpha \Ket{0} + \beta \Ket{1}$. Thereby, $\alpha$ and $\beta$ are complex numbers such that $\Length{\alpha}^2 + \Length{\beta}^2 = 1$, \eg $ \Ket{0} = 1\Ket{0} + 0\Ket{1} $.
Further, a state can be represented by column vectors $ \Ket{\psi} = \begin{pmatrix} \alpha\\ \beta \end{pmatrix} = \alpha \Ket{0} + \beta \Ket{1} $, which sometimes for readability will be written in the format $ \Transp{\Tuple{\alpha, \beta}} $, where $ \Transp{} $ stands for transpose.
The vector space of these vectors is a \emph{Hilbert space}, denoted by $ \hilbert $.
It forms the state space of a quantum based system.
In \cite{YingFengDuanJi09} finite-dimensional and countably infinite-dimensional Hilbert spaces are considered, where the latter are treated as tensor products of countably infinitely many finite-dimensional Hilbert spaces.
For this work finite-dimensional Hilbert spaces are sufficient.

The basis $ \Set{\Ket{0}, \Ket{1}} $ is called \emph{standard basis} or \emph{computational basis}, but sometimes there are other orthonormal bases of interest, especially the \emph{diagonal} or \emph{Hadamard} basis consisting of the vectors $ \Ket{+} = \frac{1}{\sqrt{2}}\Tuple{\Ket{0} + \Ket{1}} $ and $ \Ket{-} = \frac{1}{\sqrt{2}}\Tuple{\Ket{0} - \Ket{1}} $. We assume the standard basis in the following.

The evolution of a closed quantum system can be described by \emph{unitary transformations} \cite{NielsenChuang10}. A unitary transformation $U$ is represented by a complex-valued matrix such that the effect of $U$ onto a state of a qubit is calculated by matrix multiplication.
It holds that $ \Adjoint{U}U = \opI$, where $ \Adjoint{U} $ is the adjoint of $U$ and $\opI$ is the \emph{identity matrix}. Thereby, $\opI$ is one of the \emph{Pauli matrices} together with $ \opX $, $ \opY $, and $ \opZ $.
Another important unitary transformation is the \emph{Hadamard} transformation $\opH$, as it creates the superpositions $\opH\Ket{0} = \Ket{+}$ and $\opH\Ket{1} = \Ket{-}$.
\begin{align*}
	\opI =
	\begin{pmatrix}
		1 & 0 \\
		0 & 1
	\end{pmatrix} \quad
	\opX =
	\begin{pmatrix}
		0 & 1 \\
		1 & 0
	\end{pmatrix} \quad
	\opY =
	\begin{pmatrix}
		0 & -i \\
		i & 0
	\end{pmatrix} \quad
	\opZ =
	\begin{pmatrix}
		1 & 0 \\
		0 & -1
	\end{pmatrix} \quad
	\opH = \dfrac{1}{\sqrt{2}}
	\begin{pmatrix}
		1 & 1 \\
		1 & -1
	\end{pmatrix}
\end{align*}
All of these five unitary transformations are applied to a single qubit.
As mentioned above, $ \opI $ is identity.
$ \opX $ performs the quantum version of a bit-flip.
It interchanges the amplitudes, \ie $ \opX \Transp{\Tuple{\alpha, \beta}} = \Transp{\Tuple{\beta, \alpha}} $.
Intuitively, $ \opY $ moves a qubit by the imaginary $ i $, \ie $ \opY\Transp{\Tuple{\alpha, \beta}} = \Transp{\Tuple{-i \beta,  i \alpha}} $.
The transformation $ \opZ $, that is sometimes called phase flip, leaves the upper component of the vector unchanged but flips the sign of the second component, \ie $ \opZ \Transp{\Tuple{\alpha, \beta}} = \Transp{\Tuple{\alpha, -\beta}} $.
Hadamard $ \opH $ intuitively moves a qubit halfway between the base states $ \Ket{0} $ and $ \Ket{1} $, \eg $ \opH\Ket{0} = \opH \Transp{\Tuple{1, 0}} = \Ket{+} = \frac{1}{\sqrt{2}}\Tuple{\Ket{0} + \Ket{1}} $ and $ \opH\Ket{+} = \Ket{0} $.

Another key feature of quantum computing is the \emph{measurement}. Measuring a qubit $q$ in state $\Ket{\psi} = \alpha\Ket{0} + \beta\Ket{1}$ results in $0$ (leaving it in $\Ket{0}$) with probability $\Length{\alpha}^2$ and in $1$ (leaving it in $\Ket{1}$) with probability $\Length{\beta}^2$.

By combining qubits, we create \emph{multi-qubit systems}. Therefore the spaces $U$ and $V$ with bases $\Set{u_0, \ldots, u_i, \ldots}$ and $\Set{v_0, \ldots, v_j, \ldots}$ are joined using the \emph{tensor product} into one space $U \tensorProd V$ with basis $\Set{u_0 \tensorProd v_0, \ldots, u_i \tensorProd v_j, \ldots}$. So a system consisting of $n$ qubits has a $2^n$-dimensional space with standard bases $\Ket{00\ldots0} \ldots \Ket{11\ldots1}$.
Within these systems we can measure a single or multiple qubits.
As an example for measurement, consider the 2-qubit system with the basis $ \Set{ \Ket{00}, \Ket{01}, \Ket{10}, \Ket{11} } $ and the general state $\alpha\Ket{00} + \beta\Ket{01} + \gamma\Ket{10} + \delta\Ket{11}$ with $\Length{\alpha}^2 + \Length{\beta}^2 + \Length{\gamma}^2 + \Length{\delta}^2 = 1$.
A measurement of the first qubit gives result $0$ with probability $\Length{\alpha}^2 + \Length{\beta}^2$ and leaves the system in state $\dfrac{1}{\sqrt{\Length{\alpha}^2 + \Length{\beta}^2}}(\alpha\Ket{00} + \beta\Ket{01})$. The result $1$ is given with probability $\Length{\gamma}^2 + \Length{\delta}^2$.
In this case the system has state $\dfrac{1}{\sqrt{\Length{\gamma}^2 + \Length{\delta}^2}}(\gamma\Ket{10} + \delta\Ket{11})$. Further, the measurement of both qubits simultaneously gives result $ 0 $ for both qubits with probability $\Length{\alpha}^2$ (leaving the system in state $\Ket{00}$), result $ 0 $ for the first and $ 1 $ for the second qubit with probability $ \Length{\beta}^2 $ (leaving the system in state $\Ket{01}$) and so on.
We use binary numbers to refer to measurement results, \ie for two qubits the measurement results are $ 00 $, $ 01 $, $ 10 $, or $ 11 $.

In multi-qubit systems unitary transformations can be performed on single or several qubits.
As an example for an unitary transformation, consider the transformation $\opX$ on both qubits of a 2-qubit system in state $\Ket{00}$ simultaneously, we use the unitary transformation $\opX \tensorProd \opX$. The result of $(\opX \tensorProd \opX) \Ket{00}$ is the state $\Ket{11}$. To apply $\opX$ only to the second qubit, we use $\opI \tensorProd \opX$ and $(\opI \tensorProd \opX) \Ket{00} = \Ket{01}$.
The Pauli matrix $ \opI $ denotes the identity matrix in $ 2^1 $ dimensional space.
By slightly abusing notation we also use $ \opI_{\Set{q_1, \ldots, q_n} } $ or simply $ \opI $ to denote identity in $ 2^n $ dimensional space for all natural numbers $ n $.

The multi-qubit systems can exhibit \emph{entanglement}, meaning that states of qubits are correlated, \eg in $ \frac{1}{\sqrt{2}}\left( \Ket{00} + \Ket{11} \right) $ which is one of the so-called \emph{Bell pairs}.
Here, a measurement of the first qubit in the computational basis results in $0$ (leaving the state $\Ket{00}$) with probability $\frac{1}{2}$ and in $1$ (leaving the state $\Ket{11}$) with probability $\frac{1}{2}$. In both cases a subsequent measurement of the second qubit in the same basis gives the same result as the first measurement with probability 1. The effect also occurs if the entangled qubits are physically separated. Because of this, states with entangled qubits cannot be written as a tensor product of single-qubit states.

States of quantum systems can also be described by \emph{density matrices} or \emph{density operators}.
In contrast to the vector description of states, density matrices allow to describe the states of open systems.
A density operator in a Hilbert space $ \hilbert $ is a linear operator $ \rho $ on it, such that $ \Adjoint{\Ket{\psi}}\rho\Ket{\psi} \geq 0 $ for all $ \Ket{\psi} $ and $ \Trace{\rho} = 1 $, where the trace $ \Trace{\rho} $ is the sum of elements on the main diagonal of the matrix $ \rho $.
A positive operator $ \rho $ is called a partial density operator if $ \Trace{\rho} \leq 1 $.
We write $ \partDen $ for the set of (partial) density operators on $ \hilbert $.
For every state $ \Ket{\psi} $ in the above described vector representation, we obtain the corresponding density matrix by the outer product $ \OuterProduct{\psi}{\psi} = \Ket{\psi} \Adjoint{\Ket{\psi}} $.
For example, consider again the $ 2 $-qubit system in general state $ \Ket{\psi} = \alpha\Ket{00} + \beta\Ket{01} + \gamma\Ket{10} + \delta\Ket{11} $ which corresponds to the vector $ \Transp{\Tuple{\alpha, \beta, \gamma, \delta}} $.
The corresponding density matrix is given as:
\begin{align*}
	\OuterProduct{\psi}{\psi} =
	\Ket{\psi} \Adjoint{\Ket{\psi}} =
	\begin{pmatrix}
		\alpha \\
		\beta \\
		\gamma \\
		\delta
	\end{pmatrix}
	\Tuple{\ComplexCon{\alpha}, \ComplexCon{\beta}, \ComplexCon{\gamma}, \ComplexCon{\delta}} = 
	\begin{pmatrix}
		\alpha\ComplexCon{\alpha} & \alpha\ComplexCon{\beta} & \alpha\ComplexCon{\gamma} & \alpha\ComplexCon{\delta} \\
		\beta\ComplexCon{\alpha} & \beta\ComplexCon{\beta} & \beta\ComplexCon{\gamma} & \beta\ComplexCon{\delta} \\
		\gamma\ComplexCon{\alpha} & \gamma\ComplexCon{\beta} & \gamma\ComplexCon{\gamma} & \gamma\ComplexCon{\delta} \\
		\delta\ComplexCon{\alpha} & \delta\ComplexCon{\beta} & \delta\ComplexCon{\gamma} & \delta\ComplexCon{\delta}
	\end{pmatrix}
\end{align*}
where the adjoint $ \Adjoint{\Ket{\psi}} = \Tuple{\ComplexCon{\alpha}, \ComplexCon{\beta}, \ComplexCon{\gamma}, \ComplexCon{\delta}} $ is the conjugate transpose of $ \Ket{\psi} $.
Here, $ \ComplexCon{x} $ denotes the complex conjugate of $ x $.
For real numbers $ a $ and $ b $, the complex conjugate of $ a + ib $ is $ a - ib $.
Such states, \ie states that result from the outer product of a vector with itself, are called \emph{pure states}.
Additionally, density matrices can represent \emph{mixed states}, that arise either when the system is not fully known or when one wants to describe a system which is entangled with another.
Every density matrix can be represented as $ \sum_{i} p_i \OuterProduct{\psi_i}{\psi_i} $, called \emph{sum representation}, \ie by an ensemble of pure states $ \Ket{\psi_i} $ with their probabilities $ p_i \geq 0 $ and $ \sum_i p_i = 1 $.

We often use density matrix to refer to a state of a potentially open system and call the transformations on these states \emph{super-operators}.
Note that unitary transformations can only describe transitions in closed systems.
Super-operators are strictly more expressive, since they can also express interaction with an (unknown) environment.
Example~\ref{exa:phaseFlipChannel} in Section~\ref{sec:separation} presents a super-operator that does not resemble any unitary transformation.
This super-operator can be used to model a specific kind of noise in quantum communication.
Intuitively, noise is a form of partial entanglement with an unkown environment.
Note that the channels that are used to transfer qubit-systems in \CQP, \CQS, \qCCS, and \OQS, are modelled as noise-free channels, \ie noise has to be added explicitly by respective super-operators as discussed in \cite{YingFengDuanJi09}.
There are different ways to define super-operators, \eg via the sum representation.

\begin{defi}[Super-Operator, Operator-Sum Representation, \cite{NielsenChuang10}]
	\label{def:superopSum}
	Let $ \rho $ be the initial state of a system, $ \Ket{e_1}, \ldots, \Ket{e_n} $ be an orthonormal basis for the (finite dimensional) state space of the environment, and $ \rho_{\mathsf{env}} = \Ket{e_0}\Bra{e_0} $ be the initial state of the environment.
	A \emph{super-operator} $ \StateTrans{\opE}{}{\rho} $ on the system $ \rho $ is an operator $ \opE $ which is defined as $ \StateTrans{\opE}{}{\rho} = \sum_i E_i \rho \Adjoint{E_i} $, where $ E_i = \Bra{e_i} U \Ket{e_0} $ is an operator on the state space of the system.
	Thereby, the operators $ \Set{E_i} $ are known as \emph{operation elements} for the quantum operation $ \opE $, which have to satisfy $ \sum_i \Adjoint{E_i}E_i \leq \opI $.
	The super-operator $ \opE $ is \emph{trace-preserving} if $ \sum_i \Adjoint{E_i}E_i = \opI $.
\end{defi}

For every unitary transformation $ U $, $ \StateTrans{U}{}{\rho} = U \rho \Adjoint{U} $ is a trace-preserving super-operator.
Let $ \Set{M_m} $ such that $ \sum_m \Adjoint{M_m} M_m = \opI $.
Then, by \cite{YingFengDuanJi09}, $ \Set{M_m} $ is a collection of measurement operators.
We usually let $ m $ refer to the measurement outcome.
For each $ m $, let $ \opE_m\Tuple{\rho} = M_m \rho \Adjoint{M_m} $ for any state $ \rho \in \partDen $.
Moreover, let $ \opE\Tuple{\rho} = \sum_m M_m \rho \Adjoint{M_m} $ for any state $ \rho \in \partDen $.
Then $ \opE_m $ is a super-operator, which is not necessarily trace-preserving, whereas $ \opE $ is a trace-preserving super-operator (see Example~2.5 in \cite{YingFengDuanJi09}).

According to \cite{NielsenChuang10} the equation $ \opE\Tuple{\rho} = \sum_i E_i \rho \Adjoint{E_i} $ from Definition~\ref{def:superopSum}, is a re-statement of $ \opE\Tuple{\rho} = \TraceEnv{U \left(\rho \tensorProd \rho_{\mathsf{env}} \right)\Adjoint{U}} $, where $ \TraceEnv{} $ is a partial trace over the environment to obtain the reduced state of the system.
Within this equation it is assumed, that the environment starts in a pure state.
This assumption can be made without loss of generality, since we are free to introduce an extra system purifying the environment, if it starts in a mixed state.
Another assumption made within this equation is that the system and the environment start in a product state.
This is not true in general, as quantum systems constantly interact with their environment by which correlations are created.
Nonetheless, in many cases of practical interest it is reasonable to make this assumption, as by bringing a quantum system to a specific state these correlations are destroyed, leaving the system in a pure state.
We refer to \cite{NielsenChuang10} for further informations on super-operators.


\section{Process Calculi}
\label{sec:processCalculi}

A \emph{process calculus} is a language $ \lang = \left\langle \config, \step \right\rangle $ that consists of a set of \emph{configurations} $ \config $ (its syntax) and a relation $ \step : \config \times \config $ on configurations (its reduction semantics).
To range over the configurations we use the upper case letters $ C, C', \ldots $.
Further, a configuration $ C $ contains a \emph{term} out of the set of (process) terms $ \proc $ on which we range over using the upper case letters $ P, Q, P', \ldots $.

Assume three pairwise distinct countably-infinite sets $ \names $ of \emph{names}, $ \var $ of \emph{qubit variables}, and $ \binaries $ of \emph{variables for binary numbers}.
We use lower case letters to range over names $ a, c, \ldots $, qubits names $ q, q', x, y, \ldots $, binary numbers $ b, b', \ldots $, and variables for binary numbers $ v, v', \ldots $.
We write $ bv, bv', \ldots $ for objects that are either a binary number or a variable for binary numbers.
Let $ \tau \notin \var \cup \names \cup \binaries $.
The \emph{scope} of a name defines the area in which this name is known and can be used. It can be useful to restrict this scope, for example to forbid interactions between two processes or with an unknown and, hence, potentially untrusted environment. While names with a restricted scope are called \emph{bound names}, the remaining ones are called \emph{free names}.

The \emph{syntax} of a process calculus is usually defined by a context-free grammar defining operators, \ie functions $ \operatorname{op} : \proc^n \rightarrow \proc $ with $ n \geq 0 $.
An operator of arity $0$ is a \emph{constant}.
The \emph{semantics} of a process calculus is given as a \emph{structural operational semantics} consisting of inference rules defined on the operators of the language \cite{Plotkin04}.
The semantics is provided often in two forms, as \emph{reduction semantics} and as \emph{labelled transition semantics}.
We assume that at least the reduction semantics is given, because its treatment is easier in the context of encodings.
As we naturally extend the definition of the syntax to configurations, a \emph{(reduction) step}, written as $ C \step C' $, is a single application of the reduction semantics where $ C' $ is called \emph{derivative}. Let $ C \step $ denote the existence of a step from $ C $.
We write $ C \infiniteSteps $ if $ C $ has an \emph{infinite sequence} of steps and $ \steps $ to denote the \emph{reflexive and transitive closure} of $ \step $.

To reason about environments of terms, we use functions on process terms called contexts. More precisely, a \emph{context} $\Context{}{}{\hole_1, \ldots, \hole_n} : \proc^n \to \proc$ with $n$ holes is a function from $ n $ terms into one term, \ie given $ P_1, \ldots, P_n \in \proc $, the term $ \Context{}{}{P_1, \ldots, P_n} $ is the result of inserting $ P_1, \ldots, P_n $ in the corresponding order into the $ n $ holes of $ \context $.
We naturally extend the definition of contexts to configurations, \ie consider also contexts $ \Context{}{}{\hole_1, \ldots, \hole_n} : \proc^n \to \config $.

A substitution is a finite mapping on either names or qubits or variables for binary numbers defined by a non-empty set $ \Set{ \Subst{h_1}{g_1}, \ldots, \Subst{h_n}{g_n} } = \Set{ \Subst{h_1, \ldots, h_n}{g_1, \ldots, g_n}} $ of renamings, where the $ g_1, \ldots, g_n $ are pairwise distinct.
The application $ P\Set{ \Subst{h_1}{g_1}, \ldots, \Subst{h_n}{g_n} } $ of a substitution on a term is defined as the result of simultaneously replacing all free occurrences of $ g_i $ by $ h_i $ for $ i \in \Set{ 1, \ldots, n } $, possibly applying $ \alpha $-conversion to avoid capture or name clashes.
For all names in $ \names \setminus \Set{ g_1, \ldots, g_n } $ or qubits in $ \var \setminus \Set{ g_1, \ldots, g_n } $ or variables in $ \binaries \setminus \Set{ g_1, \ldots, g_n } $ the substitution behaves as the identity mapping.
Substitutions on qubits additionally cannot translate different qubits to the same qubit, since this might violate the no-cloning principle.
More on substitutions of qubits can be found, \eg, in \cite{YingFengDuanJi09}.
We naturally extend substitutions to mappings that instantiate variables for binary numbers by binary numbers.
We equate terms and configurations modulo alpha conversion on (qubit) names.

For the last criterion of \cite{gorla10} in Section~\ref{sec:criteria}, we need a special constant $ \success $, called \emph{success(ful termination)}, in both considered languages.
Therefore, we add $ \success $ to the grammars of both languages without explicitly mentioning them.
Success is used as a barb, where $ \HasBarb{C}{\success} $ if the term contained in the configuration $ C $ has an unguarded occurrence of $ \success $ and $ \ReachBarb{C}{\success} = \exists C' \logdot C \steps C' \wedge \HasBarb{C'}{\success} $, to implement some form of (fair) testing.


\subsection{A Calculus for Closed Quantum Systems}
\label{sec:CQS}

Communicating Quantum Processes (\CQP) is introduced in \cite{GayNagarajan05}.
\CQP is further studied \eg in \cite{DavidsonGayNagarajanPuthoor12} to study quantum error correction, in \cite{FrankeArnoldGayPuthoor13,FrankeArnoldGayPuthoor14} to describe and analyse linear optical quantum computing, or in \cite{GayPuthoor12}, where it is extended to be able to describe d-dimensional quantum systems.

As indicated in Section~\ref{sec:introduction}, we build \CQS by inheriting some ideas of \CQP.
However, the resulting language \CQS is strictly less expressive than \CQP.
We simplify the definition of \CQP by removing name passing and contexts, the additional layer on expressions in the syntax and semantics, do not allow to construct channel names from expressions, and by using a monadic version of communication in that only qubits can be transmitted.
Then we add a standard conditional operator, that allows to compare two binary numbers.
\CQP in \cite{GayNagarajan05} does not have such a conditional, but as stated in footnote 3 in \cite{GayNagarajan05} the language can easily be extended by an operator to test the result of measurement---just as the conditional we add here.
We claim, however that the treatment of qubits, in particular the manipulations of the quantum register as well as the communication of qubits, is the same as in \CQP.
Let $ \Binary{i} $ return the binary number representing the natural number $ i $.

\begin{defi}[\CQS]
	The \CQS \emph{terms}, denoted by $ \procCQS $, are given by:
	\begin{align*}
		P & \deffTerms \nilCQS \sep P \mid P \sep \InpCQS{c}{x}{P} \sep \OutCQS{c}{x}{P} \sep \UnitaryCQS{\tilde{x}}{U}{P}\\
		& \sep \MeasCQS{v}{\tilde{x}}{P} \sep \NewCQS{c}{P} \sep \QubitCQS{x}{P} \sep \CondCQS{bv}{bv'}{P}
	\end{align*}
	\CQS \emph{configurations} $ \configCQS $ are given by $ \ConfigCQS{\sigma}{\phi}{P} $ or $ \boxplus_{0 \leq i < 2^r} p_i \bullet \ConfigCQS{\sigma_i}{\phi}{P\Set{\Subst{\Binary{i}}{v}}} $, where $ \sigma, \sigma_i $ have the form $ q_0, \ldots, q_{n-1} = \Ket{\psi} $ with $ \Ket{\psi} = \sum_{i = 0}^{2^n-1}\alpha_i \Ket{\psi_i} $, $ r \leq n $, $ \phi $ is the list of channels in the system, and $ P \in \procCQS $.
\end{defi}

The syntax of \CQS is $ \pi $-calculus like.
The inactive process is denoted by $ \nilCQS $ and $ P \mid P $ defines parallel composition.
A term $ \InpCQS{c}{x}{P} $ receives a qubit $ q \in \var $ over channel $ c \in \names $ and proceeds as $ P\Set{\Subst{q}{x}} $.
Similarly, $ \OutCQS{c}{x}{P} $ first sends a qubit $ x \in \var $ over channel $ c \in \names $ before proceeding as $ P $.
The term $ \UnitaryCQS{\tilde{x}}{U}{P} $ applies the unitary transformation $ U $ to the qubits in sequence $ \tilde{x} $ and then proceeds as $ P $.
The process $ \MeasCQS{v}{\tilde{x}}{P} $ measures the qubits in $ \tilde{x} $ with $ \Length{\tilde{x}} > 0 $ and saves the result in the variable $ v $ for binary numbers.
The terms $ \NewCQS{c}{P} $ and $ \QubitCQS{x}{P} $ create a fresh, global channel $ a \in \names $ and a fresh qubit $ q_n \in \var $ (for a quantum register $ \sigma = q_0, \ldots, q_{n - 1} $) and then proceed as $ P\Set{\Subst{a}{c}} $ and $ P\Set{\Subst{q_n}{x}} $, respectively.

The configuration $ \boxplus_{0 \leq i < 2^r} p_i \bullet C_i $ denotes a probability distribution over configurations $ C_i = \ConfigCQS{\sigma_i}{\phi}{P\Set{\Subst{\Binary{i}}{v}}} $, where $ \sum_i p_i = 1 $ and where the terms within the configurations $ C_i $ may differ only by instantiating a variable $ v $ by the binary number $ \Binary{i} $.
It results from measuring the first $ r $ qubits, where $ p_i $ is the probability of obtaining result $ \Binary{i} $ from measuring the qubits $ q_0, \ldots, q_{r - 1} $ and $ C_i $ is the configuration of case $ i $ after the measurement.
Indeed we restrict our attention to probability distributions of configurations that may be the result of measuring a state of a single configuration.
In particular, this means that the states $ \sigma_i $ of a probability distribution have to reflect the possible outcomes of the measurement, \ie for a single qubit $ \sigma_0 = q = \Ket{0} $ and $ \sigma_1 = q = \Ket{1} $.
We may also write a distribution as $ p_1 \bullet C_1 \boxplus \ldots \boxplus p_j \bullet C_j $ with $ j = 2^r - 1 $.
We equate $ \ConfigCQS{\sigma_0}{\phi}{P} $ and $ \boxplus_{0 \leq i < 2^0} 1 \bullet \ConfigCQS{\sigma_i}{\phi}{P\Set{\Subst{\Binary{i}}{v}}} $, \ie if $ r = 0 $ then we assume that $ v $ is not free in $ P $.

The variable $ x \in \var $ is bound in $ P $ by $ \InpCQS{c}{x}{P} $ and $ \QubitCQS{x}{P} $.
Similarly, the variable $ v \in \binaries $ is bound in $ P $ by $ \MeasCQS{v}{\tilde{x}}{P} $ and the variable $ c \in \names $ is bound in $ P $ by $ \NewCQS{c}{P} $.
A variable is free if it is not bound.
Let $ \FreeQubits{P} $, $ \FreeChan{P} $, and $ \FreeBinV{P} $ denote the sets of free qubits, free channels, and free variables for binary numbers in $ P $, respectively.

The state $ \sigma $ is represented by a list of qubits $ q_0, \ldots, q_{n-1} $ as well as a linear combination $ \Ket{\psi} = \sum_{i = 0}^{2^n-1}\alpha_i \Ket{\psi_i} $ which can also be rewritten by a vector $ \Transp{\Tuple{\alpha_0, \ldots, \alpha_{2^n-1}}} $, where $ \Transp{} $ stands for transpose.
As done in \cite{GayNagarajan05}, we sometimes write as an abbreviated form $ \sigma = q_0, \ldots, q_{n-1} $ or $ \sigma = \Ket{\psi} $.

\begin{figure}[t]
	\centering
	\begin{tabular}{r l}
		\ruleRMeasureCQS & $ \ConfigCQS{\sigma}{\phi}{\MeasCQS{v}{q_0, \ldots q_{r-1}}{P}} $\\
		& \quad $ \step \boxplus_{0 \leq m < 2^r} p_m \bullet \ConfigCQS{\sigma_m'}{\phi}{P\Set{\Subst{\Binary{m}}{v}}} $ \vspace{0.25em}\\
		\ruleRTransCQS & $ \ConfigCQS{q_0, \ldots, q_{n-1} = \Ket{\psi}}{\phi}{\UnitaryCQS{q_0, \ldots, q_{r-1}}{U}{P}} $ \\
		& \quad $ \step \ConfigCQS{q_0, \ldots, q_{n-1} = \Tuple{U \tensorProd \opI_{\Set{q_r, \ldots, q_{n-1}}}}\Ket{\psi}}{\phi}{P} $ \vspace{0.25em}\\
		\ruleRPermCQS & $ \ConfigCQS{q_0, \ldots, q_{n-1} = \Ket{\psi}}{\phi}{P} \step \ConfigCQS{q_{\pi\Tuple{0}}, \ldots, q_{\pi\Tuple{n-1}} = \prod\Ket{\psi}}{\phi}{P\pi} $ \vspace{0.25em}\\
		\ruleRProbCQS & $ \boxplus_{0 \leq i < 2^r} p_i \bullet \ConfigCQS{\sigma_i}{\phi}{P\Set{\Subst{\Binary{i}}{v}}} $\\
		& \quad $ \step \ConfigCQS{\sigma_{j}}{\phi}{P\Set{\Subst{\Binary{j}}{v}}} $ \quad where $ p_j \neq 0 $ and $ r > 0 $ \vspace{0.25em}\\
		\ruleRNewCQS & $ \ConfigCQS{\sigma}{\phi}{\NewCQS{c}{P}} \step \ConfigCQS{\sigma}{\phi, a}{P\Set{\Subst{a}{c}}} $ \quad where $ a $ is fresh \vspace{0.25em}\\
		\ruleRQbitCQS & $ \ConfigCQS{q_0, \ldots, q_{n - 1} = \Ket{\psi}}{\phi}{\QubitCQS{x}{P}} $ \\
		& \quad $ \step \ConfigCQS{q_0, \ldots, q_{n - 1}, q_n = \Ket{\psi} \tensorProd \Ket{0}}{\phi}{P\Set{\Subst{q_n}{x}}} $\\
		\ruleRCommCQS & $ \ConfigCQS{\sigma}{\phi}{\OutCQS{c}{q}{P} \mid \InpCQS{c}{x}{Q}} \step \ConfigCQS{\sigma}{\phi}{P \mid Q\Set{\Subst{q}{x}}} $ \vspace{0.25em}\\
		\ruleRParCQS & $ \dfrac{\ConfigCQS{\sigma}{\phi}{P} \step \boxplus_{0 \leq i < 2^r} p_i \bullet \ConfigCQS{\sigma_i'}{\phi'}{P'\Set{\Subst{\Binary{i}}{v}}}}{\ConfigCQS{\sigma}{\phi}{P \mid Q} \step \boxplus_{0 \leq i < 2^r} p_i \bullet \ConfigCQS{\sigma_i'}{\phi'}{P'\Set{\Subst{\Binary{i}}{v}} \mid Q}} $ \vspace{0.25em}\\
		\ruleRCongCQS & $ \dfrac{Q \equiv P \quad \ConfigCQS{\sigma}{\phi}{P} \step \boxplus_{0 \leq i < 2^r} p_i \bullet \ConfigCQS{\sigma_i'}{\phi'}{P'\Set{\Subst{\Binary{i}}{v}}} \quad P' \equiv Q'}{\ConfigCQS{\sigma}{\phi}{Q} \step \boxplus_{0 \leq i < 2^r} p_i \bullet \ConfigCQS{\sigma_i'}{\phi'}{Q'\Set{\Subst{\Binary{i}}{v}}}} $ \vspace{0.25em}\\
		\ruleRCondCQS & $ \dfrac{b = b'}{\ConfigCQS{\sigma}{\phi}{\CondCQS{b}{b'}{P}} \step \ConfigCQS{\sigma}{\phi}{P}} $
	\end{tabular}
	\caption{Semantics of \CQS}
	\label{fig:semanticsCQS}
\end{figure}

The semantics of \CQS is defined by the reduction rules in Figure~\ref{fig:semanticsCQS}.
These rules are inspired by the semantics of \CQP in \cite{GayNagarajan05} but do not require a second layer for expressions, since we simplified the syntax, and drop the label of Rule~\ruleRProbCQS.
Accordingly, \CQS in contrast to \CQP does not have a probabilistic transition system, but replaces probabilistic steps by non-deterministic steps.
We do that, because the encodability criteria that we study here (see Section~\ref{sec:criteria}) do not consider probabilistic transitions systems.
We discuss this issue in Section~\ref{sec:criteriaQBS}.
Moreover, we add the Rule~\ruleRCondCQS to reduce conditionals.
Rule~\ruleRMeasureCQS measures the first $ r $ qubits of $ \sigma $, where $ \sigma = \alpha_0\Ket{\psi_0} + \cdots + \alpha_{2^n-1}\Ket{\psi_{2^n-1}} $, $ \sigma_m' = \dfrac{\alpha_{l_m}}{\sqrt{p_m}}\Ket{\psi_{l_m}} + \cdots + \dfrac{\alpha_{u_m}}{\sqrt{p_m}}\Ket{\psi_{u_m}} $, $ l_m = 2^{n-r}m $, $ u_m = 2^{n-r}\Tuple{m+1}-1 $, and $ p_m = \Length{\alpha_{l_m}}^2 + \cdots + \Length{\alpha_{u_m}}^2 $.
As a result a probability distribution over the possible base vectors is generated, where $\sigma_m'$ is the accordingly updated qubit vector and $ \Binary{m} $ is the respective measurement outcome.
Rule~\ruleRTransCQS applies the unitary transformation $ U $ on the first $ r $ qubits.
In contrast to \cite{GayNagarajan05}, we explicitly list in the subscript of $ \opI $ the qubits it is applied to.
As the rules \ruleRMeasureCQS and \ruleRTransCQS operate on the first $ r $ qubits within $ \sigma $, Rule~\ruleRPermCQS allows to permute the qubits in $ \sigma $. Thereby, $ \pi $ is a permutation and $ \prod $ is the corresponding unitary operator.

The Rule~\ruleRProbCQS reduces a probability distribution with $ r > 0 $ to a single of its configurations $ \ConfigCQS{\sigma_j}{\phi}{P\Set{\Subst{\Binary{j}}{v}}} $ with non-zero probability $ p_j $.
In contrast to \cite{GayNagarajan05} we drop the label indicating the probability $ p_j $ of the chosen case.
The rules~\ruleRNewCQS and \ruleRQbitCQS create new channels and qubits and update the list of channel names or the qubit vector.
Thereby, a new qubit is initialised to $ \Ket{0} $ and $ \Ket{\psi} \tensorProd \Ket{0} $ is reshaped into a $ \left(2^{n+1}\right) $-vector.
The Rule~\ruleRCommCQS defines communication in the style of the $ \pi $-calculus.
Rule~\ruleRParCQS allows reduction to take place under parallel contexts and Rule~\ruleRCongCQS enables the use of structural congruence as in the $ \pi $-calculus.
The structural congruence of \CQS is defined, similarly to \cite{GayNagarajan05}, as the smallest congruence containing $\alpha$-equivalence that is closed under the following rules:
\begin{displaymath}
	P \mid 0 \equiv P \quad\quad P \mid Q \equiv Q \mid P \quad\quad P \mid \left(Q \mid R \right) \equiv \left( P \mid Q \right) \mid R
\end{displaymath}
Moreover, $ \ConfigCQS{\sigma}{\phi}{P} \equiv \ConfigCQS{\sigma'}{\phi}{Q} $ if $ P \equiv Q $ and $ \sigma = \sigma' $ or if $ \ConfigCQS{\sigma'}{\phi}{Q} $ is obtained from $ \ConfigCQS{\sigma}{\phi}{P} $ by alpha conversion on the qubit names in $ \sigma $.
Finally, Rule~\ruleRCondCQS unguards the continuation $ P $ of a conditional if its condition is satisfied, which checks equality of two binary numbers $ b $ and $ b' $.

As \CQP also \CQS is augmented with a type system to ensure that two parallel components cannot share access to the same qubits, which is the realisation of the no-cloning principle of qubits in \CQP.
We use a very simple type system compared to \cite{GayNagarajan05}, which is possible since we significantly simplified \CQS in comparison to \CQP and since we require the sets $ \names $, $ \var $, and $ \binaries $ to be pairwise distinct.
Remember that we equate configurations and terms modulo alpha conversion.
We use this in the type system to ensure that there are no name clashes, \ie that no two bound variables have the same name and no bound variable has the same name as a free variable.
We extend this convention to also require that no variable of a qubit has the name $ q_i $ for any natural number $ i $ such that \ruleRQbitCQS does not cause name clashes.
The \CQS \emph{types}, denoted by $ \typesCQS $, are given by:
\begin{align*}
	T & \deffTerms \binariesType \sep \OpType{n}
\end{align*}
The data type $ \binariesType $ is used for binary numbers.
The type $ \OpType{n} $ is used for unitary transformations that are applied to $ n $ qubits.

\begin{figure}[t]
	\centering
	\begin{displaymath}\begin{array}{c}
			\ruleTBin \; \dfrac{b \text{ is a binary number}}{\vdash \At{b}{\binariesType}}
			\hspace{2em}
			\ruleTOp \; \dfrac{U \text{ is a unitary transformation on } n \text{ qubits}}{\vdash \At{U}{\OpType{n}}}
			\vspace{0.5em}\\
			\ruleTNil \; \Sigma \vdash \nilCQS
			\hspace{2em}
			\ruleTSuc \; \Sigma \vdash \success
			\hspace{2em}
			\ruleTPar \; \dfrac{\Sigma_1 \vdash P \quad \Sigma_2 \vdash Q \quad \Sigma_1 \cap \Sigma_2 = \emptyset}{\Sigma_1 \cup \Sigma_2 \vdash P \mid Q}
			\vspace{0.5em}\\
			\ruleTIn \; \dfrac{c \in \names \quad x \in \var \setminus \Sigma \quad \Sigma \cup \Set{x} \vdash P}{\Sigma \vdash \InpCQS{c}{x}{P}}
			\hspace{2em}
			\ruleTOut \; \dfrac{c \in \names \quad x \in \var \cap \Sigma \quad \Sigma \setminus \Set{x} \vdash P}{\Sigma \vdash \OutCQS{c}{x}{P}}
			\vspace{0.5em}\\
			\ruleTTrans \; \dfrac{x_1, \ldots, x_n \in \var \cap \Sigma \quad \vdash \At{U}{\OpType{n}} \quad \Sigma \vdash P}{\Sigma \vdash \UnitaryCQS{x_1, \ldots, x_n}{U}{P}}
			\hspace{2em}
			\ruleTNew \; \dfrac{c \in \names \quad \Sigma \vdash P}{\Sigma \vdash \NewCQS{c}{P}}
			\vspace{0.5em}\\
			\ruleTMsure \; \dfrac{v \in \binaries \quad x_1, \ldots x_n \in \var \cap \Sigma \quad \Sigma \vdash P}{\Sigma \vdash \MeasCQS{v}{x_1, \ldots, x_n}{P}}
			\hspace{2em}
			\ruleTQbit \; \dfrac{x \in \var \setminus \Sigma \quad \Sigma \cup \Set{x} \vdash P}{\Sigma \vdash \QubitCQS{x}{P}}
			\vspace{0.5em}\\
			\ruleTCond \; \dfrac{\left( bv \in \binaries \vee {\vdash \At{bv}{\binariesType}} \right) \quad \left( bv' \in \binaries \vee {\vdash \At{bv'}{\binariesType}} \right) \quad \Sigma \vdash P}{\Sigma \vdash \CondCQS{bv}{bv'}{P}}
	\end{array}\end{displaymath}
	\caption{Typing Rules for \CQS}
	\label{fig:typingRulesCQS}
\end{figure}

Type \emph{judgements} for processes are of the form $ \Sigma \vdash P $, where $ \Sigma $ is a set of qubit names and $ P \in \procCQS $.
The set $ \Sigma $ is supposed to contain all free qubit names in the process as we show in Lemma~\ref{lem:typingFreeQubitsCQS}.
A type judgement $ \Sigma \vdash P $ holds if it can be derived from the rules in Figure~\ref{fig:typingRulesCQS}.
These rules are inspired by \cite{GayNagarajan05}.
By Rule~\ruleTPar parallel processes do not use the same qubits, since they can be typed \wrt to distinct sets $ \Sigma_1 $ and $ \Sigma_2 $.
Rule~\ruleTIn checks that the variable used in inputs is from $ \var $ but not yet known to the continuation $ P $, \ie not in $ \Sigma $.
Conversely, \ruleTOut ensures that the transmitted qubit $ x $ in outputs was known before, \ie in $ x \in \var \cup \Sigma $, but is no longer available to the continuation $ P $ after sending it away.
To ensure the latter, $ P $ is checked against $ \Sigma \setminus \Set{x} $.
Rule~\ruleTQbit checks whether the new qubit $ x $ was not known before by $ x \in \var \setminus \Sigma $ and then adds $ x $ to $ \Sigma $ for the analyse of the remaining process.
The remaining rules are self-explanatory.

We show three properties of the type system.
Since the focus of this paper is on encodability criteria and not type systems of process calculi, the proofs of these properties can be found in the Appendix~\ref{sec:typeSystemCQS}.
First we capture the intuition behind $ \Sigma $, as capturing at least all free qubit names of a process.

\begin{lem}[Free Qubits]
	If $ \Sigma \vdash P $ then $ \FreeQubits{P} \subseteq \Sigma $.
	\label{lem:typingFreeQubitsCQS}
\end{lem}

Then we have the standard preservation property.

\begin{lem}[Preservation]
	If $ \Sigma \vdash P $ and $ \ConfigCQS{\sigma}{\phi}{P} \step \boxplus_{0 \leq i < 2^r} p_i \bullet \ConfigCQS{\sigma_i'}{\phi'}{P_i} $ or if $ \Sigma \vdash P_k' $ for all $ 0 \leq k < 2^t $ and $ \boxplus_{0 \leq k < 2^t} p_k' \bullet \ConfigCQS{\sigma}{\phi}{P_k'} \step \boxplus_{0 \leq i < 2^r} p_i \bullet \ConfigCQS{\sigma_i'}{\phi'}{P_i} $ then there is some $ \Sigma' \in \Set{\Sigma, \Sigma \cup \Set{q_n} } $ for some fresh $ q_n $ such that $ \Sigma' \vdash P_i $ for all $ 0 \leq i < 2^{r} $.
	\label{lem:preservationCQS}
\end{lem}

Finally, Lemma~\ref{lem:typingFreeQubitsCQS} ensures the no-cloning principle for well-typed \CQS-terms, since their parallel components cannot have access to the same qubit.
With Lemma~\ref{lem:preservationCQS} the principle is then also preserved in all derivatives.

\begin{lem}[Unique Ownership of Qubits]
	If $ \Sigma \vdash P \mid Q $ then $ \FreeQubits{P} \cap \FreeQubits{Q} = \emptyset $.
	\label{lem:typingCQSUniqueOwnership}
\end{lem}

Note that Lemma~\ref{lem:typingCQSUniqueOwnership} is an adaptation of the Theorem~2 in \cite{GayNagarajan05}---that there ensures the no cloning principle---to the present simpler type system.

As an example in \CQS we consider an implementation of the quantum teleportation protocol \cite{bennettBassardCrepeauJoszsaPeresWootters93}. The quantum teleportation protocol is a procedure for transmitting a quantum state via a non-quantum medium. This protocol is particularly important: not only it is a fundamental component of several more complex protocols, but it is likely to be a key enabling technology for the development of the quantum repeaters \cite{RiedmattenMarcikicTittelZbindenCollinsGisin04} which will be necessary in large-scale quantum communication networks.
The following example is an adaptation of the quantum teleportation example in Figure~3 of \cite{GayNagarajan05} adapted to \CQS.
Note that the original quantum teleportation protocol in \cite{bennettBassardCrepeauJoszsaPeresWootters93,GayNagarajan05}
does not require to transmit qubits but only two bits of classical information obtained from measuring qubits.
Since we stripped \CQS from the ability to transmit classical information, we have to cheat in the following example.
After measuring the relevant qubits, the qubits themselves and not the result of their measurement is transmitted.
However, since measurement transfers the respective qubits into base states, the respective communication does not carry any additional information than the result of measurement.
Of course the relevance of quantum teleportation steams from the fact that the original protocol does not need to transfer qubits.

\begin{exa}[Quantum Teleportation]
	\label{exa:teleportationSource}
	Consider the \CQS-configuration $ S $
	\begin{align*}
		S &= \ConfigCQS{q_0, q_1, q_2 = \frac{1}{\sqrt{2}}\Ket{100} + \frac{1}{\sqrt{2}}\Ket{111}}{\emptyset}{\mathit{System}\Tuple{q_0, q_1, q_2}}
	\end{align*}
	where
	\begin{align*}
		\mathit{System}\Tuple{q_0, q_1, q_2} ={}& \NewCQS{c}{\left( \mathit{Alice}\Tuple{q_0, q_1} \mid \mathit{Bob}\Tuple{q_2} \right)}\\
		\mathit{Alice}\Tuple{q_0, q_1} ={}& \UnitaryCQS{q_0, q_1}{\opCNot}{}\UnitaryCQS{q_0}{\opH}{}\MeasCQS{v_0}{q_0, q_1}{}\OutCQS{c}{q_0}{}\OutCQS{c}{q_1}{}\nilCQS\\
		\mathit{Bob}\Tuple{q_2} ={}& \InpCQS{c}{x_0}{}\InpCQS{c}{x_1}{}\MeasCQS{v}{x_0, x_1}{}\big( \CondCQS{v}{00}{\success}\\
		& {} \mid \CondCQS{v}{01}{\UnitaryCQS{q_2}{\opX}{\success}} \mid \CondCQS{v}{10}{\UnitaryCQS{q_2}{\opZ}{\success}}\\
		& {} \mid \CondCQS{v}{11}{\UnitaryCQS{q_2}{\opY}{\success}} \big)
	\end{align*}
	Alice and Bob each possess one qubit ($q_1$ for Alice and $q_2$ for Bob) of an entangled pair in state $\frac{1}{\sqrt{2}} \Ket{00} + \frac{1}{\sqrt{2}}\Ket{11}$. $q_0$ is the second qubit owned by Alice. Within this example it is in state $\Ket{1}$, but in general it can be in an arbitrary state. It is the qubit whose state will be teleported to $q_2$ and therefore to Bob.

	By Figure~\ref{fig:semanticsCQS}, $ S $ can do the following steps
	\allowdisplaybreaks
	\begin{align*}
		S \step{}& \ConfigCQS{\Ket{\psi_0}}{c}{\mathit{Alice}\Tuple{q_0, q_1} \mid \mathit{Bob}\Tuple{q_2}}\\
		\step{}& \ConfigCQS{\Ket{\psi_1}}{c}{\UnitaryCQS{q_1}{\opH}{}\MeasCQS{v_0}{q_0, q_1}{}\OutCQS{c}{q_0}{}\OutCQS{c}{q_1}{}\nilCQS \mid \mathit{Bob}\Tuple{q_2}}\\
		\step{}& \ConfigCQS{\Ket{\psi_2}}{c}{\MeasCQS{v_0}{q_0, q_1}{}\OutCQS{c}{q_0}{}\OutCQS{c}{q_1}{}\nilCQS \mid \mathit{Bob}\Tuple{q_2}}\\
		\step{}& \frac{1}{4} \bullet \ConfigCQS{q_0, q_1, q_2, = \Ket{001}}{c}{\OutCQS{c}{q_0}{}\OutCQS{c}{q_1}{}\nilCQS \mid \mathit{Bob}\Tuple{q_2}} \boxplus{}\\
		& \frac{1}{4} \bullet \ConfigCQS{q_0, q_1, q_2, = \Ket{010}}{c}{\OutCQS{c}{q_0}{}\OutCQS{c}{q_1}{}\nilCQS \mid \mathit{Bob}\Tuple{q_2}} \boxplus{}\\
		& \frac{1}{4} \bullet \ConfigCQS{q_0, q_1, q_2, = \Ket{101}}{c}{\OutCQS{c}{q_0}{}\OutCQS{c}{q_1}{}\nilCQS \mid \mathit{Bob}\Tuple{q_2}} \boxplus{}\\
		& \frac{1}{4} \bullet \ConfigCQS{q_0, q_1, q_2, = \Ket{110}}{c}{\OutCQS{c}{q_0}{}\OutCQS{c}{q_1}{}\nilCQS \mid \mathit{Bob}\Tuple{q_2}} = S^*
	\end{align*}
	with $ \Ket{\psi_0} = q_0, q_1, q_2 = \frac{1}{\sqrt{2}}\Ket{100} + \frac{1}{\sqrt{2}}\Ket{111} $, $ \Ket{\psi_1} = q_0, q_1, q_2 = \frac{1}{\sqrt{2}}\Ket{110} + \frac{1}{\sqrt{2}}\Ket{101} $, and $ \Ket{\psi_2} = q_0, q_1, q_2 = \frac{1}{2}\Ket{001} + \frac{1}{2}\Ket{010} - \frac{1}{2}\Ket{101} - \frac{1}{2}\Ket{110} $.

	All configurations within the probability distribution in $S^*$ have the same probability.
	We can \eg choose the first one by using Rule~\ruleRProbCQS with $ \Ket{\psi_3} = q_0, q_1, q_2 = \Ket{001} $.
	\begin{align*}
		S^{*} &\step \ConfigCQS{\Ket{\psi_3}}{c}{\OutCQS{c}{q_0}{}\OutCQS{c}{q_1}{}\nilCQS \mid \mathit{Bob}\Tuple{q_2}}\\
		&\step \ConfigCQS{\Ket{\psi_3}}{c}{\begin{array}{l}
			\OutCQS{c}{q_1}{}\nilCQS \mid \InpCQS{c}{x_1}{}\MeasCQS{v}{q_0, x_1}{}\big( \CondCQS{v}{00}{\success}\\
			\mid \CondCQS{v}{01}{\UnitaryCQS{q_2}{\opX}{\success}} \mid \CondCQS{v}{10}{\UnitaryCQS{q_2}{\opZ}{\success}}\\
			\mid \CondCQS{v}{11}{\UnitaryCQS{q_2}{\opY}{\success}} \big)
		\end{array}}\\
		&\step \ConfigCQS{\Ket{\psi_3}}{c}{\begin{array}{l}
			\MeasCQS{v}{q_0, q_1}{}\big( \CondCQS{v}{00}{\success} \mid \CondCQS{v}{01}{\UnitaryCQS{q_2}{\opX}{\success}}\\
			\mid \CondCQS{v}{10}{\UnitaryCQS{q_2}{\opZ}{\success}} \mid \CondCQS{v}{11}{\UnitaryCQS{q_2}{\opY}{\success}} \big)
		\end{array}}\\
		&\step \ConfigCQS{\Ket{\psi_3}}{c}{\begin{array}{l}
			\CondCQS{00}{00}{\success} \mid \CondCQS{00}{01}{\UnitaryCQS{q_2}{\opX}{\success}}\\
			\mid \CondCQS{00}{10}{\UnitaryCQS{q_2}{\opZ}{\success}} \mid \CondCQS{00}{11}{\UnitaryCQS{q_2}{\opY}{\success}}
		\end{array}}\\
		&\step \ConfigCQS{\Ket{\psi_3}}{c}{\begin{array}{l}
			\success \mid \CondCQS{00}{01}{\UnitaryCQS{q_2}{\opX}{\success}}\\
			\mid \CondCQS{00}{10}{\UnitaryCQS{q_2}{\opZ}{\success}} \mid \CondCQS{00}{11}{\UnitaryCQS{q_2}{\opY}{\success}}
		\end{array}} \tag*{\qed}
	\end{align*}
\end{exa}


\subsection{A Calculus for Open Quantum Systems}
\label{sec:OQS}

The algebra of quantum processes (\qCCS) is first introduced in \cite{FengDuanJiYing07} and further investigated \eg in \cite{YingFengDuanJi09,FengDuanYing12,YasudaKubotaKakutani14} as a process calculus for quantum based systems and to study observational equivalences in the quantum setting or in \cite{KubotaKakutaniKatoKawanoSakurada12,KubotaKakutaniKatoKawanoSakurada13} to study quantum crypto protocols.
As \qCCS is designed to model open systems, its states are described by density matrices or operators.
We are mainly interested in the variant of \qCCS presented in \cite{YingFengDuanJi09}, because it has the rare feature of introducing a quantum based calculus without a probabilistic transition system.
Indeed earlier as well as later variants of \qCCS \eg in \cite{FengDuanJiYing07,FengDuanYing12} use probabilistic transition systems.
The main reason for probabilistic transition systems in most quantum based systems is measurement, since its outcome is often a probability distribution.
In \cite{YingFengDuanJi09} measurement can be performed by a super-operator and the resulting probability distribution on potentially different measurement results is captured in the density matrix that represents the state after measurement.
Since they refrain from providing a measurement-operator, they can introduce a non-probabilistic transition system.
Unfortunately, without a separate operator for measurement there is no way in \cite{YingFengDuanJi09} to directly get the results of measurement; although the resulting alteration of the state does of course influence the further behaviour.
Remember that the state of a qubit cannot be read but only measured, so it is not possible to extract this information directly from the state after measurement.
Because of that, we add for \OQS an additional operator, a conditional to compare binary numbers and the outcome of measurement, to the syntax of \qCCS as presented in \cite{FengDuanJiYing07}.

\begin{defi}[\OQS]
	The \OQS \emph{terms}, denoted by $ \procOQS $, are given by:
	\begin{align*}
		P & \deffTerms \RecOQS{A}{\tilde{x}} \sep \nilOQS \sep \tau.P \sep \opE\List{X}.P \sep \InpOQS{c}{x}{P} \sep \OutOQS{c}{x}{P}\\
		& \sep P + P \sep P \parOQS P \sep \ResOQS{P}{L} \sep \CondOQS{bv}{e}{P}
	\end{align*}
	where
	\begin{align*}
		e \deffTerms bv \sep \opM\List{X}
	\end{align*}
	The \OQS \emph{configurations} $ \configOQS $ are given by $ \ConfigOQS{P}{\rho} $, where $ P \in \procOQS $ and $ \rho \in \partDen $.
\end{defi}

Process constants $ \RecOQS{A}{\tilde{x}} $, where $ \tilde{x} = x_1, \ldots, x_n $ is a sequence of pairwise distinct quantum variables, allow recursive definitions of terms.
An inactive process is denoted by $ \nilOQS $ and the term $ \tau.P $ executes the silent action and proceeds as $ P $.
The application of a super-operator $ \opE $ on the qubits in the finite set $ X \subseteq \var $ is performed by the term $ \opE\List{X}.P $.
The terms $ \InpOQS{c}{x}{P} $ and $ \OutOQS{c}{x}{P} $ model input and output on channel $ c \in \names $ to transfer a single qubit $ x \in \var $.
Choice and parallel composition are obtained from CCS and given by $ P + P $ and $ P \parOQS P $.
The term $ \ResOQS{P}{L} $ restricts the scope of all channels within $ L \subseteq \names $ to $ P $.
Finally, the conditional $ \CondOQS{bv}{e}{P} $ continues as $ P $ if either $ bv $ and $ e $ are the same binary number or $ bv $ is the binary number that results from measuring \wrt the standard basis the finite set of qubits $ X \subseteq \var $.
We use $ \opM $ to denote the super-operator for measurement in the standard base.

By slightly abusing notation, we use $ \var $ to also denote the current set of qubit names of a given density matrix $ \rho $.
The variable $ x $ is bound in $ P $ by $ \InpOQS{c}{x}{P} $ and the channels in $ L $ are bound in $ P $ by $ \ResOQS{P}{L} $.
A variable/channel is free if it is not bound.
Let $ \FreeChan{P} $ and $ \FreeQubits{P} $ denote the sets of free channels and free qubits in $ P $, respectively.
For each process constant scheme $ A $, a defining equation $ \RecOQS{A}{\tilde{x}} \stackrel{def}{=} P $ with $ P \in \procOQS $ and $ \FreeQubits{P} \subseteq \tilde{x} $ is assumed.
As done in \cite{YingFengDuanJi09}, we require the following two conditions:
\begin{align*}
	\OutOQS{c}{x}{P} \in \procOQS &\text{ implies } x \notin \FreeQubits{P} & \tag{Cond1}\label{condA}\\
	P \parOQS Q \in \procOQS &\text{ implies } \FreeQubits{P} \cap \FreeQubits{Q} = \emptyset & \tag{Cond2}\label{condB}
\end{align*}
These conditions ensure the no-cloning principle of qubits within \qCCS and \OQS.

The semantics of \OQS is defined by the inference rules in Figure~\ref{fig:semanticsOQS}.
We start with a labelled variant of the semantics from \cite{YingFengDuanJi09} for \qCCS, add the Rule~\ruleCondOQS for the new conditional, and then add the Rule~\ruleRedOQS to obtain a reduction semantics.
We omit the symmetric forms of the rules \ruleChoiceOQS, \ruleIntlOQS, and \ruleCommOQS.
Let $ \ChanNames{\alpha} $ return the possibly empty set of channels in the label $ \alpha $.

\begin{figure}[t]
	\centering
	\begin{displaymath}\begin{array}{c}
		\ruleTauOQS \; \ConfigOQS{\tau.P}{\rho} \LabelledStep{\tau} \ConfigOQS{P}{\rho}
		\hspace{1.5em}
		\ruleInputOQS \; \ConfigOQS{\InpOQS{c}{x}{P}}{\rho} \LabelledStep{\InpLabelOQS{c}{y}} \ConfigOQS{P\Set{\Subst{y}{x}}}{\rho} \quad y \notin \FreeQubits{\InpOQS{c}{x}{P}}
		\vspace{0.25em}\\
		\ruleOutputOQS \; \ConfigOQS{\OutOQS{c}{x}{P}}{\rho} \LabelledStep{\OutLabelOQS{c}{x}} \ConfigOQS{P}{\rho}
		\hspace{2em}
		\ruleOperOQS \; \ConfigOQS{\opE\List{X}.P}{\rho} \LabelledStep{\tau} \ConfigOQS{P}{\opE_X\Tuple{\rho}}
		\vspace{0.25em}\\
		\ruleChoiceOQS \; \dfrac{\ConfigOQS{P}{\rho} \LabelledStep{\alpha} \ConfigOQS{P'}{\rho'}}{\ConfigOQS{P + Q}{\rho} \LabelledStep{\alpha} \ConfigOQS{P'}{\rho'}}
		\hspace{1.5em}
		\ruleDefOQS \; \dfrac{\ConfigOQS{P\Set{\Subst{\tilde{y}}{\tilde{x}}}}{\rho} \LabelledStep{\alpha} \ConfigOQS{P'}{\rho'}}{\ConfigOQS{\RecOQS{A}{\tilde{y}}}{\rho} \LabelledStep{\alpha} \ConfigOQS{P'}{\rho'}} \quad \RecOQS{A}{\tilde{x}} \defeq P
		\vspace{0.25em}\\
		\ruleResOQS \; \dfrac{\ConfigOQS{P}{\rho} \LabelledStep{\alpha} \ConfigOQS{P'}{\rho'}}{\ConfigOQS{\ResOQS{P}{L}}{\rho} \LabelledStep{\alpha} \ConfigOQS{\ResOQS{P'}{L}}{\rho'}} \quad \ChanNames{\alpha} \cap L = \emptyset
		\vspace{0.25em}\\
		\ruleIntlOQS \; \dfrac{\ConfigOQS{P}{\rho} \LabelledStep{\alpha} \ConfigOQS{P'}{\rho'}}{\ConfigOQS{P \parOQS Q}{\rho} \LabelledStep{\alpha} \ConfigOQS{P' \parOQS Q}{\rho'}} \quad \text{if } \alpha = \InpLabelOQS{c}{x} \text{ then } x \notin \FreeQubits{Q}
		\vspace{0.25em}\\
		\ruleCommOQS \; \dfrac{\ConfigOQS{P}{\rho} \LabelledStep{\InpLabelOQS{c}{x}} \ConfigOQS{P'}{\rho} \quad \ConfigOQS{Q}{\rho} \LabelledStep{\OutLabelOQS{c}{x}} \ConfigOQS{Q'}{\rho}}{\ConfigOQS{P \parOQS Q}{\rho} \LabelledStep{\tau} \ConfigOQS{P' \parOQS Q'}{\rho}}
		\hspace{2em}
		\ruleRedOQS \; \dfrac{\ConfigOQS{P}{\rho} \LabelledStep{\tau} \ConfigOQS{P'}{\rho'}}{\ConfigOQS{P}{\rho} \step \ConfigOQS{P'}{\rho'}}
		\vspace{0.25em}\\
		\ruleCondOQS \; \dfrac{\begin{array}{c} \ConfigOQS{P}{\rho} \LabelledStep{\alpha} \ConfigOQS{P'}{\rho'}\\ \left( e = b' \wedge b = b' \wedge \rho' = \rho \right) \vee \left( e = \opM\List{X} \wedge b \in \opM\List{X}\Tuple{\rho} \wedge \rho' = \opM\List{X}\Tuple{\rho} \right) \end{array}}{\ConfigOQS{\CondOQS{b}{e}{P}}{\rho} \LabelledStep{\alpha} \ConfigOQS{P'}{\rho'}}
	\end{array}\end{displaymath}
	\caption{Semantics of \OQS}
	\label{fig:semanticsOQS}
\end{figure}

Rule~\ruleOperOQS implements the application of a super-operator $ \opE $.
It updates the state of the configuration by applying $ \opE $.
To simplify the definition of a reduction semantics, we use (in contrast to \cite{YingFengDuanJi09}) the label $ \tau $.

Rule~\ruleInputOQS ensures that the received qubits are fresh in the continuation of the input.
The rules \ruleIntlOQS and its symmetric rule \ruleIntrOQS forbid to receive qubits within parallel contexts that do posses this qubit.
Rule \ruleResOQS allows to do a step under a restriction.
Rule~\ruleCondOQS allows a step of the continuation of a conditional if its condition is satisfied.
Therefore either $ b $ and $ e $ need to be the same binary number and then the state $ \rho $ is not updated or $ e = \opM\List{X} $ and $ b $ is one of the binary numbers that results from measuring in the standard basis the qubits in $ X $ in the state $ \rho $ with non-zero probability.
In the latter case the state $ \rho $ has to be updated according to the measurement operation.
For instance if $ \rho $ is a 2-qubit system with the qubits $ q_0, q_1 $ in state $ \OuterProduct{0}{0} \tensorProd \OuterProduct{1}{1} $ then $ \CondOQS{01}{\opM\List{q_0, q_1}}{\tau.\nilOQS} \step \nilOQS $ but $ \CondOQS{b}{\opM\List{q_0, q_1}}{P} $ cannot reduce for any $ b \neq 01 $.
Note that to decide whether $ b \in \opM\List{X}\Set{\rho} $ the system indeed has to measure the qubits; it is not sufficient to apply any super-operator on $ \rho $ even if it has the same effect on $ \rho $ as measurement.
Since we cannot read the qubit we have to measure it, to learn anything about its state.
The other rules are self-explanatory.

Similar to \CQS, structural congruence for \OQS is the smallest congruence containing $ \alpha $-equivalence that is closed under the following rules:
\begin{displaymath}
	P \parOQS \nilOQS \equiv P \quad\quad P \parOQS Q \equiv Q \parOQS P \quad\quad P \parOQS \left( Q \parOQS R \right) \equiv \left( P \parOQS Q \right) \parOQS R
\end{displaymath}
Moreover, $ \ConfigOQS{P}{\rho} \equiv \ConfigOQS{Q}{\rho} $ if $ P \equiv Q $ or if $ \ConfigOQS{Q}{\rho} $ is obtained from $ \ConfigOQS{P}{\rho} $ by alpha conversion on the qubit names in $ \var $.


\section{Encodings and Quality Criteria}
\label{sec:criteria}

Let $ \sourceLang = \langle \sourceConfig, \longmapsto_\source \rangle $ and $ \targetLang = \langle \targetConfig, \longmapsto_\target \rangle $ be two process calculi, denoted as \emph{source} and \emph{target} language.
An \emph{encoding} from $ \sourceLang $ into $ \targetLang $ is a function $ \Enc{\cdot} : \sourceConfig \to \targetConfig $.
We often use $ S, S', \ldots $ and $ T, T', \ldots $ to range over $\sourceConfig$ and $\targetConfig$, respectively.

To analyse the quality of encodings and to rule out trivial or meaningless encodings, they are augmented with a set of quality criteria. In order to provide a general framework, Gorla in \cite{gorla10} suggests five criteria well suited for language comparison. They are divided into two structural and three semantic criteria. The structural criteria include
\begin{enumerate*}[(1)]
	\item \emph{compositionality} and
	\item \emph{name invariance}. The semantic criteria are
	\item \emph{operational correspondence},
	\item \emph{divergence reflection}, and
	\item \emph{success sensitiveness}.
\end{enumerate*}
We start with these criteria for classical systems.

Note that a behavioural relation $ \preceq $ on the target is assumed for operational correspondence.
Moreover, $ \preceq $ needs to be success sensitive, \ie $ T_1 \preceq T_2 $ implies $ \ReachBarb{T_1}{\success} $ iff $ \ReachBarb{T_2}{\success} $.
As discussed in \cite{petersGlabbeek15}, we pair operational correspondence as of \cite{gorla10} with correspondence simulation.

\begin{defi}[Correspondence Simulation, \cite{petersGlabbeek15}]
	\label{def:correspondenceSimulation}
	A relation $ \mathcal{R} $ is a \emph{(weak) labelled correspondence simulation} if for each $ \left( T_1, T_2 \right) \in \mathcal{R} $:
	\begin{compactitem}
		\item For all $ T_1 \LabelledStep{\alpha} T_1' $, there exists $ T_2' $ such that $ T_2 \LabelledStep{\alpha} T_2' $ and $ \left( T_1', T_2' \right) \in \mathcal{R} $.
		\item For all $ T_2 \LabelledStep{\alpha} T_2' $, there exists $ T_1'', T_2'' $ such that $ T_1 \steps\LabelledStep{\alpha} T_1'' $, $ T_2' \steps T_2'' $, and $ \left( T_1'', T_2'' \right) \in \mathcal{R} $.
		\item $ \ReachBarb{T_1}{\success} $ iff $ \ReachBarb{T_2}{\success} $.
	\end{compactitem}
	$ T_1 $ and $ T_2 $ are \emph{correspondence similar}, denoted as $ T_1 \preceq T_2 $, if a correspondence simulation relates them.
\end{defi}

Intuitively, an encoding is compositional if the translation of an operator is the same for all occurrences of that operator in a term. Hence, the translation of that operator can be captured by a context that is allowed in \cite{gorla10} to be parametrised on the free names of the respective source configuration.

\begin{defi}[Compositionality, \cite{gorla10}]
	\label{def:compositionality}
	The encoding $ \Enc{\cdot} $ is \emph{compositional} if, for every operator $ \mathbf{op} $ with arity $ n $ of $ \sourceLang $ and for every subset of names $ N $, there exists a context $ \Context{N}{\mathbf{op}}{\hole_1, \ldots, \hole_n} $ such that, for all $ S_1, \ldots, S_n $ with $ \FreeChan{S_1} \cup \ldots \cup \FreeChan{S_n} = N $, it holds that $ \Enc{\mathbf{op}\left( S_1, \ldots, S_n \right)} = \Context{N}{\mathbf{op}}{\Enc{S_1}, \ldots, \Enc{S_n}} $.
\end{defi}

Name invariance ensures that encodings are independent of specific variables in the source.
In \cite{gorla10} name invariance is defined modulo a so-called renaming policy.
Since our encoding in Section~\ref{sec:encoding} translates variables to themselves and name invariance is not relevant for the separation result in Section~\ref{sec:separation}, we do not need a renaming policy.
This simplifies the definition of name invariance such that an encoding is name invariant if it preserves and reflects substitutions.

\begin{defi}[Name Invariance]
	\label{def:nameInvariance}
	The encoding $ \Enc{\cdot} $ is \emph{name invariant} if, for every $ S \in \sourceConfig $ and every substitution $ \gamma $ on names, it holds that $ \Enc{S\gamma} = \Enc{S}\gamma $.
\end{defi}

The first semantic criterion is operational correspondence. It consists of a soundness and a completeness condition. \emph{Completeness} requires that every computation of a source term can be emulated by its translation. \emph{Soundness} requires that every computation of a target term corresponds to some computation of the corresponding source term.

\begin{defi}[Operational Correspondence, \cite{gorla10}]
	\label{def:operationalCorrespondence}
	An encoding $ \Enc{\cdot} $ is \emph{operationally corresponding} \wrt $ \preceq $ if it is:\\
	\begin{tabular}{l l}
		Complete: & For all $ S \steps S' $, there exists $ T $ such that $ \Enc{S} \steps T $ and $ \Enc{S'} \preceq T $.\\
		Sound: & For all $ \Enc{S} \steps T $, there exists $ S', T' $ such that $ S \steps S' $, $ T \steps T' $,\\
		& and $ \Enc{S'} \preceq T' $.
	\end{tabular}
\end{defi}

The next criterion concerns the role of infinite computations.

\begin{defi}[Divergence Reflection, \cite{gorla10}]
	An encoding $ \Enc{\cdot} $ \emph{reflects divergence} if, for every $ S $, $ \Enc{S} \infiniteSteps $ implies $ S \infiniteSteps $.
\end{defi}

The last criterion links the behaviour of source terms to the behaviour of their encodings.
Success sensitiveness requires that source configurations reach success if and only if their literal translations do.

\begin{defi}[Success Sensitiveness, \cite{gorla10}]
	$ \Enc{\cdot} $ is \emph{success sensitive} if, for every $ S $, $ \ReachBarb{S}{\success} $ iff $ \ReachBarb{\Enc{S}}{\success} $.
\end{defi}

Moreover, $ \preceq $ needs to be success sensitive, \ie $ T_1 \preceq T_2 $ implies $ \ReachBarb{T_1}{\success} $ iff $ \ReachBarb{T_2}{\success} $, as required by Definition~\ref{def:correspondenceSimulation}.
Without this requirement the relation that is induced---as described in  \cite{petersGlabbeek15,peters19}---by operational correspondence between the source and target is trivial without some notion of barbs.
To some up, we use the following notion of \emph{good} encoding, where good refers to classical criteria only.

\begin{defi}[Classical Criteria]
	\label{def:good}
	The encoding $ \Enc{\cdot} $ is good, if it is compositional, name invariant, operational corresponding \wrt $ \preceq $, divergence reflecting, and success sensitive, where $ \preceq $ is success sensitive.
\end{defi}

There are several other criteria for classical systems that we could have considered (cf.~\cite{peters19}).
Since \CQS is a typed language, we may consider a criterion for types as discussed \eg in \cite{KouzapasPerezYoshida16}. As only one language is typed, it suffices to require that the encoding is defined for all terms of the source language.
We could also consider a criterion for the preservation of distributability as discussed \eg in \cite{petersNestmannGoltz13}, since distribution and communication between distributed locations is of interest.
Indeed our encoding satisfies this criterion, because it translates the parallel operator homomorphically.
However, already the basic framework of Gorla, on that we rely here, suffices to observe principal design principles of quantum based systems as we discuss with the no-cloning property in Section~\ref{sec:criteriaQBS}.


\section{Encoding Quantum Based Systems}
\label{sec:encoding}

Our encoding, from well-typed \CQS-configurations into \OQS-configurations that satisfy the conditions~\ref{condA} and \ref{condB}, is given by Definition~\ref{def:encoding}.

\begin{defi}[Encoding $ \Enc{\cdot} $ from \CQS into \OQS]
	\begin{align*}
		\Enc{\ConfigCQS{\sigma}{\phi}{P}} &= \ConfigOQS{\ResOQS{\Enc{P}}{\phi}}{\rho_{\sigma}}\\
		\Enc{\boxplus_{0 \leq i < 2^r} p_i \bullet \ConfigCQS{\sigma_i}{\phi}{P\Set{\Subst{\Binary{i}}{v}}}} &= \ConfigOQS{\ResOQS{{\EncDist{q_0, \ldots, q_{r - 1}}{v}{\Enc{P}}}}{\phi}}{\rho_{\boxplus}}\\
		\Enc{\nilCQS} &= \nilOQS\\
		\Enc{P \mid Q} &= \Enc{P} \parOQS \Enc{Q}\\
		\Enc{\InpCQS{c}{x}{P}} &= \InpOQS{c}{x}{\Enc{P}}\\
		\Enc{\OutCQS{c}{q}{P}} &= \OutOQS{c}{q}{\Enc{P}}\\
		\Enc{\UnitaryCQS{\tilde{q}}{U}{P}} &= U\List{\tilde{q}}.\Enc{P}\\
		\Enc{\MeasCQS{v}{\tilde{q}}{P}} &= \opM\List{\tilde{q}}.\EncDist{\tilde{q}}{v}{\Enc{P}}\\
		\Enc{\NewCQS{c}{P}} &= \tau.{\left( \ResOQS{\Enc{P}}{\Set{c}} \right)}\\
		\Enc{\QubitCQS{x}{P}} &= \opE_{\Ket{0}}\List{\var}.{\left( \Enc{P}\Set{\Subst{q_{\Length{\var}}}{x}} \right)}\\
		\Enc{\CondCQS{bv}{bv'}{P}} &= \CondOQS{bv}{bv'}{\tau.\Enc{P}}\\
		\Enc{\success} &= \success
	\end{align*}
	where $ \rho_{\sigma} = \OuterProduct{\psi}{\psi} $ for $ \sigma = \Ket{\psi} $, $ \rho_{\boxplus} = \sum_i p_i\OuterProduct{\psi_i}{\psi_i} $ for $ \sigma_i = \Ket{\psi_i} $,
	\begin{align*}
		\EncDist{\tilde{q}}{v}{Q} =
		\begin{cases}
			Q & \text{, if } \tilde{q} \text{ is empty}\\
			\!\!\begin{array}[t]{l}
				\CondOQS{0..0}{\opM\List{\tilde{q}}}{\tau.Q\Set{\Subst{0..0}{v}}} + \ldots +{}\\
				\CondOQS{\Binary{2^{\Length{\tilde{q}} - 1}}}{\opM\List{\tilde{q}}}{\tau.Q\Set{\Subst{\Binary{2^{\Length{\tilde{q}} - 1}}}{v}}}
			\end{array} & \text{, otherwise}
		\end{cases},
	\end{align*}
	$ \opE_{\Ket{0}}\List{\var} $ adds a new qubit $ q_{\Length{\var}} $ initialised with $ 0 $ to the current state $ \rho $.
	\label{def:encoding}
\end{defi}

The translation of configurations maps the vector $ \sigma $ to the density matrix $ \rho_{\sigma} $ (obtained by the outer product) and restricts all names in $\phi$ to the translation of the sub-term.
In the translation of probability distributions, the state $ \rho_{\boxplus} $ is the sum of the density matrices obtained from the $ \sigma_i $ multiplied with their respective probability.
Again, the names in $ \phi $ are restricted in the translation.
The nondeterminism in choosing one of the possible branches of the probability distribution in \CQS by \ruleRProbCQS is translated into the \OQS-choice $ \EncDist{\tilde{q}}{v}{\Enc{P}} $ with $ \tilde{q} = q_0, \ldots, q_{r - 1} $, where each case is guarded by a conditional to compare to a possible outcome of measurement followed by the continuation with a substitution to hand the result of measurement to the process.
Note that, the translation of a configuration $ \ConfigCQS{\sigma}{\phi}{P} $ is a special case of the second line.
A practical motivated encoding example using such a \OQS-choice is given in Example~\ref{exa:teleportation}.

The application of unitary transformations and the creation of new qubits are translated to the corresponding super-operators.
Measurement is translated into the super-operator for measurement followed by the choice $ \EncDist{\tilde{q}}{v}{\Enc{P}} $ over the branches of the possible outcomes of measurement, \ie after the first measurement the translation is similar to the translation of a probability distribution in the second case.
Note that we measure twice in this translation.
The outer measurement---that is a super-operator for measurement---dissolves entanglement on the measured qubits and ensures that the density matrix after this first measurement is the sum of the density matrices of the respective cases in the distribution (compare with $ \rho_{\boxplus} $ and Example~\ref{exa:measurment}).
The measurements within $ \EncDist{\tilde{q}}{v}{\Enc{P}} $---that are not performed by a super-operator but require to indeed physically measure the qubits---then check whether the respective case $ i $ occurs with non-zero probability and adjust the density matrix to this result of measurement if case $ i $ is picked.
The creation of new channel names is translated to restriction, where a $\tau$-guard simulates the step that is necessary in \CQS to create a new channel.
The restriction ensures that this new name cannot be confused with any other translated source term name.
Since in the derivative of a source term step creating a new channel the new channel is added to $ \phi $ in the configuration, we restrict all channels in $ \phi $.
A condition in \CQS is translated to a conditional in \OQS.
We add a $ \tau $ to guard the continuation of the conditional in the target, since resolving a conditional in \CQS (in contrast to \OQS) requires a step.
The remaining translations are homomorphic.

\begin{exa}
	\label{exa:measurment}
	Consider the \CQS-configuration $ S = \ConfigCQS{\sigma}{\phi}{\MeasCQS{v}{q_0}{P}} $, where $ \sigma = q_0, q_1 = \frac{1}{\sqrt{2}}\Ket{00} + \frac{1}{\sqrt{2}}\Ket{11} = \Ket{\psi} $ consists of two entangled qubits.
	By Rule~\ruleRMeasureCQS in Figure~\ref{fig:semanticsCQS}, $ S \step S' = \frac{1}{2} \bullet \ConfigCQS{\sigma = q_0, q_1 = \Ket{00}}{\phi}{P\Set{\Subst{00}{v}}} \boxplus \frac{1}{2} \bullet \ConfigCQS{\sigma = q_0, q_1 = \Ket{11}}{\phi}{P\Set{\Subst{11}{x}}} $, where we omitted branches with probability zero.

	By Definition~\ref{def:encoding}, $ \Enc{S} = \ConfigOQS{\ResOQS{{\left( \opM\List{q_0}.\EncDist{q_0}{v}{\Enc{P}} \right)}}{\phi}}{\rho} $ with $ \rho = \OuterProduct{\psi}{\psi} $.
	By the rules \ruleOperOQS and \ruleRedOQS in Figure~\ref{fig:semanticsOQS}, then $ \Enc{S} \step T = \ConfigOQS{\ResOQS{\EncDist{q_0}{x}{\Enc{P}}}{\phi}}{\StateTrans{\opM}{q_0}{\rho}} $.
	Accordingly, the probability distribution in $ S' $ is mapped on a choice in $ T $.
	The outer measurement $ \opM\List{q_0} $ resolves the entanglement and yields a density matrix that is the sum of the density matrices of the choice branches, \ie $ \StateTrans{\opM}{q_0}{\rho} = \OuterProduct{00}{00}\rho\Adjoint{\OuterProduct{00}{00}} + \OuterProduct{11}{11}\rho\Adjoint{\OuterProduct{11}{11}} $.
	\qed
\end{exa}

By analysing the encoding function, we observe that for all source terms the type system of \CQS ensures that their literal translation satisfies the conditions~\ref{condA} and \ref{condB}. Hence, the encoding is defined on all source terms.

\begin{cor}
	For all $ S \in \configCQS $ the term $ \Enc{S} $ is defined.
\end{cor}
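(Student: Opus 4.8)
The plan is to read the corollary as the statement that, for every well-typed source configuration $S \in \configs{C}$, the literal translation $\cqpEnc{S}$ is a genuine \qCCS-configuration, \ie its process part lies in $\processes{\qCCS}$ (so it satisfies the no-cloning conditions~\ref{condA} and \ref{condB}) and its state component lies in $\partDen$. Since the translation of a configuration only wraps a process translation in the restriction $\qccsRestrict{\cdot}{\phi}$ and pairs it with a density matrix, and restriction affects neither free qubits nor outputs, the conditions reduce to properties of the process image $\cqpEnc{P}$ (respectively the $\EncDist{\cdots}$ construct in the distribution case).

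The first step is an auxiliary lemma stating that the encoding preserves the free qubits of a term, proved by structural induction on $P$. The homomorphic clauses are immediate. The two delicate clauses are qubit creation, where $\cqpEnc{\cqpNewQbit{x}{P}}$ consistently renames the bound variable $x$ to the concrete fresh name $q_{\length{\var}}$ via the substitution $\sub{q_{\length{\var}}}{x}$, and measurement, where the single source prefix is unfolded into the guarded choice $\EncDist{\seq{q}; x; \cqpEnc{P}}$ containing several copies of $\cqpEnc{P}$ with the integer substitutions $\sub{i}{x}$. One has to note that $x$ is bound at the $\cqpNewQbit$ node and the replacing qubit name is chosen fresh, that the substitutions $\sub{i}{x}$ act on an integer name and hence touch no qubit, and that the extra operators introduced by the encoding (conditionals, the super-operator prefixes $\qccsOp{\mathcal{E}_i}{\seq{q}}$, and $\qccsMm$) neither bind nor free any qubit beyond those already occurring in $P$. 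Hence the free-qubit sets on the two sides agree.

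The second step proves \ref{condA} and \ref{condB} by induction on the typing derivation of $P$ in the internal type system of Figure~\ref{fig:CQPInternTypesystem}, transporting the relevant invariants across the encoding by the free-qubit lemma. For \ref{condB}, whenever $\cqpPara{P}{Q}$ is typed, Rule~\textsf{(IT-Par)} forces $\Sigma_1 \cap \Sigma_2 = \emptyset$ and thus $\fq{P} \cap \fq{Q} = \emptyset$; since $\cqpEnc{\cqpPara{P}{Q}} = \qccsPara{\cqpEnc{P}}{\cqpEnc{Q}}$ and the encoding preserves free qubits, we get $\fq{\cqpEnc{P}} \cap \fq{\cqpEnc{Q}} = \emptyset$, which is exactly \ref{condB}. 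For \ref{condA}, whenever $\cont{\cqpOut{c}{q}}{P}$ is typed, Rule~\textsf{(IT-Out)} removes the transmitted qubit from the environment of the continuation, so $q \notin \fq{P}$, and by the lemma $q \notin \fq{\cqpEnc{P}}$; since the output clause is homomorphic, the image $\cont{\qccsOut{c}{q}}{\cqpEnc{P}}$ satisfies \ref{condA}. Finally, at the configuration level I would verify the state components: $\rho_\sigma = \outerProd{\psi}{\psi}$ is a pure-state density operator, and $\rho_\boxplus = \sum_i p_i \outerProd{\psi_i}{\psi_i}$ with $\sum_i p_i = 1$ is positive with $\trace{\rho_\boxplus} = \sum_i p_i = 1$, so both lie in $\partDen$.

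I expect the main obstacle to be the free-qubit preservation lemma in precisely the two non-homomorphic clauses, namely qubit creation and measurement. There one must argue carefully that the renaming $\sub{q_{\length{\var}}}{x}$ and the unfolding into the choice $\EncDist{\seq{q}; x; \cqpEnc{P}}$ introduce no fresh qubit that is simultaneously free in two parallel components or left free after its own output, so that the conditions established for subterms genuinely propagate to the whole encoded term; once this bookkeeping is settled, the inductive transport of \textsf{(IT-Par)} and \textsf{(IT-Out)} is routine.
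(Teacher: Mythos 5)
Your proposal is correct and matches the paper's approach: the paper justifies this corollary only by the one-sentence observation that the type system of \CQPM ensures the literal translations satisfy conditions~\ref{condA} and \ref{condB}, which is exactly what you establish in detail via the internal typing rules \textsf{(IT-Par)} and \textsf{(IT-Out)} together with a free-qubit preservation lemma. The only imprecision is that your lemma cannot claim literal equality of free-qubit sets at the qubit-creation clause, where the fresh name $q_{\length{\var}}$ becomes free in the translation while $x$ was bound in the source; since that name is fresh it cannot violate either condition, so the argument still goes through.
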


Before we can start to prove the quality of our encoding, \ie that it satisfies the criteria in Definition~\ref{def:good}, we have to fix a relation $ \preceq $ on the target language \OQS that is used in the definition of operational correspondence in Definition~\ref{def:operationalCorrespondence}.
We instantiate $ \preceq $ with correspondence similarity as given in Definition~\ref{def:correspondenceSimulation}.
In the literature, operational correspondence is often considered \wrt a bisimulation on the target; simply because bisimilarity is a standard behavioural equivalence in process calculi, whereas correspondence simulation is not.
For our encoding, we cannot use bisimilarity.

\begin{exa}
	\label{exa:correspondenceSimulation}
	Consider $ S = \ConfigCQS{\sigma}{c}{\MeasCQS{v}{q}{P} \mid Q} $, where $ S $ is a 1-qubit system with $ \sigma = q = \Ket{+} $ and $ P, Q \in \procCQS $ with $ \FreeChan{P} = \Set{ c } = \FreeChan{Q} $ and $ v \notin \FreeBinV{Q} $.
	By the rules~\ruleRMeasureCQS and \ruleRParCQS of Figure~\ref{fig:semanticsCQS},
	\begin{align*}
		S &\step S' = \frac{1}{2} \bullet \ConfigCQS{\sigma = q = \Ket{0}}{c}{P\Set{\Subst{0}{v}} \mid Q} \boxplus \frac{1}{2} \bullet \ConfigCQS{\sigma = q = \Ket{1}}{c}{P\Set{\Subst{1}{v}} \mid Q},
	\end{align*}
	\ie \ruleRParCQS pulls the parallel component $ Q $ into the probability distribution that results from measuring $ q $.
	Since our encoding is compositional---and indeed we require compositionality, the translation $ \Enc{S} $ behaves slightly differently.
	By Definition~\ref{def:encoding}, $ \Enc{S} = \ConfigOQS{\ResOQS{{\left( \opM\List{q}.\EncDist{q}{v}{\Enc{P}} \parOQS \Enc{Q} \right)}}{\Set{c}}}{\rho} $, where here $ \EncDist{q}{v}{\Enc{P}} = \CondOQS{0}{\opM\List{q}}{\tau.\Enc{P}\Set{\Subst{0}{v}}} + \CondOQS{1}{\opM\List{q}}{\tau.\Enc{P}\Set{\Subst{1}{v}}} $, $ \rho = \OuterProduct{+}{+} $, and $ \Enc{S'} = \ConfigOQS{\ResOQS{\EncDist{q}{v}{\Enc{P} \parOQS \Enc{Q}}}{\Set{c}}}{\rho'} $ with $ \rho' = \frac{1}{2}\OuterProduct{0}{0} + \frac{1}{2}\OuterProduct{1}{1} $.
	By Figure~\ref{fig:semanticsOQS}, $ \Enc{S} \step T = \ConfigOQS{\ResOQS{{\left( \EncDist{q}{v}{\Enc{P}} \parOQS \Enc{Q} \right)}}{\Set{c}}}{\rho'} $, because $ \StateTrans{\opM}{q}{\rho} = \rho' $.
	Unfortunately, $ \Enc{S'} $ and $ T $ are not bisimilar.
	As a counterexample consider $ P = \OutCQS{c}{q}{\nilCQS} $ and $ Q = \NewCQS{c'}{\InpCQS{c}{x}{\MeasCQS{v}{x}{\CondCQS{v}{0}{\success}}}} $.
	The problem is, that a step on $ \Enc{Q} $ in $ \Enc{S'} $ forces us to immediately pick a case and resolve the choice, whereas after performing the same step on $ \Enc{Q} $ in $ T $ all cases of the choice remain available.
	After emulating the first step of $ \Enc{Q} $ in $ \Enc{S'} $, either we reach a configuration that has to reach success eventually or we reach a configuration that cannot reach success; whereas there is just one way to do the respective step in $ T $ and in the resulting configuration success may or may not be reached depending on the next step.
	Fortunately, $ \Enc{S'} $ and $ T $ are correspondence similar.
	\qed
\end{exa}

The encoding $ \Enc{\cdot} $ in Definition~\ref{def:encoding} emulates a source term step by exactly one step on the target, except for source term steps on \ruleRPermCQS that are not emulated at all.
Steps on \ruleRPermCQS are necessary in \CQP and \CQS, because they assume that unitary transformations and measurement is always applied to the first $ r $ qubits.
With \ruleRPermCQS the quantum register is permuted to bring the relevant qubits to the front.
In \OQS this is not necessary.
Lemma~\ref{lem:permutation} captures this observation, by showing that the translation of source term steps on \ruleRPermCQS are indistinguishable in the target modulo $ \preceq $.

\begin{lem}
	If $ S \step S' $ is by \ruleRPermCQS, then $ \Enc{S} \preceq \Enc{S'} $ and $ \Enc{S'} \preceq \Enc{S} $.
	\label{lem:permutation}
\end{lem}

\begin{proof}
	Since $ S \step S' $ is by \ruleRPermCQS, there are $ q_0, \ldots, q_{n - 1}, \psi, \phi, P, \pi $, and $ \Pi $ such that:
	\begin{align*}
		S = \ConfigCQS{q_0, \ldots, q_{n - 1} = \Ket{\psi}}{\phi}{P}
		\quad \text{and} \quad
		S' = \ConfigCQS{q_{\pi(0)}, \ldots, q_{\pi(n - 1)} = \Pi\Ket{\psi}}{\phi}{P\pi}
	\end{align*}
	Let $ \Ket{\psi'} = \Pi\Ket{\psi} $ be the state that results from applying the unitary transformation $ \Pi $.
	Then $ \Enc{S} = \ConfigOQS{\ResOQS{\Enc{P}}{\phi}}{\rho} $ with $ \rho = \OuterProduct{\psi}{\psi} $ and $ \Enc{S'} = \ConfigOQS{\ResOQS{\Enc{P\pi}}{\phi}}{\rho'} $ with $ \rho' = \OuterProduct{\psi'}{\psi'} $.
	Note that, $ \Pi_{q_0, \ldots, q_{n - 1}}(\rho) = \rho' $, where $ \Pi\List{q_0, \ldots, q_{n - 1}} $ is the super-operator obtained from the unitary transformation $ \Pi $.
	By Lemma~\ref{lem:qubitInvariance}, $ \Enc{S'} = \ConfigOQS{\ResOQS{\Enc{P\pi}}{\phi}}{\rho'} = \ConfigOQS{\ResOQS{{\left( \Enc{P}\pi \right)}}{\phi}}{\rho'} $.
	Since \OQS-terms such as $ \ResOQS{\Enc{P}}{\phi} $ and $ \ResOQS{{\left( \Enc{P}\pi \right)}}{\phi} $ do not address qubits by their position in the density matrix but their name, $ \mathcal{R} = \left\lbrace \left( \ConfigOQS{Q}{\rho_Q}, \ConfigOQS{Q\pi}{\rho_Q'} \right) \mid \Pi_{q_0, \ldots, q_{n - 1}}(\rho_Q) = \rho_Q' \right\rbrace $ is a bisimulation and thus $ \mathcal{R} $ as well as $ \mathcal{R}^{-1} $ are correspondence simulations.
	Then $ \Enc{S} \preceq \Enc{S'} $ and $ \Enc{S'} \preceq \Enc{S} $.
\end{proof}

Since structural congruence is defined similarly on \CQS and \OQS, does consider in both cases only alpha conversion, the inactive process, and parallel composition, and since $ \Enc{\cdot} $ translates the inactive process and parallel composition homomorphically, the encoding preserves structural congruence.

\begin{lem}[Preservation of Structural Congruence, $ \Enc{\cdot} $]
	\label{lem:preservationSC}
	\begin{align*}
		&\forall C_1, C_2 \in \configCQS \logdot C_1 \equiv C_2 \text{ implies } \Enc{C_1} \equiv \Enc{C_2} & \text{and}\\
		&\forall S_1, S_2 \in \procCQS \logdot S_1 \equiv S_2 \text{ implies } \Enc{S_1} \equiv \Enc{S_2}
	\end{align*}
\end{lem}

\begin{proof}
	By straightforward induction on the rules of structural congruence.
\end{proof}

By \cite{gorla10}, good encodings are allowed to use a renaming policy that structures the way in that the translations of source term names are used in target terms and how to treat names that are introduced by the encoding function.
The encoding $ \Enc{\cdot} $ simply translates names by themselves and does not introduce any other names.
Because of that, we can choose the identity relation as renaming policy and are able to prove a stronger variant of name invariance.
Note that, name invariance considers substitutions on names only.

\begin{lem}[Name Invariance, $ \Enc{\cdot} $]
	\label{lem:nameInvariance}
	Let $ \gamma $ be a substitution on names.
	\begin{align*}
		\forall S_C \in \configCQS \logdot \Enc{S_C\gamma} = \Enc{S_C}\gamma \quad \text{ and } \quad \forall S \in \procCQS \logdot \Enc{S\gamma} = \Enc{S}\gamma
	\end{align*}
\end{lem}

\begin{proof}
	Assume a substitution $ \gamma $ on names.
	Let $ S_C = \ConfigCQS{\sigma}{\phi}{S} $.
	Then $ S_C\gamma = \ConfigCQS{\sigma}{\phi\gamma}{S\gamma} $.
	Moreover, let $ \sigma = \Ket{\psi} $ and $ \rho = \OuterProduct{\psi}{\psi} $.
	Then $ \Enc{S_C\gamma} = \ConfigOQS{\ResOQS{\Enc{S\gamma}}{(\phi\gamma)}}{\rho} = \ConfigOQS{\ResOQS{{\left( \Enc{S}\gamma \right)}}{(\phi\gamma)}}{\rho} = \ConfigOQS{\ResOQS{\Enc{S}}{\phi}}{\rho}\gamma = \Enc{S_C}\gamma $ holds if $ \Enc{S\gamma} = \Enc{S}\gamma $.

	Similarly, let $ S_C = \boxplus_{0 \leq i < 2^r} p_i \bullet \ConfigCQS{\sigma_i}{\phi}{S\Set{\Subst{\Binary{i}}{v}}} $.
	Then we have $ S_C\gamma = \boxplus_{0 \leq i < 2^r} p_i \bullet \ConfigCQS{\sigma_i}{\phi\gamma}{S\Set{\Subst{\Binary{i}}{v}}\gamma} $.
	Moreover, let $ \sigma_i = \Ket{\psi_i} $, $ \rho = \sum_i pi_i \OuterProduct{\psi_i}{\psi_i} $, $ \tilde{q} = q_0, \ldots, q_{r - 1} $, and $ r = \Length{\tilde{q}} \leq n $.
	Then we have $ \Enc{S_C\gamma} = \ConfigOQS{\ResOQS{\EncDist{\tilde{q}}{v}{\Enc{S\gamma}}}{{\left( \phi\gamma \right)}}}{\rho} = \ConfigOQS{\ResOQS{\EncDist{\tilde{q}}{v}{\Enc{S}\gamma}}{{\left( \phi\gamma \right)}}}{\rho} = \ConfigOQS{\ResOQS{\EncDist{\tilde{q}}{v}{\Enc{S}}}{\phi}}{\rho}\gamma = \Enc{S_C}\gamma $ holds if $ \Enc{S\gamma} = \Enc{S}\gamma $.

	We show $ \Enc{S\gamma} = \Enc{S}\gamma $ by induction on the structure of $ S $.
	\begin{compactdesc}
		\item[Case $ S = \nilCQS $] In this case $ S\gamma = S $ and, thus, $ \Enc{S\gamma} = \Enc{S} = \nilOQS = \Enc{S}\gamma $.
		\item[Case $ S = P \mid Q $] In this case $ S\gamma = P\gamma \mid Q\gamma $. By the induction hypothesis, $ \Enc{P\gamma} = \Enc{P}\gamma $ and $ \Enc{Q\gamma} = \Enc{Q}\gamma $. Then $ \Enc{S\gamma} = \Enc{P\gamma} \parOQS \Enc{Q\gamma} = \Enc{P}\gamma \parOQS \Enc{Q}\gamma = {\left( \Enc{P} \parOQS \Enc{Q} \right)}\gamma = \Enc{S}\gamma $.
		\item[Case $ S = \InpCQS{c}{x}{P} $] In this case $ S\gamma = \InpCQS{{\left( c\gamma \right)}}{x}{{\left( P\gamma \right)}} $. By the induction hypothesis, $ \Enc{P\gamma} = \Enc{P}\gamma $. Then we have $ \Enc{S\gamma} = \InpOQS{{\left( c\gamma \right)}}{x}{\Enc{P\gamma}} = \InpOQS{{\left( c\gamma \right)}}{x}{{\left( \Enc{P}\gamma \right)}} = {\left( \InpOQS{c}{x}{\Enc{P}} \right)}\gamma = \Enc{S}\gamma $.
		\item[Case $ S = \OutCQS{c}{q}{P} $] In this case $ S\gamma = \OutCQS{{\left( c\gamma \right)}}{q}{{\left( P\gamma \right)}} $. By the induction hypothesis, $ \Enc{P\gamma} = \Enc{P}\gamma $. Then we have $ \Enc{S\gamma} = \OutOQS{{\left( c\gamma \right)}}{q}{\Enc{P\gamma}} = \OutOQS{{\left( c\gamma \right)}}{q}{{\left( \Enc{P}\gamma \right)}} = {\left( \OutOQS{c}{q}{\Enc{P}} \right)}\gamma = \Enc{S}\gamma $.
		\item[Case $ S = \UnitaryCQS{\tilde{q}}{U}{P} $] In this case $ S\gamma = \UnitaryCQS{\tilde{q}}{U}{{\left( P\gamma \right)}} $. By the induction hypothesis, $ \Enc{P\gamma} = \Enc{P}\gamma $. Then $ \Enc{S\gamma} = U\List{\tilde{q}}.\Enc{P\gamma} = U\List{\tilde{q}}.{\left( \Enc{P}\gamma \right)} = {\left( U\List{\tilde{q}}.\Enc{P} \right)}\gamma = \Enc{S}\gamma $.
		\item[Case $ S = \MeasCQS{v}{\tilde{q}}{P} $] In this case $ S\gamma = \MeasCQS{v}{\tilde{q}}{{\left( P\gamma \right)}} $.
		By the induction hypothesis, $ \Enc{P\gamma} = \Enc{P}\gamma $.
		Then $ \Enc{S\gamma} = \opM\List{\tilde{q}}.\EncDist{\tilde{q}}{v}{\Enc{P\gamma}} = \opM\List{\tilde{q}}.\EncDist{\tilde{q}}{v}{\Enc{P}\gamma} = {\left( \opM\List{\tilde{q}}.\EncDist{\tilde{q}}{v}{\Enc{P}} \right)}\gamma = \Enc{S}\gamma $.
		\item[Case $ S = \NewCQS{c}{P} $] In this case $ S\gamma = \NewCQS{d}{{\left( P'\gamma \right)}} $, where $ d $ is fresh and $ P' = P\Set{\Subst{d}{c}} $. By the induction hypothesis, $ \Enc{P'\gamma} = \Enc{P'}\gamma $ and $ \Enc{P'} = \Enc{P}\Set{\Subst{d}{c}} $. Then $ \Enc{S\gamma} = \tau.{\left( \ResOQS{\Enc{P'\gamma}}{\Set{d}} \right)} = {\left( \tau.{\left( \ResOQS{\Enc{P'}}{\Set{d}} \right)} \right)}\gamma = {\left( \tau.{\left( \ResOQS{\Enc{P}}{\Set{c}} \right)} \right)}\gamma = \Enc{S}\gamma $.
		\item[Case $ S = \QubitCQS{x}{P} $] In this case $ S\gamma = \QubitCQS{x}{{\left( P\gamma \right)}} $. By the induction hypothesis, $ \Enc{P\gamma} = \Enc{P}\gamma $. Then $ \Enc{S\gamma} = \opE_{\Ket{0}}\List{\var}.{\left( \Enc{P\gamma}\Set{\Subst{q_{\Length{\var}}}{x}} \right)} = {\left( \opE_{\Ket{0}}\List{\var}.{\left( \Enc{P}\Set{\Subst{q_{\Length{\var}}}{x}} \right)} \right)}\gamma = \Enc{S}\gamma $.
		\item[Case $ S = \CondCQS{bv}{bv'}{P} $] In this case $ S\gamma = \CondCQS{bv}{bv'}{P\gamma} $. By the induction hypothesis, $ \Enc{P\gamma} = \Enc{P}\gamma $. Then we have $ \Enc{S\gamma} = \left( \CondOQS{bv}{bv'}{\tau.\Enc{P\gamma}} \right) = \left( \CondOQS{bv}{bv'}{\tau.{\left( \Enc{P}\gamma \right)}} \right) = {\left( \CondOQS{bv}{bv'}{\tau.\Enc{P}} \right)}\gamma = \Enc{S}\gamma $.
			\qedhere
	\end{compactdesc}
\end{proof}

For the proof of operational correspondence, we also need qubit invariance, \ie that also substitutions on qubits are preserved and reflected by the encoding function.
The proof of qubit invariance is very similar to the proof of name invariance.

\begin{lem}[Qubit Invariance, $ \Enc{\cdot} $]
	\label{lem:qubitInvariance}
	Let $ \gamma $ be a substitution on qubit names.
	\begin{align*}
		\forall S_C \in \configCQS \logdot \Enc{S_C\gamma} = \Enc{S_C}\gamma \quad \text{ and } \quad \forall S \in \procCQS \logdot \Enc{S\gamma} = \Enc{S}\gamma
	\end{align*}
\end{lem}

\begin{proof}
	Assume a substitution $ \gamma $ on qubit names.
	Let $ S_C = \ConfigCQS{\sigma}{\phi}{S} $.
	Then $ S_C\gamma = \ConfigCQS{\sigma\gamma}{\phi}{S\gamma} $.
	Moreover, let $ \sigma = \Ket{\psi} $ and $ \rho = \OuterProduct{\psi}{\psi} $.
	Then $ \Enc{S_C\gamma} = \ConfigOQS{\ResOQS{\Enc{S\gamma}}{\phi}}{\rho\gamma} = \ConfigOQS{\ResOQS{{\left( \Enc{S}\gamma \right)}}{\phi}}{\rho\gamma} = \ConfigOQS{\Enc{S}}{\rho}\gamma = \Enc{S_C}\gamma $ holds if $ \Enc{S\gamma} = \Enc{S}\gamma $.

	Similarly, let $ S_C = \boxplus_{0 \leq i < 2^r} p_i \bullet \ConfigCQS{\sigma_i}{\phi}{S\Set{\Subst{\Binary{i}}{v}}} $.
	Then we have $ S_C\gamma = \boxplus_{0 \leq i < 2^r} p_i \bullet \ConfigCQS{\sigma_i\gamma}{\phi}{S\Set{\Subst{\Binary{i}}{v}}\gamma} $.
	Moreover, let $ \sigma_i = \Ket{\psi_i} $, $ \rho = \sum_i pi_i \OuterProduct{\psi_i}{\psi_i} $, $ \tilde{q} = q_0, \ldots, q_{r - 1} $, and $ r = \Length{\tilde{q}} \leq n $.
	Then we have $ \Enc{S_C\gamma} = \ConfigOQS{\ResOQS{\EncDist{\tilde{q}}{v}{\Enc{S\gamma}}}{\phi}}{\rho\gamma} = \ConfigOQS{\ResOQS{\EncDist{\tilde{q}}{v}{\Enc{S}\gamma}}{\phi}}{\rho\gamma} = \ConfigOQS{\ResOQS{\EncDist{\tilde{q}}{v}{\Enc{S}}}{\phi}}{\rho}\gamma = \Enc{S_C}\gamma $ holds if $ \Enc{S\gamma} = \Enc{S}\gamma $.

	We show $ \Enc{S\gamma} = \Enc{S}\gamma $ by induction on the structure of $ S $.
	\begin{compactdesc}
		\item[Case $ S = \nilCQS $] In this case $ S\gamma = S $ and, thus, $ \Enc{S\gamma} = \Enc{S} = \nilOQS = \Enc{S}\gamma $.
		\item[Case $ S = P \mid Q $] In this case $ S\gamma = P\gamma \mid Q\gamma $. By the induction hypothesis, $ \Enc{P\gamma} = \Enc{P}\gamma $ and $ \Enc{Q\gamma} = \Enc{Q}\gamma $. Then $ \Enc{S\gamma} = \Enc{P\gamma} \parOQS \Enc{Q\gamma} = \Enc{P}\gamma \parOQS \Enc{Q}\gamma = {\left( \Enc{P} \parOQS \Enc{Q} \right)}\gamma = \Enc{S}\gamma $.
		\item[Case $ S = \InpCQS{c}{x}{P} $] In this case $ S\gamma = \InpCQS{c}{y}{{\left( P'\gamma \right)}} $, where $ y $ is fresh and $ P' = P\Set{\Subst{y}{x}} $, \ie we use alpha conversion to ensure that the variable that stores the received qubit is fresh in $ \gamma $.
		By the induction hypothesis, $ \Enc{P'\gamma} = \Enc{P'}\gamma $ and $ \Enc{P'} = \Enc{P}\Set{\Subst{y}{x}} $.
		Then we have $ \Enc{S\gamma} = \InpOQS{c}{y}{\Enc{P'\gamma}} = \InpOQS{c}{y}{{\left( \Enc{P'}\gamma \right)}} = {\left( \InpOQS{c}{y}{\Enc{P'}} \right)}\gamma = {\left( \InpOQS{c}{x}{\Enc{P}} \right)}\gamma = \Enc{S}\gamma $.
		\item[Case $ S = \OutCQS{c}{q}{P} $] In this case $ S\gamma = \OutCQS{c}{q\gamma}{{\left( P\gamma \right)}} $.
		By the induction hypothesis, $ \Enc{P\gamma} = \Enc{P}\gamma $.
		Then we have $ \Enc{S\gamma} = \OutOQS{c}{{\left( q\gamma \right)}}{\Enc{P\gamma}} = \OutOQS{c}{{\left( q\gamma \right)}}{{\left( \Enc{P}\gamma \right)}} = {\left( \OutOQS{c}{q}{\Enc{P}} \right)}\gamma = \Enc{S}\gamma $.
		\item[Case $ S = \UnitaryCQS{\tilde{q}}{U}{P} $] In this case $ S\gamma = \UnitaryCQS{\tilde{q}\gamma}{U}{{\left( P\gamma \right)}} $.
		By the induction hypothesis, $ \Enc{P\gamma} = \Enc{P}\gamma $.
		Then $ \Enc{S\gamma} = U\List{\tilde{q}\gamma}{\Enc{P\gamma}} = U\List{\tilde{q}\gamma}{{\left( \Enc{P}\gamma \right)}} = {\left( U\List{\tilde{q}}{\Enc{P}} \right)}\gamma = \Enc{S}\gamma $.
		\item[Case $ S = \MeasCQS{v}{\tilde{q}}{P} $] In this case $ S\gamma = \MeasCQS{v}{\tilde{q}\gamma}{{\left( P\gamma \right)}} $.
		By the induction hypothesis, $ \Enc{P\gamma} = \Enc{P}\gamma $.
		Then we have $ \Enc{S\gamma} = \opM\List{\tilde{q}\gamma}.\EncDist{\tilde{q}\gamma}{v}{\Enc{P\gamma}} = \opM\List{\tilde{q}\gamma}.\EncDist{\tilde{q}\gamma}{v}{\Enc{P}\gamma} = {\left( \opM\List{\tilde{q}}.\EncDist{\tilde{q}}{v}{\Enc{P}} \right)}\gamma = \Enc{S}\gamma $.
		\item[Case $ S = \NewCQS{x}{P} $] In this case $ S\gamma = \NewCQS{x}{{\left( P\gamma \right)}} $. By the induction hypothesis, $ \Enc{P\gamma} = \Enc{P}\gamma $. Then $ \Enc{S\gamma} = \tau.{\left( \ResOQS{\Enc{P\gamma}}{\Set{x}} \right)} = \tau.{\left( \ResOQS{{\left( \Enc{P}\gamma \right)}}{\Set{x}} \right)} = {\left( \tau.{\left( \ResOQS{\Enc{P}}{\Set{x}} \right)} \right)}\gamma = \Enc{S}\gamma $.
		\item[Case $ S = \QubitCQS{x}{P} $] In this case $ S\gamma = \QubitCQS{y}{{\left( P'\gamma \right)}} $, where $ y $ is fresh and $ P' = P\Set{\Subst{y}{x}} $.
		By the induction hypothesis, $ \Enc{P'\gamma} = \Enc{P'}\gamma $ and in particular $ \Enc{P'} = \Enc{P}\Set{\Subst{y}{x}} $.
		Therefore, we have $ \Enc{S\gamma} = \opE_{\Ket{0}}\List{\var}.{\left( \Enc{P'\gamma}\Set{\Subst{q_{\Length{\var}}}{y}} \right)} = \opE_{\Ket{0}}\List{\var}.{\left( \Enc{P'}\gamma\Set{\Subst{q_{\Length{\var}}}{y}} \right)} =$\linebreak${\left( \opE_{\Ket{0}}\List{\var}.{\left( \Enc{P'}\Set{\Subst{q_{\Length{\var}}}{y}} \right)} \right)}\gamma = {\left( \opE_{\Ket{0}}\List{\var}.{\left( \Enc{P}\Set{\Subst{q_{\Length{\var}}}{x}} \right)} \right)}\gamma = \Enc{S}\gamma $.
		\item[Case $ S = \CondCQS{bv}{bv'}{P} $] In this case $ S\gamma = \CondCQS{bv}{bv'}{P\gamma} $. By the induction hypothesis, $ \Enc{P\gamma} = \Enc{P}\gamma $. Then we have $ \Enc{S\gamma} = \left( \CondOQS{bv}{bv'}{\tau.\Enc{P\gamma}} \right) = \left( \CondOQS{bv}{bv'}{\tau.{\left( \Enc{P}\gamma \right)}} \right) = {\left( \CondOQS{bv}{bv'}{\tau.\Enc{P}} \right)}\gamma = \Enc{S}\gamma $.
			\qedhere
	\end{compactdesc}
\end{proof}

We also show invariance modulo the instantiation of a variable for binary numbers by a number.
Again the proof is very similar to the proofs of name and qubit invariance.

\begin{lem}
	\label{lem:binaryInvariance}
	\begin{align*}
		\forall S_C \in \configCQS \logdot \forall v, b \logdot \Enc{S_C\Set{\Subst{b}{v}}} = \Enc{S_C}\Set{\Subst{b}{v}} \quad \text{ and } \quad \forall S \in \procCQS \logdot \forall v, b \logdot \Enc{S\Set{\Subst{b}{v}}} = \Enc{S}\Set{\Subst{b}{v}}
	\end{align*}
\end{lem}

\begin{proof}
	Let $ S_C = \ConfigCQS{\sigma}{\phi}{S} $.
	Then $ S_C\Set{\Subst{b}{v}} = \ConfigCQS{\sigma}{\phi}{S\Set{\Subst{b}{v}}} $.
	Moreover, let $ \sigma = \Ket{\psi} $ and $ \rho = \OuterProduct{\psi}{\psi} $.
	Then $ \Enc{S_C\Set{\Subst{b}{v}}} = \ConfigOQS{\ResOQS{\Enc{S\Set{\Subst{b}{v}}}}{\phi}}{\rho} = \ConfigOQS{\ResOQS{{\left( \Enc{S}\Set{\Subst{b}{v}} \right)}}{\phi}}{\rho} = \ConfigOQS{\ResOQS{\Enc{S}}{\phi}}{\rho}\Set{\Subst{b}{v}} = \Enc{S_C}\Set{\Subst{b}{v}} $ holds if $ \Enc{S\Set{\Subst{b}{v}}} = \Enc{S}\Set{\Subst{b}{v}} $.

	Let $ S_C = \boxplus_{0 \leq i < 2^r} p_i \bullet \ConfigCQS{\sigma_i}{\phi}{S\Set{\Subst{\Binary{i}}{v'}}} $.
	Then we have $ S_C\Set{\Subst{b}{v}} = \boxplus_{0 \leq i < 2^r} p_i \bullet \ConfigCQS{\sigma_i}{\phi}{\Tuple{S\Set{\Subst{\Binary{i}}{v}}}\Set{\Subst{b}{v}}} $.
	Moreover, let $ \sigma_i = \Ket{\psi_i} $, $ \rho = \sum_i pi_i \OuterProduct{\psi_i}{\psi_i} $, $ \tilde{q} = q_0, \ldots, q_{r - 1} $, and $ r = \Length{\tilde{q}} \leq n $.
	If $ v' = v $ then $ v \notin \FreeBinV{S_C} $ and thus $ v \notin \FreeBinV{\Enc{S_C}} $.
	Then $ \Enc{S_C\Set{\Subst{b}{v}}} = \Enc{S_C} = \Enc{S_C}\Set{\Subst{b}{v}} $.
	Else if $ v' \neq v $ then we have $ \Enc{S_C\Set{\Subst{b}{v}}} = \ConfigOQS{\ResOQS{\EncDist{\tilde{q}}{v}{\Enc{S\Set{\Subst{b}{v}}}}}{\phi}}{\rho} = \ConfigOQS{\ResOQS{\EncDist{\tilde{q}}{v}{\Enc{S}\Set{\Subst{b}{v}}}}{\phi}}{\rho} = \ConfigOQS{\ResOQS{\EncDist{\tilde{q}}{v}{\Enc{S}}}{\phi}}{\rho}\Set{\Subst{b}{v}} = \Enc{S_C}\Set{\Subst{b}{v}} $ holds if $ \Enc{S\Set{\Subst{b}{v}}} = \Enc{S}\Set{\Subst{b}{v}} $.

	We show $ \Enc{S\Set{\Subst{b}{v}}} = \Enc{S}\Set{\Subst{b}{v}} $ by induction on the structure of $ S $.
	\begin{compactdesc}
		\item[Case $ S = \nilCQS $] In this case $ S\Set{\Subst{b}{v}} = S $ and, thus, $ \Enc{S\Set{\Subst{b}{v}}} = \Enc{S} = \nilOQS = \Enc{S}\Set{\Subst{b}{v}} $.
		\item[Case $ S = P \mid Q $] In this case $ S\Set{\Subst{b}{v}} = P\Set{\Subst{b}{v}} \mid Q\Set{\Subst{b}{v}} $. By the induction hypothesis, $ \Enc{P\Set{\Subst{b}{v}}} = \Enc{P}\Set{\Subst{b}{v}} $ and $ \Enc{Q\Set{\Subst{b}{v}}} = \Enc{Q}\Set{\Subst{b}{v}} $. Then $ \Enc{S\Set{\Subst{b}{v}}} = \Enc{P\Set{\Subst{b}{v}}} \parOQS \Enc{Q\Set{\Subst{b}{v}}} = \Enc{P}\Set{\Subst{b}{v}} \parOQS \Enc{Q}\Set{\Subst{b}{v}} = \Tuple{\Enc{P} \parOQS \Enc{Q}}\Set{\Subst{b}{v}} = \Enc{S}\Set{\Subst{b}{v}} $.
		\item[Case $ S = \InpCQS{c}{x}{P} $] In this case $ S\Set{\Subst{b}{v}} = \InpCQS{c}{x}{\Tuple{P\Set{\Subst{b}{v}}}} $. By the induction hypothesis, $ \Enc{P\Set{\Subst{b}{v}}} = \Enc{P}\Set{\Subst{b}{v}} $. Then $ \Enc{S\Set{\Subst{b}{v}}} = \InpOQS{c}{x}{\Enc{P\Set{\Subst{b}{v}}}} = \InpOQS{c}{x}{\Tuple{P\Set{\Subst{b}{v}}}} = \Tuple{\InpOQS{c}{x}{\Enc{P}}}\Set{\Subst{b}{v}} = \Enc{S}\Set{\Subst{b}{v}} $.
		\item[Case $ S = \OutCQS{c}{q}{P} $] In this case $ S\Set{\Subst{b}{v}} = \OutCQS{c}{q}{{\left( P\Set{\Subst{b}{v}} \right)}} $. By the induction hypothesis, $ \Enc{P\Set{\Subst{b}{v}}} = \Enc{P}\Set{\Subst{b}{v}} $. Then we have $ \Enc{S\Set{\Subst{b}{v}}} = \OutOQS{c}{q}{\Enc{P\Set{\Subst{b}{v}}}} = \OutOQS{c}{q}{{\left( \Enc{P}\Set{\Subst{b}{v}} \right)}} = {\left( \OutOQS{c}{q}{\Enc{P}} \right)}\Set{\Subst{b}{v}} = \Enc{S}\Set{\Subst{b}{v}} $.
		\item[Case $ S = \UnitaryCQS{\tilde{q}}{U}{P} $] In this case $ S\Set{\Subst{b}{v}} = \UnitaryCQS{\tilde{q}}{U}{{\left( P\Set{\Subst{b}{v}} \right)}} $. By the induction hypothesis, $ \Enc{P\Set{\Subst{b}{v}}} = \Enc{P}\Set{\Subst{b}{v}} $. Then $ \Enc{S\Set{\Subst{b}{v}}} = U\List{\tilde{q}}.\Enc{P\Set{\Subst{b}{v}}} = U\List{\tilde{q}}.{\left( \Enc{P}\Set{\Subst{b}{v}} \right)} = {\left( U\List{\tilde{q}}.\Enc{P} \right)}\Set{\Subst{b}{v}} = \Enc{S}\Set{\Subst{b}{v}} $.
		\item[Case $ S = \MeasCQS{v'}{\tilde{q}}{P} $] In this case $ S\Set{\Subst{b}{v}} = \MeasCQS{v''}{\tilde{q}}{{\left( P'\Set{\Subst{b}{v}} \right)}} $, where $ v'' $ is fresh and $ P' = P\Set{\Subst{v''}{v'}} $. By the induction hypothesis, $ \Enc{P'\Set{\Subst{b}{v}}} = \Enc{P'}\Set{\Subst{b}{v}} $. Then we have $ \Enc{S\Set{\Subst{b}{v}}} = \opM\List{\tilde{q}}.\EncDist{\tilde{q}}{v''}{\Enc{P'\Set{\Subst{b}{v}}}} = \opM\List{\tilde{q}}.\EncDist{\tilde{q}}{v''}{\Enc{P'}\Set{\Subst{b}{v}}} = {\left( \opM\List{\tilde{q}}.\EncDist{\tilde{q}}{v'}{\Enc{P}} \right)}\Set{\Subst{b}{v}} = \Enc{S}\Set{\Subst{b}{v}} $.
		\item[Case $ S = \NewCQS{c}{P} $] In this case $ S\Set{\Subst{b}{v}} = \NewCQS{c}{{\left( P\Set{\Subst{b}{v}} \right)}} $. By the induction hypothesis, $ \Enc{P\Set{\Subst{b}{v}}} = \Enc{P}\Set{\Subst{b}{v}} $. Then we have $ \Enc{S\Set{\Subst{b}{v}}} = \tau.{\left( \ResOQS{\Enc{P\Set{\Subst{b}{v}}}}{\Set{c}} \right)} =$\linebreak${\left( \tau.\Tuple{\ResOQS{\Enc{P}}{\Set{c}}} \right)}\Set{\Subst{b}{v}} = \Enc{S}\Set{\Subst{b}{v}} $.
		\item[Case $ S = \QubitCQS{x}{P} $] In this case $ S\Set{\Subst{b}{v}} = \QubitCQS{x}{{\left( P\Set{\Subst{b}{v}} \right)}} $. By the induction hypothesis, $ \Enc{P\Set{\Subst{b}{v}}} = \Enc{P}\Set{\Subst{b}{v}} $. Then $ \Enc{S\Set{\Subst{b}{v}}} = \opE_{\Ket{0}}\List{\var}.{\left( \Enc{P\Set{\Subst{b}{v}}}\Set{\Subst{q_{\Length{\var}}}{x}} \right)} = {\left( \opE_{\Ket{0}}\List{\var}.{\left( \Enc{P}\Set{\Subst{q_{\Length{\var}}}{x}} \right)} \right)}\Set{\Subst{b}{v}} = \Enc{S}\Set{\Subst{b}{v}} $.
		\item[Case $ S = \CondCQS{bv}{bv'}{P} $] In this case $ S\Set{\Subst{b}{v}} = \CondCQS{\Tuple{bv\Set{\Subst{b}{v}}}}{\Tuple{bv'\Set{\Subst{b}{v}}}}{P\Set{\Subst{b}{v}}} $. By the induction hypothesis, $ \Enc{P\Set{\Subst{b}{v}}} = \Enc{P}\Set{\Subst{b}{v}} $.
			Then
			\begin{align*}
				\Enc{S\Set{\Subst{b}{v}}} &= \left( \CondOQS{\Tuple{bv\Set{\Subst{b}{v}}}}{\Tuple{bv'\Set{\Subst{b}{v}}}}{\tau.\Tuple{\Enc{P\Set{\Subst{b}{v}}}}} \right)\\
				&= \left( \CondOQS{\Tuple{bv\Set{\Subst{b}{v}}}}{\Tuple{bv'\Set{\Subst{b}{v}}}}{\tau.{\left( \Enc{P}\Set{\Subst{b}{v}} \right)}} \right)\\
				&= {\left( \CondOQS{bv}{bv'}{\tau.\Enc{P}} \right)}\Set{\Subst{b}{v}} = \Enc{S}\Set{\Subst{b}{v}}
				\qedhere
			\end{align*}
	\end{compactdesc}
\end{proof}

Then we show the completeness and soundness parts of operational correspondence. 
For completeness, we have to show how target terms emulate source term steps.
Above we observed that steps on \ruleRPermCQS are not emulated at all, \ie are emulated by an empty sequence of steps, and captured this observation in Lemma~\ref{lem:permutation}.
Moreover, Example~\ref{exa:correspondenceSimulation} illustrates that in translating measurement under parallel composition completeness holds \wrt correspondence simulation but not bisimulation.
All other kinds of source term steps are emulated more tightly by exactly one target term step.

\begin{lem}[Operational Completeness, $\Enc{\cdot}$]
	\label{lem:completeness}
	\begin{align*}
		\forall S, S' \in \configCQS \logdot S \steps S' \text{ implies } \exists T \in \configOQS \logdot \Enc{S} \steps T \wedge \Enc{S'} \preceq T
	\end{align*}
\end{lem}

\begin{proof}
	We first consider a single step $ S \step S' $ and show that we need in this case at most one step in the sequence $ \Enc{S} \steps T $ such that $ \Enc{S'} \preceq T $.
	Therefore, we perform an induction over the derivation of $S \step S'$ using a case split over the rules in Figure~\ref{fig:semanticsCQS}.
	\begin{compactdesc}
		\item[Case~\ruleRMeasureCQS] In this case $ S = \ConfigCQS{\sigma}{\phi}{\MeasCQS{v}{\tilde{q}}{P}} $, $ \tilde{q} = q_0, \ldots, q_{r-1} $ and $ S' = \boxplus_{0 \leq m < 2^r} p_m \bullet \ConfigCQS{\sigma_m'}{\phi}{P\Set{\Subst{\Binary{m}}{x}}} $, where $ r = \Length{\tilde{q}} \leq n $, $ \sigma = q_0, \ldots q_{n-1} = \Ket{\psi} = \alpha_0\Ket{\psi_0} + \cdots + \alpha_{2^n-1}\Ket{\psi_{2^n-1}} $, and $ \sigma_m' = \Ket{\psi_m'} $.
			The corresponding encodings are given by
			\begin{align*}
				\Enc{S} = \ConfigOQS{\ResOQS{{\left( \opM\List{\tilde{q}}.\EncDist{\tilde{q}}{v}{\Enc{P}} \right)}}{\phi}}{\rho}
				\quad \text{and} \quad
				\Enc{S'} = \ConfigOQS{\ResOQS{\EncDist{\tilde{q}}{v}{\Enc{P}}}{\phi}}{\rho'},
			\end{align*}
			where $ \rho = \OuterProduct{\psi}{\psi} $ and $ \rho' = \sum_m p_m\OuterProduct{\psi_m'}{\psi_m'} $.
			We observe that $ \Enc{S} $ can emulate the step $ S \step S' $ by applying the super-operator $ \opM\List{\tilde{q}} $ using the Rule~\ruleOperOQS, \ie by
			\begin{align*}
				\Enc{S} \step \ConfigOQS{\ResOQS{\EncDist{\tilde{q}}{v}{\Enc{P}}}{\phi}}{\StateTrans{\opM}{\tilde{q}}{\rho}} = T.
			\end{align*}
			Further, $ \sigma_m' = \Ket{\psi_m'} = \dfrac{\alpha_{l_m}}{\sqrt{p_m}}\Ket{\psi_{l_m}} + \cdots + \dfrac{\alpha_{u_m}}{\sqrt{p_m}}\Ket{\psi_{u_m}} $ with $ l_m = 2^{n-r}m $, $ u_m = 2^{n-r}\Tuple{m+1}-1 $, and $ p_m = \Length{\alpha_{l_m}}^2 + \cdots + \Length{\alpha_{u_m}}^2 $.
			Let $ \Projector{m} $ be the base vector for $ \Binary{m} $ in the standard base.
			Since $ \StateTrans{\opM}{\tilde{q}}{\rho} =  \sum_m \Projector{m}\rho\Adjoint{\Projector{m}} $, and $\rho' = \sum_m p_m\OuterProduct{\psi_m'}{\psi_m'}$, then $\rho' = \StateTrans{\opM}{\tilde{q}}{\rho} $.
			Note that $ \Ket{\psi_m'} = \dfrac{\Projector{m}\Ket{\psi}}{\sqrt{\Trace{\Adjoint{\Projector{m}}\Projector{m}\OuterProduct{\psi}{\psi}}}} $.
			Therefore, the measurement using the super-operator $ \opM\List{\tilde{q}} $ applied to $ \rho $ produces the same probability distribution as measuring $ \sigma $ with $\MeasCQS{v}{q_0, \ldots, q_{r-1}}{P} $ (modulo the different representations of the qubits). It follows $ T = \Enc{S'} $, \ie $ \Enc{S'} \preceq T $.
		\item[Case~\ruleRTransCQS] In this case $ S = \ConfigCQS{\sigma}{\phi}{\UnitaryCQS{\tilde{q}}{U}{P}} $ and $ S' = \ConfigCQS{\sigma'}{\phi}{P} $, where $ \tilde{q} = q_0, \ldots, q_{r-1} $, $ r = \Length{\tilde{q}} \leq n $, $ \sigma = q_0, \ldots q_{n-1} = \Ket{\psi} $, $ \sigma' = q_0, \ldots q_{n-1} = \Ket{\psi'}$, and $ \Ket{\psi'} $ is the result of applying $ U $ on the first $ r $ qubits in $ q_0, \ldots q_{n-1} = \Ket{\psi} $.
			The corresponding encodings are given by
			\begin{align*}
				\Enc{S} = \ConfigOQS{\ResOQS{{\left( U\List{\tilde{q}}.\Enc{P} \right)}}{\phi}}{\rho}
				\quad \text{and} \quad
				\Enc{S'} = \ConfigOQS{\ResOQS{\Enc{P}}{\phi}}{\rho'},
			\end{align*}
			where $ \rho = \OuterProduct{\psi}{\psi} $ and $ \rho' = \OuterProduct{\psi'}{\psi'} $.
			We observe that $ \Enc{S} $ can emulate the step $ S \step S' $ by applying the super-operator $ U\List{\tilde{q}} $ using the Rule~\ruleOperOQS, \ie by
			\begin{align*}
				\Enc{S} \step \ConfigOQS{\ResOQS{\Enc{P}}{\phi}}{\StateTrans{U}{\tilde{q}}{\rho}} = T.
			\end{align*}
			Further, $ \sigma' = (U \tensorProd \opI_{\Set{q_r, \ldots, q_{n-1}}}) \Ket{\psi} = \Ket{\psi'} $ and $ \StateTrans{U}{\tilde{q}}{\rho} = (U \tensorProd \opI_{\var-\tilde{q}}) \cdot \rho \cdot \Adjoint{(U \tensorProd \opI_{\var-\tilde{q}})} $.
			Moreover, since $ \rho' = \OuterProduct{\psi'}{\psi'} $ and $ \Set{q_r, \ldots, q_{n-1}} = \var - \tilde{q} $, it follows $ \rho' = \StateTrans{U}{\tilde{q}}{\rho} $ and therefore $ T = \Enc{S'} $, \ie $ \Enc{S'} \preceq T $.
		\item[Case~\ruleRPermCQS] In this case, $ \Enc{S'} \preceq \Enc{S} $, because of Lemma~\ref{lem:permutation}.
			We choose $ T = \Enc{S} $ such that $ \Enc{S} \steps T $ (by doing 0 steps) and $ \Enc{S'} \preceq T $.
		\item[Case~\ruleRCommCQS] In this case $ S = \ConfigCQS{\sigma}{\phi}{\OutCQS{c}{q}{P} \mid \InpCQS{c}{x}{Q}} $ and $ S' = \ConfigCQS{\sigma}{\phi}{P \mid Q\Set{\Subst{q}{x}}} $, where $ \sigma = q_0, \ldots, q_{n-1} = \Ket{\psi} $. The corresponding encodings are given by
			\begin{align*}
				\Enc{S} = \ConfigOQS{\ResOQS{{\left( \OutOQS{c}{q}{\Enc{P}} \parOQS \InpOQS{c}{x}{\Enc{Q}} \right)}}{\phi}}{\rho}
				\quad \text{and} \quad
				\Enc{S'} = \ConfigOQS{\ResOQS{{\left( \Enc{P} \parOQS \Enc{Q\Set{\Subst{q}{x}}} \right)}}{\phi}}{\rho},
			\end{align*}
			where $ \rho = \OuterProduct{\psi}{\psi} $. We observe that $ \Enc{S} $ can emulate the step $ S \step S' $ using the rules \ruleCommOQS, \ruleInputOQS, and \ruleOutputOQS by
			\begin{align*}
				\Enc{S} \step \ConfigOQS{\ResOQS{{\left( \Enc{P} \parOQS \Enc{Q}\Set{\Subst{q}{x}} \right)}}{\phi}}{\rho} = T.
			\end{align*}
			By Lemma~\ref{lem:qubitInvariance}, $ \Enc{Q\Set{\Subst{q}{x}}} = \Enc{Q}\Set{\Subst{q}{x}} $. Then $ T = \Enc{S'} $, \ie $ \Enc{S'} \preceq T $.
		\item[Case~\ruleRNewCQS] In this case $ S = \ConfigCQS{\sigma}{\phi}{\NewCQS{d}{P}} $ and $ S' = \ConfigCQS{\sigma}{\phi, c}{P\Set{\Subst{c}{d}}} $, where $ c $ is fresh and $ \sigma = q_0, \ldots, q_{n-1} = \Ket{\psi} $.
			The corresponding encodings are given by
			\begin{align*}
				\Enc{S} = \ConfigOQS{\ResOQS{{\left( \tau.{\left( \ResOQS{\Enc{P}}{\Set{d}} \right)} \right)}}{\phi}}{\rho}
				\quad \text{and} \quad
				\Enc{S'} = \ConfigOQS{\ResOQS{\Enc{P\Set{\Subst{c}{d}}}}{\phi, c}}{\rho},
			\end{align*}
			where $ \rho = \OuterProduct{\psi}{\psi} $.
			We observe that $ \Enc{S} $ can emulate the step $ S \step S' $ by reducing $ \tau $ using Rule~\ruleTauOQS, \ie by
			\begin{align*}
				\Enc{S} \step \ConfigOQS{\ResOQS{\Enc{P}}{{\left( \phi \cup \Set{d} \right)}}}{\rho} = T.
			\end{align*}
			By Lemma~\ref{lem:nameInvariance}, $ \Enc{P\Set{\Subst{c}{d}}} = \Enc{P}\Set{\Subst{c}{d}} $.
			Since $ c $ is fresh, then $ \ResOQS{\Enc{P\Set{\Subst{c}{d}}}}{{\left( \phi \cup \Set{ c } \right)}} = \ResOQS{\Enc{P}\Set{\Subst{c}{d}}}{{\left( \phi \cup \Set{ c } \right)}} = \ResOQS{\Enc{P}}{{\left( \phi \cup \Set{ d } \right)}} $.
			Then $ T = \Enc{S'} $, \ie $ \Enc{S'} \preceq T $.
		\item[Case~\ruleRQbitCQS] In this case $ S = \ConfigCQS{\sigma}{\phi}{\QubitCQS{x}{P}} $ and $ S' = \ConfigCQS{\sigma'}{\phi}{P\Set{\Subst{q_n}{x}}} $, where $ \sigma = q_0, \ldots, q_{n-1} = \Ket{\psi} $, $ \sigma' = q_0, \ldots, q_{n-1}, q_n = \Ket{\psi'} $, $ \Ket{\psi'} = \Ket{\psi} \tensorProd \Ket{0} $, $ \var = q_0, \ldots, q_{n - 1} $, and $ q $ is fresh.
			The corresponding encodings are given by the following terms
			\begin{align*}
				\Enc{S} = \ConfigOQS{\ResOQS{{\left( \opE_{\Ket{0}}\List{\var}.{\left( \Enc{P}\Set{\Subst{q_n}{x}} \right)} \right)}}{\phi}}{\rho}
				\quad \text{and} \quad
				\Enc{S'} = \ConfigOQS{\ResOQS{\Enc{P\Set{\Subst{q_n}{x}}}}{\phi}}{\rho'},
			\end{align*}
			where $ \rho = \OuterProduct{\psi}{\psi} $ and $ \rho' = \OuterProduct{\psi'}{\psi'} $.
			We observe that $ \Enc{S} $ can emulate the step $ S \step S' $ by applying the super-operator $ \opE_{\Ket{0}}\List{\var} $ using the Rule~\ruleOperOQS, \ie by
			\begin{align*}
				\Enc{S} \step \ConfigOQS{\ResOQS{{\left( \Enc{P}\Set{\Subst{q_n}{x}} \right)}}{\phi}}{\StateTrans{\opE}{\Ket{0},\var}{\rho}} = T.
			\end{align*}
			By Lemma~\ref{lem:qubitInvariance}, $ \Enc{P\Set{\Subst{q_n}{x}}} = \Enc{P}\Set{\Subst{q_n}{x}} $.
			Further, $ \StateTrans{\opE}{\Ket{0}, \var}{\rho} = \rho' $. Then $ T = \Enc{S'} $, \ie $ \Enc{S'} \preceq T $.
		\item[Case~\ruleRParCQS] In this case $ S = \ConfigCQS{\sigma}{\phi}{P \mid Q} $, $ S' = \boxplus_{0 \leq i < 2^r} p_i \bullet \ConfigCQS{\sigma_i'}{\phi'}{P'\Set{\Subst{\Binary{i}}{v}} \mid Q} $, $ S_P = \ConfigCQS{\sigma}{\phi}{P} \in \configCQS $, and $ S_P \step S_P' = \boxplus_{0 \leq i < 2^r} p_i \bullet \ConfigCQS{\sigma_i'}{\phi'}{P'\Set{\Subst{\Binary{i}}{v}}} $, where $ \sigma = q_0, \ldots, q_{n-1} = \Ket{\psi} $, $ \sigma_i' = q_0, \ldots, q_{n - 1} = \Ket{\psi_i'} $, $ \tilde{q} = q_0, \ldots, q_{r - 1} $, $ r = \Length{\tilde{q}} \leq n $, and $ v $ is fresh in $ Q $.
			By the induction hypothesis, there is some $ T_P \in \configOQS $ such that $ \Enc{S_P} \steps T_P $ and $ \Enc{S_P'} \preceq T_P $.
			Then either (1) $ r = 0 $ and $ S_P' = \boxplus_{0 \leq i < 2^0} p_i \bullet \ConfigCQS{\sigma_i'}{\phi'}{P'\Set{\Subst{\Binary{i}}{v}}} = \ConfigCQS{\sigma_0'}{\phi'}{P'} $, because there is just one case in the probability distribution, or (2) $ r > 0 $ and the probability distribution in $ S_P' $ contains more than one case:
			\begin{compactenum}[(1)]
				\item The corresponding encodings are given by
					\[\begin{array}{lclclcl}
						\Enc{S} & = & \ConfigOQS{\ResOQS{{\left( \Enc{P} \parOQS \Enc{Q} \right)}}{\phi}}{\rho}, & \hspace{2em} & \Enc{S'} & = & \ConfigOQS{\ResOQS{{\left( \Enc{P'} \parOQS \Enc{Q} \right)}}{\phi'}}{\rho'},\\
						\Enc{S_P} & = & \ConfigOQS{\ResOQS{\Enc{P}}{\phi}}{\rho}, & \text{ and } & \Enc{S_P'} & = & \ConfigOQS{\ResOQS{\Enc{P'}}{\phi'}}{\rho'},
					\end{array}\]
					where $ \rho = \OuterProduct{\psi}{\psi} $ and $ \rho' = \OuterProduct{\psi_0'}{\psi_0'} $.
					Since $ \Enc{S_P'} \preceq T_P $, then $ T_P = \ConfigOQS{\ResOQS{T_P'}{\phi'}}{\rho'} $ for some $ T_P' $.
					By the Rule~\ruleRedOQS in Figure~\ref{fig:semanticsOQS}, $ \Enc{S_P} \steps T_P $ implies $ \Enc{S_P} \LabelledStep{\tau}\ldots\LabelledStep{\tau} T_P $ and, by \ruleResOQS, then $ \ConfigOQS{\Enc{P}}{\rho} \steps \ConfigOQS{T_P'}{\rho'} $.
					Then $ \Enc{S} $ can emulate the step $ S \step S' $ using the sequence $ \ConfigOQS{\Enc{P}}{\rho} \steps \ConfigOQS{T_P'}{\rho'} $ and the rules~\ruleIntlOQS and \ruleResOQS by
					\begin{align*}
						\Enc{S} \steps \ConfigOQS{\ResOQS{{\left( T_P' \parOQS \Enc{Q} \right)}}{\phi'}}{\rho'} = T.
					\end{align*}
					Since $ \Enc{S_P} \steps T_P $ contains at most one step, so does $ \Enc{S} \steps T $.
					Finally, we show that $ \Enc{S_P'} \preceq T_P $ implies $ \Enc{S'} \preceq T $:
					\begin{compactitem}
						\item Assume $ \Enc{S'} \LabelledStep{\alpha} C_1 $.
							Then either $ \Enc{Q} $ performs a step on its own, $ \Enc{P'} $ does a step on its own, or they perform a communication step together.
							In the second and third case, $ \Enc{S_P'} \preceq T_P $ ensures that for every $ \Enc{S_P'} = \ConfigOQS{\ResOQS{\Enc{P'}}{\phi'}}{\rho'} \LabelledStep{\alpha'} T_1 $ there is some $ T_P = \ConfigOQS{\ResOQS{T_P'}{\phi'}}{\rho'} \LabelledStep{\alpha'} T_1' $ such that $ T_1 \preceq T_1' $.
							With that, in all three cases, $ T \LabelledStep{\alpha} C_1' $ such that $ C_1 \preceq C_1' $.
						\item Assume $ T \LabelledStep{\alpha} C_1' $.
							Then either $ \Enc{Q} $ performs a step on its own, $ T_P' $ does a step on its own, or they perform a communication step together.
							In the second and third case, $ \Enc{S_P'} \preceq T_P $ ensures that for every $ T_P = \ConfigOQS{\ResOQS{T_P'}{\phi'}}{\rho'} \LabelledStep{\alpha'} T_1' $ there are some $ \Enc{S_P'} = \ConfigOQS{\ResOQS{\Enc{P'}}{\phi'}}{\rho'} \steps\LabelledStep{\alpha'} T_2 $ and $ T_1' \steps T_2' $ such that $ T_2 \preceq T_2' $.
							With that, in all three cases, $ \Enc{S'} \steps\LabelledStep{\alpha} C_2 $ and $ C_1' \steps C_2' $ such that $ C_2 \preceq C_2' $.
						\item Since correspondence simulation is stricter than weak trace equivalence, $ \Enc{S'} $ and $ T $ have the same weak traces and thus $ \ReachBarb{\Enc{S'}}{\success} $ iff $ \ReachBarb{T}{\success} $.
					\end{compactitem}
				\item Since $ v $ is fresh in $ Q $, the corresponding encodings are given by
					\[\begin{array}{lclclcl}
						\hspace{4,5em}\Enc{S} & = & \ConfigOQS{\ResOQS{{\left( \Enc{P} \parOQS \Enc{Q} \right)}}{\phi}}{\rho}, && \Enc{S'} & = & \ConfigOQS{\ResOQS{{\left( \EncDist{\tilde{q}}{v}{\Enc{P'} \parOQS \Enc{Q}} \right)}}{\phi'}}{\rho'},\\
						\hspace{4,5em}\Enc{S_P} & = & \ConfigOQS{\ResOQS{\Enc{P}}{\phi}}{\rho}, && \Enc{S_P'} & = & \ConfigOQS{\ResOQS{\EncDist{\tilde{q}}{v}{\Enc{P'}}}{\phi'}}{\rho'},
					\end{array}\]
					where $ \rho = \OuterProduct{\psi}{\psi} $ and $ \rho' = \sum_i p_i \OuterProduct{\psi_i'}{\psi_i'} $.
					Since $ \Enc{S_P'} \preceq T_P $, then $ T_P =$\linebreak$\ConfigOQS{\ResOQS{\EncDist{\tilde{q}}{v}{T_P'}}{\phi'}}{\rho'} $ for some $ T_P' $.
					By the Rule~\ruleRedOQS in Figure~\ref{fig:semanticsOQS}, $ \Enc{S_P} \steps T_P $ implies $ \Enc{S_P} \LabelledStep{\tau}\ldots\LabelledStep{\tau} T_P $ and, by Rule~\ruleResOQS, then $ \ConfigOQS{\Enc{P}}{\rho} \steps \ConfigOQS{\EncDist{\tilde{q}}{v}{T_P'}}{\rho'} $.
					Then $\Enc{S}$ can emulate the step $S \step S'$ using the sequence $ \ConfigOQS{\Enc{P}}{\rho} \steps \ConfigOQS{\EncDist{\tilde{q}}{v}{T_P'}}{\rho'} $ and the rules~\ruleIntlOQS and \ruleResOQS by
					\begin{align*}
						\Enc{S} \steps \ConfigOQS{\ResOQS{{\left( \EncDist{\tilde{q}}{v}{T_P'} \parOQS \Enc{Q} \right)}}{\phi'}}{\rho'} = T.
					\end{align*}
					Since $ \Enc{S_P} \steps T_P $ contains at most one step, so does $ \Enc{S} \steps T $.
					Finally, we show that $ \Enc{S_P'} \preceq T_P $ implies $ \Enc{S'} \preceq T $:
					\begin{compactitem}
						\item Assume $ \Enc{S'} \LabelledStep{\alpha} C_1 $.
							Then this step reduces the choice to one branch with non-zero probability with \ruleCondOQS and \ruleChoiceOQS and in this branch the respective super-operator to adjust the density matrix to the chosen result of measurement with \ruleOperOQS, \ie $ \alpha = \tau $ and $ C_1 = \ConfigOQSlb{\ResOQS{{\left( \Enc{P'}\Set{\Subst{\Binary{j}}{v}} \parOQS \Enc{Q} \right)}}{\phi'}}{{\opE_{\Binary{j}, \tilde{q}}}{\left( \rho' \right)}} $ for some $ 0 \leq j < 2^r $ with $ p_j \neq 0 $, where $ \opE_{\Binary{j}, \tilde{q}} $ is measurement with the expected result $ \Binary{j} $ and will adapt the state of the measured qubits to $ \Binary{j} $.
							Then $ T \LabelledStep{\alpha} C_1' = \ConfigOQS{\ResOQS{{\left( T_P'\Set{\Subst{\Binary{j}}{v}} \parOQS \Enc{Q} \right)}}{\phi'}}{{\opE_{\Binary{j}, \tilde{q}}}{\left( \rho' \right)}} $.
							Because of $ \ConfigOQS{\ResOQS{\EncDist{\tilde{q}}{v}{\Enc{P'}}}{\phi'}}{\rho'} = \Enc{S_P'} \preceq T_P = \ConfigOQS{\ResOQS{\EncDist{\tilde{q}}{v}{T_P'}}{\phi'}}{\rho'} $ and Lemma~\ref{lem:binaryInvariance}, then $ C_1 \preceq C_1' $.
						\item Assume $ T \LabelledStep{\alpha} C_1' $.
							Then either the choice on the left is reduced or $ \Enc{Q} $\linebreak performs a step on its own.
							In the former case, $ \alpha = \tau $, $ C_1' =$\linebreak$\ConfigOQS{\ResOQS{{\left( T_P'\Set{\Subst{\Binary{j}}{v}} \parOQS \Enc{Q} \right)}}{\phi'}}{{\opE_{\Binary{j}, \tilde{q}}}\Tuple{\rho'}} $, $ 0 \leq j < 2^r $, and $ p_j \neq 0 $.
							Then $ \Enc{S'} \LabelledStep{\alpha} C_1 = \ConfigOQS{\ResOQS{{\left( \Enc{P'}\Set{\Subst{\Binary{j}}{v}} \parOQS \Enc{Q} \right)}}{\phi'}}{\opE_{j, \tilde{q}}\Tuple{\rho'}} $.
							Because of Lemma~\ref{lem:binaryInvariance} and\linebreak $ \ConfigOQS{\ResOQS{\EncDist{\tilde{q}}{v}{\Enc{P'}}}{\phi'}}{\rho'} = \Enc{S_P'} \preceq T_P = \ConfigOQS{\ResOQS{\EncDist{\tilde{q}}{v}{T_P'}}{\phi'}}{\rho'} $, then we pick $ C_2' = C_1' $ and $ C_2 = C_1 $ such that $ C_2 \preceq C_2' $.\\
							In the latter case, $ C_1' = \ConfigOQS{\ResOQS{{\left( \EncDist{\tilde{q}}{v}{T_P'} \parOQS T_Q \right)}}{\phi''}}{\rho''} $.
							Then we pick an arbitrary case $ 0 \leq j < 2^r $ of the probability distribution with non-zero probability $ p_j \neq 0 $ such that $ \Enc{S'} \step \ConfigOQS{\ResOQS{{\left( \Enc{P'}\Set{\Subst{\Binary{j}}{v}} \parOQS \Enc{Q} \right)}}{\phi'}}{\opE_{\Binary{j}, \tilde{q}}\Tilde{\rho'}} \LabelledStep{\alpha} C_2 $ with $ C_2 = \ConfigOQS{\ResOQS{{\left( \Enc{P'}\Set{\Subst{\Binary{j}}{v}} \parOQS T_Q \right)}}{\phi''}}{\rho'''} $, where $ \rho''' $ is the result of applying the transformation on the matrix in the step $ T \LabelledStep{\alpha} C_1' $ (if there is any) to the density matrix $ \opE_{\Binary{j}, \tilde{q}}\Tuple{\rho'} $.
							Because of the non-cloning principle, applying the super-operator $ \opE_{\Binary{j}, \tilde{q}} $ on $ \rho'' $ again yields $ \rho''' $, because $ \opE_{\Binary{j}, \tilde{q}} $ and the super-operator (if any) applied in $ T \LabelledStep{\alpha} C_1' $ need to operate on different sets of qubits.
							Hence, $ C_1' \step C_2' = \ConfigOQS{\ResOQS{{\left( T_P'\Set{\Subst{\Binary{j}}{v}} \parOQS T_Q \right)}}{\phi''}}{\rho'''} $.
							Because of $ \ConfigOQS{\ResOQS{\EncDist{\tilde{q}}{v}{\Enc{P'}}}{\phi'}}{\rho'} = \Enc{S_P'} \preceq T_P = \ConfigOQS{\ResOQS{\EncDist{\tilde{q}}{v}{T_P'}}{\phi'}}{\rho'} $ and Lemma~\ref{lem:binaryInvariance}, then $ C_2 \preceq C_2' $.
						\item Since correspondence simulation is stricter than weak trace equivalence, $ \Enc{S'} $ and $ T $ have the same weak traces and thus $ \HasBarb{\Enc{S'}}{\success} $ iff $ \HasBarb{T}{\success} $.
					\end{compactitem}
			\end{compactenum}
		\item[Case~\ruleRCongCQS] In this case $ S = \ConfigCQS{\sigma}{\phi}{Q} $, $ S' = \boxplus_{0 \leq i < 2^r} p_i \bullet \ConfigCQS{\sigma_i'}{\phi'}{Q'\Set{\Subst{\Binary{i}}{v}}} $, $ Q \equiv P $, $ P' \equiv Q' $, and $ S_P = \ConfigCQS{\sigma}{\phi}{P} \step S_P' = \boxplus_{0 \leq i < 2^r} p_i \bullet \ConfigCQS{\sigma_i'}{\phi'}{P'\Set{\Subst{\Binary{i}}{v}}} $, where $ \sigma = q_0, \ldots, q_{n-1} = \Ket{\psi} $ and $ \sigma_i' = q_0, \ldots, q_{n - 1} = \Ket{\psi'} $.
			By Lemma~\ref{lem:preservationSC}, $ Q \equiv P $ implies $ \Enc{S} \equiv \Enc{S_P} $ and $ P' \equiv Q' $ implies $ \Enc{S_P'} \equiv \Enc{S'} $.
			By the induction hypothesis, there is some $ T_P \in \configOQS $ such that $ \Enc{S_P} \steps T_P $ is a sequence of at most one step and $ \Enc{S_P'} \preceq T_P $.
			Because of $ \Enc{S} \equiv \Enc{S_P} $, \ie $ \Enc{S_P} \preceq \Enc{S} $, then there is some $ T \in \configOQS $ such that $ \Enc{S} \steps T $ is a sequence of at most one step and $ T_P \preceq T $.
			Because of $ \Enc{S_P'} \equiv \Enc{S'} $, \ie $ \Enc{S'} \preceq \Enc{S_P'} $, then $ \Enc{S'} \preceq \Enc{S_P'} \preceq T_P \preceq T $, \ie $ \Enc{S'} \preceq T $.
		\item[Case~\ruleRProbCQS] Then $ S = \boxplus_{0 \leq i < 2^r} p_i \bullet \ConfigCQS{\sigma_i}{\phi}{P\Set{\Subst{\Binary{i}}{v}}} $ and $ S' = \ConfigCQS{\sigma_j}{\phi}{P\Set{\Subst{\Binary{j}}{v}}} $ for some $ 0 \leq j < 2^r $ with $ p_j \neq 0 $, where $ \sigma_i = q_0, \ldots, q_{n-1} = \Ket{\psi_i} $, $ \tilde{q} = q_0, \ldots, q_{r - 1} $, and $ r = \Length{\tilde{q}} \leq n $.
			The corresponding encodings are given by
			\begin{align*}
				\Enc{S} = \ConfigOQS{\ResOQS{\EncDist{\tilde{q}}{v}{\Enc{P}}}{\phi}}{\rho}
				\quad \text{ and } \quad
				\Enc{S'} = \ConfigOQS{\ResOQS{\Enc{P\Set{\Subst{\Binary{j}}{v}}}}{\phi}}{\rho'},
			\end{align*}
			where $ \rho = \sum_i p_i\OuterProduct{\psi_i}{\psi_i} $ and $ \rho' = \OuterProduct{\psi_j}{\psi_j} $.
			We observe that $ \Enc{S} $ can emulate the step $ S \step S' $ using the rules~\ruleChoiceOQS, \ruleCondOQS, and \ruleOperOQS by
			\begin{align*}
				\Enc{S} \step \ConfigOQS{\ResOQS{\Enc{P}\Set{\Subst{\Binary{j}}{v}}}{\phi}}{\opE_{\Binary{j}, \tilde{q}}\Tuple{\rho}} = T,
			\end{align*}
			where $ \opE_{\Binary{j}, \tilde{q}} $ is measurement with the expected result $ \Binary{j} $ and will adapt the state of the measured qubits to $ \Binary{j} $.
			By Lemma~\ref{lem:binaryInvariance}, $ \Enc{P\Set{\Subst{\Binary{j}}{v}}} = \Enc{P}\Set{\Subst{\Binary{j}}{v}} $.
			Since we restrict in \CQS our attention to a probability distributions that results from the measurement of qubits, $ \sigma_i = \dfrac{\alpha_{l_i}}{\sqrt{p_i}}\Ket{\psi_{l_i}} + \cdots + \dfrac{\alpha_{u_i}}{\sqrt{p_i}}\Ket{\psi_{u_i}} $ with $ l_i = 2^{n-r}i$, $u_i = 2^{n-r}\Tuple{i+1}-1 $, and $ p_i = \Length{\alpha_{l_i}}^2 + \cdots + \Length{\alpha_{u_i}}^2 $.
			Accordingly, $ \opE_j\List{\tilde{q}} $ sets the system state to $ \dfrac{\StateTrans{\opE}{j,\tilde{q}}{\rho}}{\Trace{\StateTrans{\opE}{j,\tilde{q}}{\rho}}} = \rho' $.
			Then $ T = \Enc{S'} $, \ie $ \Enc{S'} \preceq T $.
		\item[Case~\ruleRCondCQS] Then $ S = \ConfigCQS{\sigma}{\phi}{\CondCQS{b}{b'}{P}} $, $ b = b' $, and $ S' = \ConfigCQS{\sigma}{\phi}{P} $, where $ \sigma_i = q_0, \ldots, q_{n-1} = \Ket{\psi_i} $.
			The corresponding encodings are given by
			\begin{align*}
				\Enc{S} = \ConfigOQS{\ResOQS{\CondOQS{b}{b'}{\tau.\Enc{P}}}{\phi}}{\rho}
				\quad \text{ and } \quad
				\Enc{S'} = \ConfigOQS{\ResOQS{\Enc{P}}{\phi}}{\rho},
			\end{align*}
			where $ \rho = \OuterProduct{\psi}{\psi} $.
			We observe that $ \Enc{S} $ can emulate the step $ S \step S' $ using the rules \ruleCondOQS and \ruleTauOQS by
			\begin{align*}
				\Enc{S} \step \ConfigOQS{\ResOQS{\Enc{P}}{\phi}}{\rho} = T.
			\end{align*}
			Since $ T = \Enc{S'} $, then $ \Enc{S'} \preceq T $.
	\end{compactdesc}
	Finally, the lemma follows from an induction over the number of steps in $ S \steps S' $.
\end{proof}

In the opposite direction, \ie for soundness, we show that every target term step is the result of emulating a source term step.
Thereby, the formulation of soundness allows to perform---after some initial steps $ \Enc{S} \steps T $ that need to be mapped to the source---some additional steps $ T \steps T' $, to catch up with a source term encoding $ \Enc{S'} $.
To avoid the problem described in Example~\ref{exa:correspondenceSimulation}, we use these additional steps on the target to resolve all unguarded choices as they result from translating probability distributions.
Accordingly, the sequence $ S \steps S' $ contains the mapping of the steps in $ \Enc{S} \steps T $, steps to resolve probability distributions to map the steps in $ T \steps T' $, and some additional steps on Rule~\ruleRPermCQS to permute qubits.
The last kind of steps is necessary in the source to prepare for applications of unitary transformations and measurement, \ie these steps surround in $ S \steps S' $ the corresponding mappings of steps in $ \Enc{S} \steps T $ that apply the super-operators for unitary transformations or measurement.

\begin{lem}[Operational Soundness, $ \Enc{\cdot} $]
	\label{lem:soundness}
	\begin{align*}
		& \forall S \in \configCQS \logdot \forall T \in \configOQS \logdot \Enc{S} \steps T \text{ implies }\\
		& \exists S' \in \configCQS \logdot \exists T' \in \configOQS \logdot S \steps S' \wedge T \steps T' \wedge \Enc{S'} \preceq T'
	\end{align*}
\end{lem}

\begin{proof}
	We strengthen the proof goal by replacing $ \preceq $ with equality:
	\begin{align*}
		\forall S \in \configCQS \logdot \forall T \in \configOQS \logdot \Enc{S} \steps T \text{ implies } \exists S' \in \configCQS \logdot S \steps S' \wedge T \steps \Enc{S'}
	\end{align*}
	Moreover, we require that either $ S' = S $ or $ S' $ is not a probability distribution with $ r > 0 $ and that every step in the sequence $ T \steps \Enc{S'} $ reduces a choice.
	Then the proof is by induction on the number of steps in $ \Enc{S} \steps T $.
	The base case for zero steps, \ie $ T = \Enc{S} $, holds trivially by choosing $ S' = S $.
	For the induction step, assume $ \Enc{S} \steps T^* \step T $.
	By the induction hypothesis, there is some $ S^{**} $ such that $ S \steps S^{**} $ and $ T^* \steps \Enc{S^{**}} $, where $ S^{**} $ is not a probability distribution and in $ T^* \steps \Enc{S^{**}} $ only choices are reduced.
	Let $ S^{**} = \ConfigCQS{\sigma^{**}}{\phi^{**}}{P^{**}} $ with $ \sigma^{**} = q_0, \ldots, q_{n^{**} - 1} = \Ket{\psi^{**}} $.
	By Definition~\ref{def:encoding}, then $ \Enc{S^{**}} = \ConfigOQS{\ResOQS{\Enc{P^{**}}}{\phi^{**}}}{\rho^{**}} $ with $ \rho^{**} = \OuterProduct{\psi^{**}}{\psi^{**}} $.

	By Figure~\ref{fig:semanticsOQS}, $ T^* \step T $ was derived from the Rule~\ruleRedOQS, \ie $ T^* \LabelledStep{\tau} T $, and the derivation of $ T^* \LabelledStep{\tau} T $ is based on either (1)~the Axiom~\ruleTauOQS, (2)~the Axiom~\ruleOperOQS, or (3)~both of the Axioms~\ruleInputOQS and \ruleOutputOQS.
	\begin{compactenum}[(1)]
		\item By Definition~\ref{def:encoding}, $ \tau $ cannot guard a branch of a choice.
			Then $ \tau $ (a)~does not guard the subterm of a conditional, or (b)~guards the subterm of a conditional without a measurement, or (c)~guards the subterm of a conditional with a measurement.
			\begin{compactenum}[(a)]
				\item Then $ T^* $ contains an unguarded subterm $ \tau.{\left( \ResOQS{T_{\tau}}{c} \right)} $ that is reduced in the step $ T^* \step T $.
					Because of Definition~\ref{def:encoding} and since $ T^* \steps \Enc{S^{**}} $ reduces only choices, then $ S^{**} $ contains an unguarded subterm $ \NewCQS{c}{P_{\mathsf{new}}} $ that was translated into $ \tau.{\left( \ResOQS{T_{\tau}}{c} \right)} $.
					Then there is some $ S' $ such that $ S^{**} \step S' = \ConfigCQS{\sigma^{**}}{\phi^{**}, c}{P_{\mathsf{new}}'\Set{\Subst{c}{d}}} $, where $ c, d $ are fresh and $ P_{\mathsf{new}}' $ is obtained from $ P^{**} $ by replacing $ \NewCQS{c}{P_{\mathsf{new}}} $ with $ P_{\mathsf{new}}\Set{\Subst{d}{c}} $.
					Then $ S \steps S' $.
					By Lemma~\ref{lem:nameInvariance}, then $ \Enc{S'} = \ConfigOQS{\ResOQS{\Enc{P_{\mathsf{new}}'\Set{\Subst{c}{d}}}}{{\left( \phi^{**}, c \right)}}}{\rho^{**}} = \ConfigOQS{\ResOQS{\Enc{P_{\mathsf{new}}'}}{{\left( \phi^{**}, d \right)}}}{\rho^{**}} $.
					Since $ T^* \step T $ is not in conflict with any of the steps of $ T^* \steps \Enc{S^{**}} $, $ T \steps \Enc{S'} $ performs the sequence $ T^* \steps \Enc{S^{**}} $ starting in $ T $ instead of $ T^* $.
				\item Then $ T^* $ contains an unguarded subterm $ \CondOQS{bv}{bv'}{\tau.T_{\tau}} $ that is reduced in the step $ T^* \step T $.
					Because of Definition~\ref{def:encoding} and since $ T^* \steps \Enc{S^{**}} $ reduces only choices, then $ S^{**} $ contains an unguarded subterm $ \CondCQS{bv}{bv'}{P_{\mathsf{cond}}} $ that was translated into $ \CondOQS{bv}{bv'}{\tau.T_{\tau}} $.
					Then there is some $ S' $ such that $ S^{**} \step S' = \ConfigCQS{\sigma^{**}}{\phi^{**}}{P_{\mathsf{cond}}'} $, where $ P_{\mathsf{cond}}' $ is obtained from $ P^{**} $ by replacing $ \CondCQS{bv}{bv'}{P_{\mathsf{cond}}} $ with $ P_{\mathsf{cond}} $.
					Then $ S \steps S' $.
					Since $ T^* \step T $ is not in conflict with any of the steps of $ T^* \steps \Enc{S^{**}} $, $ T \steps \Enc{S'} $ performs the sequence $ T^* \steps \Enc{S^{**}} $ starting in $ T $ instead of $ T^* $.
				\item By Definition~\ref{def:encoding}, then the $ \tau $ guards the subterm of a conditional within a choice.
					Since $ T^* \steps \Enc{S^{**}} $ and $ S^{**} $ is not a probability distribution (with $ r > 0 $), then $ T^* \steps \Enc{S^{**}} $ reduces this choice but not necessarily to the case $ j $ that contains the considered $ \tau $ guard.
					Accordingly, $ S \steps S^{**} $ contains a step that reduces the corresponding probability distribution, where the respective branch is not further reduced because $ T^* \steps \Enc{S^{**}} $ reduces only choices.
					Then we replace in $ S \steps S^{**} $ and $ T^* \steps \Enc{S^{**}} $ the respective steps reducing the probability distribution and the choice in question by a step that reduces this probability distribution and this choice to case $ j $.
					Note that, because $ T^* \step T $ measures $ \tilde{q} $, case $ j $ has a non-zero probability.
					Finally, we reorder the steps on the target such that $ S \steps S' $ and $ T^* \step T \steps \Enc{S'} $, where $ S' $ is obtained from $ S^{**} $ by adapting the chosen branch to case $ j $.
					Note that this is the only case, in that the state of $ S' $ is not $ \sigma^{**} $, because the adaptation of the branch to case $ j $ also requires to adapt the state accordingly.
			\end{compactenum}
		\item By Definition~\ref{def:encoding}, one of the following super-operators was reduced:
			\begin{compactdesc}
				\item[Case of $ U\List{\tilde{q}} $] By Definition~\ref{def:encoding}, $ U\List{\tilde{q}} $ cannot guard a branch of a choice nor can $ U\List{\tilde{q}} $ guard the subterm of a conditional.
					Then $ T^* $ contains an unguarded subterm $ U\List{\tilde{q}}.T_{\mathsf{U}} $ that is reduced in the step $ T^* \step T $.
					Because of Definition~\ref{def:encoding} and since $ T^* \steps \Enc{S^{**}} $ reduces only choices, then $ S^{**} $ contains an unguarded subterm $ \UnitaryCQS{\tilde{q}}{U}{P_{\mathsf{U}}} $ that was translated into $ U\List{\tilde{q}}.T_{\mathsf{U}} $.
					Then there are some $ S_{\mathsf{perm}}, S_{\mathsf{U}}, S' $ such that $ S^{**} \step S_{\mathsf{perm}} \step S_{\mathsf{U}} \step S' = \ConfigCQS{\sigma'}{\phi^{**}}{P_{\mathsf{U}}'} $, where $ S^{**} \step S_{\mathsf{perm}} $ is by Rule~\ruleRPermCQS and permutes the qubits in $ \tilde{q} $ to the front using a permutation $ \pi $, $ S_{\mathsf{perm}} \step S_{\mathsf{U}} $ performs the unitary transformation, $ S_{\mathsf{U}} \step S' $ permutes the qubits back to their original order, $ \sigma' = \Pi{\left( \Tuple{U \tensorProd \opI_{\Set{q_{\Length{\tilde{q}}}, \ldots, q_{n-1}}}}{\left( \Pi\Ket{\psi^{**}} \right)} \right)} = \Ket{\psi'} $, and $ P_{\mathsf{U}}' $ is obtained from $ P^{**} $ by replacing $ \UnitaryCQS{\tilde{q}}{U}{P_{\mathsf{U}}} $ with $ P_{\mathsf{U}} $.
					Then $ S \steps S' $ and $ \Enc{S'} = \ConfigOQS{\ResOQS{\Enc{P_{\mathsf{U}}'}}{\phi^{**}}}{\rho'} $, where $ \rho' = \StateTrans{U}{\tilde{q}}{\rho^{**}} = \OuterProduct{\psi'}{\psi'} $.
					Since $ T^* \step T $ is not in conflict with any of the steps of $ T^* \steps \Enc{S^{**}} $, $ T \steps \Enc{S'} $ performs the sequence $ T^* \steps \Enc{S^{**}} $ starting in $ T $ instead of $ T^* $.
				\item[Case of $ \opM\List{\tilde{q}} $] By Definition~\ref{def:encoding}, $ \opM\List{\tilde{q}} $ cannot guard a branch of a choice nor can $ \opM\List{\tilde{q}} $ guard the subterm of a conditional.
					Then $ T^* $ contains an unguarded subterm $ \opM\List{\tilde{q}}.T_{\mathsf{M}} $ that is reduced in the step $ T^* \step T $.
					Because of Definition~\ref{def:encoding} and since $ T^* \steps \Enc{S^{**}} $ reduces only choices, then $ S^{**} $ contains an unguarded subterm $ \MeasCQS{v}{\tilde{q}}{P_{\mathsf{M}}} $ that was translated into $ \opM\List{\tilde{q}}.T_{\mathsf{M}} $.
					Then there are some $ S_{\mathsf{perm}}, S_{\mathsf{M}}, S_{\mathsf{dist}}, S' $ such that $ S^{**} \step S_{\mathsf{perm}} \step S_{\mathsf{M}} \step S_{\mathsf{dist}} \step S' = \ConfigCQS{\sigma'}{\phi^{**}}{P_{\mathsf{M}}'\Set{\Subst{\Binary{j}}{v'}}} $, where $ S^{**} \step S_{\mathsf{perm}} $ is by Rule~\ruleRPermCQS and permutes the qubits in $ \tilde{q} $ to the front using a permutation $ \pi $, $ S_{\mathsf{perm}} \step S_{\mathsf{M}} $ performs the measurement, $ S_{\mathsf{M}} \step S_{\mathsf{dist}} $ resolves the resulting probability distribution to an arbitrary case $ j $ with non-zero probability, $ S_{\mathsf{dist}} \step S' $ permutes the qubits back to their original order, $ v' $ is fresh, $ \sigma' = \Ket{\psi'} $ is the result of measuring the qubits $ \tilde{q} $ in $ \sigma^{**} $, and $ P_{\mathsf{M}}' $ is obtained from $ P^{**} $ by replacing $ \MeasCQS{v}{\tilde{q}}{P_{\mathsf{M}}} $ with $ P_{\mathsf{M}}\Set{\Subst{v'}{v}} $.
					Then $ S \steps S' $.
					By Lemma~\ref{lem:binaryInvariance}, then $ \Enc{S'} = \ConfigOQS{\ResOQS{\Enc{P_{\mathsf{M}}'\Set{\Subst{\Binary{j}}{v'}}}}{\phi^{**}}}{\rho'} = \ConfigOQS{\ResOQS{\Enc{P_{\mathsf{M}}'}\Set{\Subst{\Binary{j}}{v'}}}{\phi^{**}}}{\rho'} $, where $ \StateTrans{\opE}{\Binary{j}, \tilde{q}}{\opM_{\tilde{q}}\Tilde{\rho^{**}}} $ sets the system state to $ \rho' = \OuterProduct{\psi'}{\psi'} $.
					Since $ T^* \step T $ is not in conflict with any of the steps of $ T^* \steps \Enc{S^{**}} $, $ T \steps \Enc{S'} $ performs the sequence $ T^* \steps \Enc{S^{**}} $ starting in $ T $ instead of $ T^* $ and one additional step to reduce the choice that is the outermost operator of $ T_{\mathsf{M}} $ to case $ j $.
				\item[Case of $ \opE_{\Ket{0}}\List{\var} $] By Definition~\ref{def:encoding}, $ \opE_{\Ket{0}}\List{\var} $ cannot guard a branch of a choice nor can $ \opE_{\Ket{0}}\List{\var} $ guard the subterm of a conditional.
					Then $ T^* $ contains an unguarded subterm of the form $ \opE_{\Ket{0}}\List{\var}.{\left( T_{\mathsf{qbit}}\Set{\Subst{q_{\Length{\var}}}{x}} \right)} $ that is reduced in the step $ T^* \step T $.
					Because of Definition~\ref{def:encoding} and since $ T^* \steps \Enc{S^{**}} $ reduces only choices, then $ S^{**} $ contains an unguarded subterm $ \QubitCQS{x}{P_{\mathsf{qbit}}} $ that was translated into $ \opE_{\Ket{0}}\List{\var}.{\left( T_{\mathsf{qbit}}\Set{\Subst{q_{\Length{\var}}}{x}} \right)} $.
					Then there is some $ S' $ such that $ S^{**} \step S' = \ConfigCQS{\sigma'}{\phi^{**}}{P_{\mathsf{qbit}}'\Set{\Subst{q_{\Length{\var}}}{y}}} $, where $ y $ is fresh, $ \sigma' = \Ket{\psi^{**}} \tensorProd \Ket{0} = \Ket{\psi'} $, and $ P_{\mathsf{qbit}}' $ is obtained from $ P^{**} $ by replacing $ \QubitCQS{x}{P_{\mathsf{qbit}}} $ with $ P_{\mathsf{qbit}}\Set{\Subst{y}{x}} $.
					Then $ S \steps S' $.
					By Lemma~\ref{lem:qubitInvariance}, then $ \Enc{S'} = \ConfigOQS{\ResOQS{\Enc{P_{\mathsf{qbit}}'\Set{\Subst{q_{\Length{\var}}}{y}}}}{\phi^{**}}}{\rho'} = \ConfigOQS{\ResOQS{\Enc{P_{\mathsf{qbit}}'}\Set{\Subst{q_{\Length{\var}}}{y}}}{\phi^{**}}}{\rho'} $, where $ \rho' = \StateTrans{\opE}{\Ket{0}, \var}{\rho^{**}} = \OuterProduct{\psi'}{\psi'} $.
					Since $ T^* \step T $ is not in conflict with any of the steps of $ T^* \steps \Enc{S^{**}} $, $ T \steps \Enc{S'} $ performs the sequence $ T^* \steps \Enc{S^{**}} $ starting in $ T $ instead of $ T^* $.
			\end{compactdesc}
		\item By Definition~\ref{def:encoding}, inputs or outputs cannot guard a branch of a choice nor can inputs or outputs guard the subterm of a conditional.
			Then $ T^* $ contains two unguarded subterms $ \InpOQS{c}{x}{T_{\mathsf{in}}} $ and $ \OutOQS{c}{q}{T_{\mathsf{out}}} $ that are reduced in the step $ T^* \step T $.
			Because of Definition~\ref{def:encoding} and since $ T^* \steps \Enc{S^{**}} $ reduces only choices, then $ S^{**} $ contains two unguarded subterms $ \InpCQS{c}{x}{P_{\mathsf{in}}} $ and $ \OutCQS{c}{q}{P_{\mathsf{out}}} $ that were translated into $ \InpOQS{c}{x}{T_{\mathsf{in}}} $ and $ \OutOQS{c}{q}{T_{\mathsf{out}}} $.
			Then there is some $ S' $ such that $ S^{**} \step S' = \ConfigCQS{\sigma^{**}}{\phi^{**}}{P_{\mathsf{com}}} $, where $ P_{\mathsf{com}} $ is obtained from $ P^{**} $ by replacing $ \InpCQS{c}{x}{P_{\mathsf{in}}} $ with $ P_{\mathsf{in}}\Set{\Subst{q}{x}} $ and $ \OutCQS{c}{q}{P_{\mathsf{out}}} $ with $ P_{\mathsf{out}} $.
			Then $ S \steps S' $ and $ \Enc{S'} = \ConfigOQS{\ResOQS{\Enc{P_{\mathsf{com}}}}{\phi^{**}}}{\rho^{**}} $.
			Since $ T^* \step T $ is not in conflict with any of the steps of $ T^* \steps \Enc{S^{**}} $, $ T \steps \Enc{S'} $ performs the sequence $ T^* \steps \Enc{S^{**}} $ starting in $ T $ instead of $ T^* $.
			\qedhere
	\end{compactenum}
\end{proof}

Divergence reflection follows from the above soundness proof.

\begin{lem}[Divergence Reflection, $ \Enc{\cdot} $]
	\label{lem:divergenceReflection}
	\begin{align*}
		\forall S \in \configCQS \logdot \Enc{S} \infiniteSteps \text{ implies } S \infiniteSteps
	\end{align*}
\end{lem}

\begin{proof}
	By the variant of soundness that we show in the proof of Lemma~\ref{lem:soundness}, for every sequence $ \Enc{S} \steps T $ there is some $ S' \in \configCQS $ such that $ S \steps S' $ and $ T \steps \Enc{S'} $, where the sequence $ S \steps S' $ is at least as long as $ T \steps \Enc{S'} $ (and often longer).
	Then for every sequence of target term steps there is a matching sequence of source term steps that is at least as long.
	This ensures divergence reflection.
\end{proof}

Success sensitiveness follows from the homomorphic translation of $ \success $ in Definition~\ref{def:encoding} and operational correspondence.

\begin{lem}[Success Sensitiveness, $ \Enc{\cdot} $]
	\label{lem:successSensitiveness}
	\begin{align*}
		\forall S \in \configCQS \logdot \ReachBarb{S}{\success} \text{ iff } \ReachBarb{\Enc{S}}{\success}
	\end{align*}
\end{lem}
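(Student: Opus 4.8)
The plan is to derive success sensitiveness from operational correspondence (Lemmas~\ref{app:completeness} and \ref{app:soundness}) together with the success sensitivity of $\preceq$ (the third clause of Definition~\ref{def:correspondenceSimulation}), using the homomorphic treatment of $\success$ as the bridge between barbs on the source and on the target. The one auxiliary fact I would establish first is a \emph{barb lemma}: for every configuration of the form $S = \cqpConf{\sigma}{\phi}{P}$ it holds that $\HasBarb{S}{\success}$ iff $\HasBarb{\cqpEnc{S}}{\success}$. Since $\sigma$, $\phi$, and the density matrix carry no term and since restriction does not guard, this reduces to showing that $P$ has an unguarded $\success$ iff $\cqpEnc{P}$ does, which I would prove by a straightforward structural induction on $P$: the base cases $P = \success$ and $P = \nul$ are immediate from $\cqpEnc{\success} = \success$ and $\cqpEnc{\nul} = \nil$, parallel composition is translated homomorphically, and every prefix of \CQPM is translated to a guard in \qCCS (in particular, in the measurement case the continuation sits behind the guard $\qccsM{\seq{q}}$ and inside the super-operator-guarded branches of the choice $\EncDist{\seq{q}; x; \cqpEnc{P}}$, so no new top-level $\success$ is exposed). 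I would also record the observation that $\success$ is never a prefix and is consumed by no rule, so an unguarded $\success$ is preserved along every target reduction; hence $\HasBarb{T}{\success}$ and $T \transstep T'$ imply $\HasBarb{T'}{\success}$.

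For the direction $\ReachBarb{S}{\success} \Rightarrow \ReachBarb{\cqpEnc{S}}{\success}$, I would unfold the barb to obtain $S \transstep S'$ with $\HasBarb{S'}{\success}$; as only non-distribution configurations can expose an unguarded $\success$, $S'$ has the form $\cqpConf{\sigma}{\phi}{P}$ and the barb lemma yields $\HasBarb{\cqpEnc{S'}}{\success}$, so $\ReachBarb{\cqpEnc{S'}}{\success}$. Operational completeness (Lemma~\ref{app:completeness}) gives $T$ with $\cqpEnc{S} \transstep T$ and $\cqpEnc{S'} \preceq T$, and since $\preceq$ is success sensitive I conclude $\ReachBarb{T}{\success}$ and therefore $\ReachBarb{\cqpEnc{S}}{\success}$.

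For the converse I would use the strengthened form of soundness actually proved inside Lemma~\ref{app:soundness}, namely that $\cqpEnc{S} \transstep T$ implies the existence of a non-distribution $S'$ with $S \transstep S'$ and $T \transstep \cqpEnc{S'}$ (equality rather than merely $\preceq$). Unfolding $\ReachBarb{\cqpEnc{S}}{\success}$ gives $\cqpEnc{S} \transstep T$ with $\HasBarb{T}{\success}$; preservation of unguarded $\success$ along $T \transstep \cqpEnc{S'}$ yields $\HasBarb{\cqpEnc{S'}}{\success}$, the barb lemma (reflection) gives $\HasBarb{S'}{\success}$, and $S \transstep S'$ then gives $\ReachBarb{S}{\success}$.

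The main obstacle I anticipate is the bookkeeping around probability distributions: a distribution configuration $\cqpProbDis{0 \leq i < 2^r}{p_i}{\cqpConf{\sigma_i}{\phi}{P\sub{i}{x}}}$ is translated so that every $\success$ sits behind the measurement super-operators inside $\EncDist{\seq{q}; x; \cqpEnc{P}}$ and is thus guarded, so the barb lemma genuinely fails for raw distributions. I must therefore justify that the configuration actually carrying the observable $\success$ is always a non-distribution one---on the source because $\cqpProbRule$ must fire before any $\success$ becomes unguarded, and on the target because the strengthened soundness delivers exactly such an $S'$---and confine the barb lemma to configurations of the form $\cqpConf{\sigma}{\phi}{P}$.
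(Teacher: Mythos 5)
Your proof follows essentially the same route as the paper's: the forward direction via operational completeness and the success sensitivity of $\preceq$, the converse via the strengthened soundness statement $T \transstep \cqpEnc{S'}$ proved inside Lemma~\ref{app:soundness}, both anchored on the static barb correspondence induced by $\cqpEnc{\success} = \success$. The only difference is that you make explicit two points the paper leaves implicit --- confining the barb lemma to non-distribution configurations (where it indeed would fail for raw distributions) and the preservation of an unguarded $\success$ along target reductions --- which is a refinement of detail rather than a different approach.
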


\begin{proof}
	By Definition~\ref{def:encoding}, $ \HasBarb{S^*}{\success} $ iff $ \HasBarb{\Enc{S^*}}{\success} $ for all $ S^* $.
	\begin{compactitem}
		\item If $ \ReachBarb{S}{\success} $, then $ S \steps S' $ and $ \HasBarb{S'}{\success} $.
			By Lemma~\ref{lem:completeness}, then $ \Enc{S} \steps T $ and $ \Enc{S'} \preceq T $.
			Since $ \preceq $ is success sensitive and $ \HasBarb{S'}{\success} $ implies $ \HasBarb{\Enc{S'}}{\success} $, then $ \HasBarb{T}{\success} $ and, thus, $ \ReachBarb{\Enc{S}}{\success} $.
		\item If $ \ReachBarb{\Enc{S}}{\success} $, then $ \Enc{S} \steps T $ and $ \HasBarb{T}{\success} $.
			By the proof of Lemma~\ref{lem:soundness}, then $ S \steps S' $ and $ T \steps \Enc{S'} $.
			Since $ \HasBarb{\Enc{S'}}{\success} $ implies $ \HasBarb{S'}{\success} $, then $ \HasBarb{S'}{\success} $ and, thus, $ \ReachBarb{S}{\success} $.
			\qedhere
	\end{compactitem}
\end{proof}

Compositionality follows directly from the encoding function, \ie as we can observe in Definition~\ref{def:encoding} every source term operator is translated in a compositional way.
With that we can show that the encoding $ \Enc{\cdot} $ satisfies the properties
\begin{enumerate*}[(1)]
	\item compositionality,
	\item name invariance,
	\item operational correspondence,
	\item divergence reflection, and
	\item success sensitiveness.
\end{enumerate*}

\begin{thm}
	\label{thm:goodEncoding}
	The encoding $ \Enc{\cdot} $ is good.
\end{thm}

\begin{proof}
	By Definition~\ref{def:encoding}, $ \Enc{\cdot} $ is compositional, because we can derive the required contexts from the right hand side of the equations by replacing the encodings of the respective sub-terms by holes $ \hole $.

	By Lemma~\ref{lem:nameInvariance}, $ \Enc{\cdot} $ is name invariant.

	By Lemma~\ref{lem:completeness} and Lemma~\ref{lem:soundness}, $ \Enc{\cdot} $ is operationally corresponding with respect to the success sensitive correspondence simulation $ \preceq $.

	By Lemma~\ref{lem:divergenceReflection}, $ \Enc{\cdot} $ reflects divergence.

	By Lemma~\ref{lem:successSensitiveness}, $ \Enc{\cdot} $ is success sensitive.
\end{proof}

By \cite{petersGlabbeek15}, Theorem~\ref{thm:goodEncoding} implies that there is a correspondence simulation that relates source terms $ S $ and their literal translations $ \Enc{S} $.
To refer to a more standard equivalence, this also implies that $ S $ and $ \Enc{S} $ are coupled similar (for the relevance of coupled similarity see \eg \cite{bispingNestmannPeters20}).
Proving operational correspondence \wrt a bisimulation would not significantly tighten the connection between the source and the target.
To really tighten the connection such that $ S $ and $ \Enc{S} $ are bisimilar, we need a stricter variant of operational correspondence and for that a more direct translation of probability distributions to avoid the problem discussed in Example~\ref{exa:correspondenceSimulation}.
Indeed \cite{FengDuanYing12} introduces probability distributions to \qCCS and a corresponding alternative of measurement that allows to translate this operator homomorphically.
However, in this study we are more concerned about the quality criteria.
Hence using them to compare languages that treat qubits fundamentally differently is more interesting here.
Moreover, to tighten the connection we would need a probabilistic version of operational correspondence and accordingly a probabilistic version of bisimulation.
Very recently we introduced probabilistic operational correspondence in \cite{schmittPeters23}.

To illustrate the encoding $ \Enc{\cdot} $ on a practical relevant example, we present the translation of the quantum teleportation protocol in Example~\ref{exa:teleportationSource}.

\begin{exa}
	\label{exa:teleportation}
	By Definition~\ref{def:encoding},
	\begin{align*}
		\Enc{S} ={} & \ConfigOQS{\tau.\Tuple{\Enc{\mathit{Alice}\Tuple{q_0, q_1}} \parOQS \Enc{\mathit{Bob}\Tuple{q_2}}}}{\rho_0}\\
		\Enc{\mathit{Alice}\Tuple{q_0, q_1}} ={} & \opCNot\List{q0, q_1}.\opH\List{q_0}.\opM\List{q_0, q_1}.\EncDist{q_0, q_1}{v_0}{\OutOQS{c}{q_0}{\OutOQS{c}{q_1}{\nilOQS}}}\\
		\Enc{\mathit{Bob}\Tuple{q_2}} ={} & \InpOQS{c}{x_0}{}\InpOQS{c}{x_1}{}\opM\List{x_0, x_1}.\EncDist{x_0, x_1}{v}{T_B}\\
		T_B ={} & \CondOQS{v}{00}{\tau.\success} \parOQS \CondOQS{v}{01}{\tau.\opX\List{q_2}.\success} \parOQS{}\\
		& \CondOQS{v}{10}{\tau.\opZ\List{q_2}.\success}  \parOQS \CondOQS{v}{11}{\tau.\opY\List{q_2}.\success}
	\end{align*}
	where $ \rho_0 = \OuterProduct{\psi_0}{\psi_0} $.
	By Figure~\ref{fig:semanticsOQS}, $ \Enc{S} $ can do the following sequence of steps to emulate the sequence in Example~\ref{exa:teleportationSource}
	\begin{align*}
		\Enc{S} &\step \ConfigOQS{\Enc{\mathit{Alice}\Tuple{q_0, q_1}} \parOQS \Enc{\mathit{Bob}\Tuple{q_2}}}{\rho_0}\\
		&\step \ConfigOQS{\opH\List{q_0}.\opM\List{q_0, q_1}.\EncDist{q_0, q_1}{v_0}{\OutOQS{c}{q_0}{\OutOQS{c}{q_1}{\nilOQS}}} \parOQS \Enc{\mathit{Bob}\Tuple{q_2}}}{\rho_1}\\
		&\step \ConfigOQS{\opM\List{q_0, q_1}.\EncDist{q_0, q_1}{v_0}{\OutOQS{c}{q_0}{\OutOQS{c}{q_1}{\nilOQS}}} \parOQS \Enc{\mathit{Bob}\Tuple{q_2}}}{\rho_2}\\
		&\step \ConfigOQS{\EncDist{q_0, q_1}{v_0}{\OutOQS{c}{q_0}{\OutOQS{c}{q_1}{\nilOQS}}} \parOQS \Enc{\mathit{Bob}\Tuple{q_2}}}{\rho_3} = T^*
	\end{align*}
	where $ \rho_1 = \opCNot\List{q_0, q_1}\Tuple{\rho_0} $, $ \rho_2 = \opH\List{q_0}\Tuple{\rho_1} $, $ \rho_3 = \opM\List{q_0, q_1}\Tuple{\rho_2} $, and the state $ \rho_3 $ corresponds to $ \Ket{\psi_2} = q_0, q_1, q_2 = \frac{1}{2}\Ket{001} + \frac{1}{2}\Ket{010} - \frac{1}{2}\Ket{101} - \frac{1}{2}\Ket{110} $ in Example~\ref{exa:teleportationSource}.
	\begin{align*}
		\EncDist{q_0, q_1}{v_0}{\OutOQS{c}{q_0}{\OutOQS{c}{q_1}{\nilOQS}}} ={}& \Tuple{\CondOQS{00}{\opM\List{q_0, q_1}}{\tau.\OutOQS{c}{q_0}{\OutOQS{c}{q_1}{\nilOQS}}}} +{}\\
		& \Tuple{\CondOQS{01}{\opM\List{q_0, q_1}}{\tau.\OutOQS{c}{q_0}{\OutOQS{c}{q_1}{\nilOQS}}}} +{}\\
		& \Tuple{\CondOQS{10}{\opM\List{q_0, q_1}}{\tau.\OutOQS{c}{q_0}{\OutOQS{c}{q_1}{\nilOQS}}}} +{}\\
		& \Tuple{\CondOQS{11}{\opM\List{q_0, q_1}}{\tau.\OutOQS{c}{q_0}{\OutOQS{c}{q_1}{\nilOQS}}}} +{}
	\end{align*}
	As in Example~\ref{exa:teleportationSource} we choose again the first branch:
	\begin{align*}
		T^{*} \step{}& \ConfigOQS{\OutOQS{c}{q_0}{\OutOQS{c}{q_1}{\nilOQS}} \parOQS \Enc{\mathit{Bob}\Tuple{q_2}}}{\rho_4}\\
		\step{}& \ConfigOQS{\OutOQS{c}{q_1}{\nilOQS} \parOQS \InpOQS{c}{x_1}{}\opM\List{q_0, x_1}.\EncDist{q_0, x_1}{v}{T_B}}{\rho_4}\\
		\step{}& \ConfigOQS{\opM\List{q_0, q_1}.\EncDist{q_0, q_1}{v}{T_B}}{\rho_4}\\
		\step{}& \ConfigOQS{\EncDist{q_0, q_1}{v}{T_B}}{\rho_4} = T^{**}
	\end{align*}
	where $ \rho_4 = \opE_{0, q_0, q_1}\Tuple{\rho_3} = \OuterProduct{001}{001} $.
	Note that the measurement in the last of the above steps has no effect on the state, since $ q_0 $ and $ q_1 $ are already both in the base state $ \Ket{0} $.
	Because of that $ \EncDist{q_0, q_1}{v}{T_B} $ can only reduce to the first state of $ T_B $.
	\begin{align*}
		T^{**} &\step \ConfigOQS{\begin{array}{l}
				\CondOQS{00}{00}{\tau.\success} \parOQS \CondOQS{00}{01}{\tau.\opX\List{q_2}.\success} \parOQS{}\\
				\CondOQS{00}{10}{\tau.\opZ\List{q_2}.\success} \parOQS \CondOQS{00}{11}{\tau.\opY\List{q_2}.\success}
			\end{array}}{\rho_4}\\
			&\step \ConfigOQS{\begin{array}{l}
				\success \parOQS \CondOQS{00}{01}{\tau.\opX\List{q_2}.\success} \parOQS{}\\
				\CondOQS{00}{10}{\tau.\opZ\List{q_2}.\success} \parOQS \CondOQS{00}{11}{\tau.\opY\List{q_2}.\success}
			\end{array}}{\rho_4} \tag*{\qed}
	\end{align*}
\end{exa}


\section{Separating Quantum Based Systems}
\label{sec:separation}

Since super-operators are more expressive than unitary transformations, an encoding from \qCCS or \OQS into \CQP or \CQS is more difficult.

\begin{exa}[Phase Flip Channel]
	\label{exa:phaseFlipChannel}
	Consider the operator $ \opQ\Tuple{\rho} = E_0 \rho E_0^{\dagger} + E_1 \rho E_1^{\dagger} $, where $ E_0 = \sqrt{0.5}\opI = \begin{pmatrix} \sqrt{0.5} & 0\\ 0 & \sqrt{0.5} \end{pmatrix} $ and $ E_1 = \sqrt{0.5}\opZ = \begin{pmatrix} \sqrt{0.5} & 0\\ 0 & -\sqrt{0.5} \end{pmatrix} $, that is presented under the name \emph{phase flip} channel in \cite[Section~8.3.3]{NielsenChuang10} (for $ p = 0.5 $) as an operator to introduce noise.
	Note that $ E_0^{\dagger}E_0 + E_1^{\dagger}E_1 = \opI $.
	By Definition~\ref{def:superopSum}, $ \opQ $ is then a trace-preserving super-operator (in sum representation).
	$ \opQ $ sometimes behaves as identity, in particular we have $ \opQ\Tuple{\OuterProduct{0}{0}} = \OuterProduct{0}{0} $ and $ \opQ\Tuple{\OuterProduct{1}{1}} = \OuterProduct{1}{1} $, and sometimes it changes a qubit, in particular we have $ \opQ\Tuple{\OuterProduct{+}{+}} = \begin{pmatrix} 0.5 & 0\\ 0 & 0.5 \end{pmatrix} = \opQ\Tuple{\OuterProduct{-}{-}} $.
	\qed
\end{exa}

It is easy to show that there is no unitary transformation with the behaviour of $ \opQ $.
However, to prove that there is no encoding from \qCCS into \CQP, we have to show additionally that this operator can also not be emulated using measurement.
Therefore, we use the fact that measurement destroys entanglement.
More precisely, we consider 2-qubit systems and use a bell pair as starting state to prove that even with measurement the behaviour of $ \opQ $ cannot be emulated.

\begin{exa}[Counterexample]
	\label{exa:counterexample}
	Consider $ \opQ $ of Example~\ref{exa:phaseFlipChannel} applied to the second bit of a 2-qubit system:
	\begin{align*}
		\opQ_2\Tuple{\rho} ={}& \left( \opI \otimes E_0 \right) \rho \left( \opI \otimes E_0 \right)^{\dagger} + \left( \opI \otimes E_1 \right) \rho \left( \opI \otimes E_1 \right)^{\dagger}\\
		={}&
		\begin{pmatrix}
			\sqrt{0.5} & 0 & 0 & 0\\
			0 & \sqrt{0.5} & 0 & 0\\
			0 & 0 & \sqrt{0.5} & 0\\
			0 & 0 & 0 & \sqrt{0.5}
		\end{pmatrix}
		\rho
		\begin{pmatrix}
			\sqrt{0.5} & 0 & 0 & 0\\
			0 & \sqrt{0.5} & 0 & 0\\
			0 & 0 & \sqrt{0.5} & 0\\
			0 & 0 & 0 & \sqrt{0.5}
		\end{pmatrix}
		+{}\\
		&\begin{pmatrix}
			\sqrt{0.5} & 0 & 0 & 0\\
			0 & -\sqrt{0.5} & 0 & 0\\
			0 & 0 & \sqrt{0.5} & 0\\
			0 & 0 & 0 & -\sqrt{0.5}
		\end{pmatrix}
		\rho
		\begin{pmatrix}
			\sqrt{0.5} & 0 & 0 & 0\\
			0 & -\sqrt{0.5} & 0 & 0\\
			0 & 0 & \sqrt{0.5} & 0\\
			0 & 0 & 0 & -\sqrt{0.5}
		\end{pmatrix}
	\end{align*}
	Accordingly, $ \opQ_2\Tuple{x} = x $ for all $ x \in \Set{\OuterProduct{00}{00}, \OuterProduct{01}{01}, \OuterProduct{10}{10}, \OuterProduct{11}{11}} $, $ \opQ_2\Tuple{\OuterProduct{0+}{0+}} = \begin{pmatrix} 0.5 & 0 & 0 & 0\\ 0 & 0.5 & 0 & 0\\ 0 & 0 & 0 & 0\\ 0 & 0 & 0 & 0 \end{pmatrix} $, and $ \opQ_2\Tuple{\begin{pmatrix} 0.5 & 0 & 0 & 0.5\\ 0 & 0 & 0 & 0\\ 0 & 0 & 0 & 0\\ 0.5 & 0 & 0 & 0.5 \end{pmatrix}} = \begin{pmatrix} 0.5 & 0 & 0 & 0\\ 0 & 0 & 0 & 0\\ 0 & 0 & 0 & 0\\ 0 & 0 & 0 & 0.5 \end{pmatrix} $ for the bell pair that resembles $ \frac{1}{\sqrt{2}}\Ket{00} + \frac{1}{\sqrt{2}}\Ket{11} $.
	To observe this strange behaviour of $ \opQ $ we measure directly or apply Hadamard and then measure.
	Therefore we use the \OQS-terms
	\allowdisplaybreaks
	\begin{align*}
		S_{00} ={}& \CondOQS{00}{\opM\List{q_0, q_1}}{\tau.\success} + \CondOQS{01}{\opM\List{q_0, q_1}}{\tau.\nilOQS} +{}\\
		& \CondOQS{10}{\opM\List{q_0, q_1}}{\tau.\nilOQS} + \CondOQS{11}{\opM\List{q_0, q_1}}{\tau.\nilOQS}\\
		S_{01} ={}& \CondOQS{00}{\opM\List{q_0, q_1}}{\tau.\nilOQS} + \CondOQS{01}{\opM\List{q_0, q_1}}{\tau.\success} +{}\\
		& \CondOQS{10}{\opM\List{q_0, q_1}}{\tau.\nilOQS} + \CondOQS{11}{\opM\List{q_0, q_1}}{\tau.\nilOQS}\\
		S_{10} ={}& \CondOQS{00}{\opM\List{q_0, q_1}}{\tau.\nilOQS} + \CondOQS{01}{\opM\List{q_0, q_1}}{\tau.\nilOQS} +{}\\
		& \CondOQS{10}{\opM\List{q_0, q_1}}{\tau.\success} +  \CondOQS{11}{\opM\List{q_0, q_1}}{\tau.\nilOQS}\\
		S_{11} ={}& \CondOQS{00}{\opM\List{q_0, q_1}}{\tau.\nilOQS} + \CondOQS{01}{\opM\List{q_0, q_1}}{\tau.\nilOQS} +{}\\
		& \CondOQS{10}{\opM\List{q_0, q_1}}{\tau.\nilOQS} + \CondOQS{11}{\opM\List{q_0, q_1}}{\tau.\success}\\
		S_{00+11} ={}& \CondOQS{00}{\opM\List{q_0, q_1}}{\tau.\success} + \CondOQS{01}{\opM\List{q_0, q_1}}{\tau.\nilOQS} +{}\\
		& \CondOQS{10}{\opM\List{q_0, q_1}}{\tau.\nilOQS}  +\CondOQS{11}{\opM\List{q_0, q_1}}{\tau.\success}
	\end{align*}
	such that $ S_{ij} $ reaches success if and only if $ ij $ is measured and $ S_{00+11} $ reaches success if and only if $ 00 $ or $ 11 $ is measured.
	From that we build the \OQS-configurations
	\begin{align*}
		\mathsf{S_{ce1}}\Tuple{\rho} &= \ConfigOQS{\opQ\List{q_1}.S_{00}}{\rho}\\
		\mathsf{S_{ce2}}\Tuple{\rho} &= \ConfigOQS{\opQ\List{q_1}.S_{01}}{\rho}\\
		\mathsf{S_{ce3}}\Tuple{\rho} &= \ConfigOQS{\opQ\List{q_1}.S_{10}}{\rho}\\
		\mathsf{S_{ce4}}\Tuple{\rho} &= \ConfigOQS{\opQ\List{q_1}.S_{11}}{\rho}\\
		\mathsf{S_{ce5}}\Tuple{\rho} &= \ConfigOQS{\opQ\List{q_1}.\opH\List{q_1}.S_{01}}{\rho}\\
		\mathsf{S_{ce6}}\Tuple{\rho} &= \ConfigOQS{\opQ\List{q_1}.S_{00+11}}{\rho}
	\end{align*}
	for the 2-qubit system $ \rho = q_0, q_1 $.
	In particular, we use that $ \mathsf{S_{ce1}}\Tuple{\OuterProduct{00}{00}} $, $ \mathsf{S_{ce2}}\Tuple{\OuterProduct{01}{01}} $, $ \mathsf{S_{ce3}}\Tuple{\OuterProduct{10}{10}} $, and $ \mathsf{S_{ce4}}\Tuple{\OuterProduct{11}{11}} $ must reach success, whereas $ \mathsf{S_{ce5}}\Tuple{\OuterProduct{0+}{0+}} $ may but not must reach success, to show that $ \opQ $ cannot be emulated by unitary transformations.
	Since Hadamard $ \opH $ applied to $ \opQ\Tuple{\OuterProduct{+}{+}} $ is again $ \opQ\Tuple{\OuterProduct{+}{+}} $, we measure in $ \mathsf{S_{ce5}} $ after applying $ \opQ\List{q_1}.\opH\List{q_1} $ either $ 00 $ or $ 01 $ with equal probability.
	In the latter case success $ \success $ is unguarded, whereas the former case does not unguard success, \ie $ \mathsf{S_{ce5}}\Tuple{\OuterProduct{0+}{0+}} $ may but not must reach success.
	Finally, we use that $ \mathsf{S_{ce6}} $ for the bell pair that resembles $ \frac{1}{\sqrt{2}}\Ket{00} + \frac{1}{\sqrt{2}}\Ket{11} $ must reach success, to show that also measurement does not allow to emulate $ \opQ $.
	Note that the first qubit is only relevant for this last step, \ie for $ \mathsf{S_{ce6}} $.
	\qed
\end{exa}

An encoding from \qCCS or \OQS into \CQP or \CQS needs to emulate the behaviour of $ \opQ\List{q_1} $.
Since \CQP and \CQS do not allow for super-operators but only unitary transformations and since there is no unitary transformation with the same effect as $ \opQ\List{q_1} $, there is no good encoding from \OQS into \CQS or \qCCS into \CQP.
To prove this separation result we borrow a technical result from \cite{petersNestmannGoltz13}.
By success sensitiveness, a source term $ S $ reaches success if and only if its literal translation $ \Enc{S} $ reaches success.
As a consequence $ S $ cannot reach success if and only if $ \Enc{S} $ cannot reach success.
The next lemma shows that operational correspondence and success sensitiveness also imply that $ S $ must reach success, \ie reaches success in all finite traces, if and only if $ \Enc{S} $ must reach success.

\begin{lem}
	\label{lem:mustSuccessSensitiveness}
	For all operationally corresponding, success sensitive encodings $ \Enc{\cdot} $ \wrt some success respecting preorder $ \preceq $ on the target and for all source configurations $ S $, $ S $ must reach success in all finite traces iff $ \Enc{S} $ must reach success in all finite traces.
\end{lem}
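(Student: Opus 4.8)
The plan is to prove both directions of the biconditional by combining success sensitiveness with operational correspondence, following the technique from \cite{petersNestmannGoltz13}. The key conceptual point is that ``must reach success in all finite traces'' is a statement about \emph{all} maximal (finite) execution paths, not just the existence of one successful path. So where plain success sensitiveness (Definition~\ref{def:good}) relates the \emph{existence} of a success-reaching computation, here I must relate the \emph{universal} quantification over computations. The bridge is operational correspondence: soundness guarantees that every target computation is mirrored (up to $\preceq$ and some catch-up steps) by a source computation, and completeness guarantees the converse. Since $\preceq$ is success respecting, the reachability of success is preserved along these correspondences, and this is what lets me transfer a ``must'' statement across the encoding.

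First I would fix notation: say $S$ (resp.\ $\enc{S}$) \emph{must reach success in all finite traces} if every finite maximal reduction sequence starting from $S$ (resp.\ $\enc{S}$) passes through a configuration $C$ with $\HasBarb{C}{\success}$, equivalently every finite trace contains a configuration from which $\ReachBarb{\cdot}{\success}$ holds. I would then argue the contrapositive in each direction, since ``must'' is most naturally refuted by exhibiting a single bad (success-free) trace. For the direction assuming $S$ must reach success and concluding $\enc{S}$ must, suppose towards a contradiction that $\enc{S}$ has a finite trace $\enc{S} \transstep T$ that never reaches success, with $T$ unable to reach success, i.e.\ $\lnot\ReachBarb{T}{\success}$. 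By operational soundness (Lemma~\ref{app:soundness}) there exist $S'$ and $T'$ with $S \transstep S'$, $T \transstep T'$, and $\enc{S'} \preceq T'$. Because $\preceq$ is success respecting and $T$ (hence $T'$, reached from $T$) cannot reach success, $\enc{S'}$ cannot reach success either, so by success sensitiveness (Lemma~\ref{lem:successSensitiveness}) $S'$ cannot reach success. This yields a finite source trace $S \transstep S'$ avoiding success and extendable to a maximal success-free trace, contradicting that $S$ must reach success. The reverse direction is symmetric using operational completeness (Lemma~\ref{app:completeness}): a success-free maximal source trace lifts, via completeness and the success-respecting property of $\preceq$ together with success sensitiveness, to a success-free target computation, contradicting that $\enc{S}$ must reach success.

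The step I expect to be the main obstacle is handling the discrepancy between \emph{reaching} a success-free configuration and exhibiting a genuine \emph{maximal} success-free trace, i.e.\ turning ``there is a finite trace to a configuration that cannot reach success'' into ``there is a maximal finite trace with no success barb anywhere.'' The subtlety is that soundness only provides a finite matching prefix $S \transstep S'$ plus catch-up steps $T \transstep T'$; I must argue that once we land in a region where success is unreachable (and stays unreachable under $\preceq$), this witnesses a whole maximal trace avoiding success, using that the encoding reflects divergence (Lemma~\ref{app:divergenceReflectionCQP}) so that infinite target behaviour corresponds to infinite source behaviour and does not secretly create success. I would therefore lean on the combination of soundness, success sensitiveness, and the success-respecting preorder to show that unreachability of success is transported along the correspondence, and invoke the result borrowed from \cite{petersNestmannGoltz13} to package the ``must'' equivalence cleanly once these one-step transport properties are in place.
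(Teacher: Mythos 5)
Your core argument is the same as the paper's: argue each direction by contradiction, use operational soundness to pull a success-free target computation back to a reachable source configuration (and completeness for the converse direction), and transfer ``cannot reach success'' across $\enc{S''} \preceq T''$ via the success-respecting property of $\preceq$ and then across the encoding via success sensitiveness. That part is correct and matches the paper's proof step for step.

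Two points where you diverge, both avoidable. First, the lemma is a statement about \emph{arbitrary} operationally corresponding and success sensitive encodings --- it is used in Section~6 to rule out encodings from \qCCS into \CQP --- so you may only appeal to Definitions~\ref{def:operationalCorrespondence} and the abstract success-sensitiveness of $\enc{\cdot}$ and $\preceq$, not to Lemmas~\ref{app:soundness}, \ref{app:completeness}, \ref{lem:successSensitiveness} or \ref{app:divergenceReflectionCQP}, which concern the specific encoding $\cqpEnc{\cdot}$. In particular, divergence reflection is \emph{not} among the hypotheses of this lemma, so the appeal to it in your last paragraph is not available. Second, the obstacle you anticipate there does not arise under the reading of ``must reach success in all finite traces'' that the paper uses, namely: $\ReachBarb{S'}{\success}$ holds for \emph{every} $S'$ with $S \transstep S'$. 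Its negation is simply the existence of a reachable configuration that cannot reach success --- no maximal trace needs to be constructed and no divergence analysis is needed. Your detour through maximal success-free traces is where your proof would get stuck (you cannot extend $S \transstep S'$ to a maximal finite trace without termination assumptions), so you should drop that reading and work with the reachability formulation; with that change the two-paragraph core of your proposal is exactly the paper's proof.
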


\begin{proof}
	We consider both directions separately.
	\begin{compactdesc}
		\item[if $ S $ must reach success then also $ \Enc{S} $] Assume the opposite, \ie there is an encoding that satisfies the criteria operational soundness and success sensitiveness, $ \preceq $ is success respecting, and there is some source configuration $ S $ such that for all $ S' $ with $ S \steps S' $ we have $ \ReachBarb{S'}{\success} $, \ie $ S $ must reach success in all finite traces, but there is some target configuration $ T $ such that $ \Enc{S} \steps T $ and $ T $ cannot reach success.
		
		Since $ \Enc{\cdot} $ is operationally sound, $ \Enc{S} \steps T $ implies that there exist some $ S'', T'' $ such that $ S \steps S'' $, $ T \steps T'' $, and $ \Enc{S''} \preceq T'' $.
		Since $ T $ cannot reach success and $ T \steps T'' $, then $ T'' $ cannot reach success.
		Since $ \preceq $ respects success, $ \Enc{S''} \preceq T'' $ and that $ T'' $ cannot reach success imply that $ \Enc{S''} $ cannot reach success.
		Because $ \Enc{\cdot} $ is success sensitive, then also $ S'' $ cannot reach success, which contradicts the assumption that $ S $ must reach success.
		We conclude that if $ S $ must reach success in all finite traces then $ \Enc{S} $ must reach success in all finite traces.
		\item[if $ \Enc{S} $ must reach success then also $ S $] Assume the opposite, \ie there is an encoding that satisfies the criteria operational completeness and success sensitiveness, $ \preceq $ is success respecting, and there is some source configuration $ S $ such that for all $ T $ with $ \Enc{S} \steps T $ we have $ \ReachBarb{T}{\success} $, \ie $ \Enc{S} $ must reach success in all finite traces, but there is some source configuration $ S' $ such that $ S \steps S' $ and $ S' $ cannot reach success.
		
		Since $ \Enc{\cdot} $ is operationally complete, $ S \steps S' $ implies that there exists some $ T' $ such that $ \Enc{S} \steps T' $ and $ \Enc{S'} \preceq T' $.
		Because $ \Enc{\cdot} $ is success sensitive and $ S' $ cannot reach success, then also $ \Enc{S'} $ cannot reach success.
		Since $ \preceq $ respects success, $ \Enc{S'} \preceq T' $ and that $ \Enc{S'} $ cannot reach success imply that $ T' $ cannot reach success.
		Since $ T' $ cannot reach success and $ \Enc{S} \steps T' $, this contradicts the assumption that $ \Enc{S} $ must reach success.
		We conclude that if $ \Enc{S} $ must reach success in all finite traces then $ S $ must reach success in all finite traces.
		\qedhere
	\end{compactdesc}
\end{proof}

To prove the non-existence of an encoding from \OQS into \CQS, we use $ \opQ $ on a 2-quit system as described in Example~\ref{exa:counterexample} as a counterexample and show that it is not possible in \CQS to emulate the behaviour of $ \opQ\List{q_1} $ modulo compositionality, operational correspondence \wrt a success respecting preorder, and success sensitiveness.
More precisely, since there is no unitary transformation with this behaviour and also measurement or additional qubits do not help to emulate this behaviour on the state of the qubit (see the proof of Theorem~\ref{thm:separation}), there is no encoding from \OQS into \CQS that satisfies compositionality, operational correspondence \wrt a success respecting preorder, and success sensitiveness.

\begin{thm}
	\label{thm:separation}
	There is no encoding from \OQS into \CQS that satisfies compositionality, operational correspondence \wrt a success respecting preorder, and success sensitiveness.
\end{thm}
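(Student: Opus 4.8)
The plan is to argue by contradiction. Suppose some encoding $\enc{\cdot}$ from \qCCS into \CQP is compositional, operationally corresponding \wrt some success-respecting preorder $\preceq$, and success sensitive. I would instantiate the counterexample of Example~\ref{exa:counterexample} at the three pure states $\qubit{0}$, $\qubit{+}$, and $\qubit{-}$, \ie consider the source configurations $\mathsf{S_{ce}}{\left( \outerProd{0}{0} \right)}$, $\mathsf{S_{ce}}{\left( \outerProd{+}{+} \right)}$, and $\mathsf{S_{ce}}{\left( \outerProd{-}{-} \right)}$. By Example~\ref{exa:counterexample}, $\mathsf{S_{ce}}{\left( \outerProd{0}{0} \right)}$ reaches success while $\mathsf{S_{ce}}{\left( \outerProd{+}{+} \right)}$ and $\mathsf{S_{ce}}{\left( \outerProd{-}{-} \right)}$ cannot reach success. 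The feature I want to exploit is the linear dependence $\qubit{0} = \frac{1}{\sqrt{2}}{\left( \qubit{+} + \qubit{-} \right)}$; the fourth state $\qubit{1}$ of Example~\ref{exa:counterexample} is not needed for the core contradiction.

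First I would transport these success properties to the target. By success sensitiveness, $\ReachBarb{\enc{\mathsf{S_{ce}}{\left( \outerProd{0}{0} \right)}}}{\success}$ holds, whereas neither $\enc{\mathsf{S_{ce}}{\left( \outerProd{+}{+} \right)}}$ nor $\enc{\mathsf{S_{ce}}{\left( \outerProd{-}{-} \right)}}$ can reach success (Lemma~\ref{lem:mustSuccessSensitiveness} additionally yields that $\enc{\mathsf{S_{ce}}{\left( \outerProd{0}{0} \right)}}$ must reach success, which I keep in reserve). Next I would use compositionality to pin down the shape of the three target configurations: the three source configurations share one and the same process, namely $\cont{\qccsOp{\mathcal{Q}}{q}}{P_{\mathsf{ch}}}$ where $P_{\mathsf{ch}}$ is the measurement-guarded choice of Example~\ref{exa:counterexample}, and differ only in their density matrix. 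Hence their images under $\enc{\cdot}$ run one fixed \CQP process $Q = \enc{\cont{\qccsOp{\mathcal{Q}}{q}}{P_{\mathsf{ch}}}}$ on three initial \CQP states.

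The heart of the argument is a linearity obstruction inside \CQP. Since \CQP describes pure states by vectors, every finite reduction path of $Q$ that ends in a configuration with an unguarded $\success$ is built from unitary transformations, qubit creations (tensoring with $\qubit{0}$), permutations, communications, and resolved measurement branches. Composing the unitaries and the selected measurement projectors along such a path $\pi$ yields a single linear operator $M_\pi$ on the state vector, and the path is feasible from an initial vector $\qubit{\psi}$ exactly when $M_\pi {\left( \qubit{\psi} \tensorProd \qubit{0 \cdots 0} \right)} \neq 0$, because \CQP's rule for resolving a probabilistic branch requires a non-zero probability. Thus $Q$ can reach success from $\qubit{\psi}$ iff some successful path $\pi$ does not annihilate the embedded initial vector. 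As $\enc{\mathsf{S_{ce}}{\left( \outerProd{+}{+} \right)}}$ and $\enc{\mathsf{S_{ce}}{\left( \outerProd{-}{-} \right)}}$ cannot reach success, every successful $M_\pi$ annihilates the \CQP representations of both $\qubit{+}$ and $\qubit{-}$; by linearity it then annihilates the representation of $\qubit{0} = \frac{1}{\sqrt{2}}{\left( \qubit{+} + \qubit{-} \right)}$ as well, so $\enc{\mathsf{S_{ce}}{\left( \outerProd{0}{0} \right)}}$ cannot reach success either, contradicting $\ReachBarb{\enc{\mathsf{S_{ce}}{\left( \outerProd{0}{0} \right)}}}{\success}$.

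The step that needs genuine care, and which I expect to be the main obstacle, is justifying that the three initial \CQP states really are the images of $\qubit{0}$, $\qubit{+}$, $\qubit{-}$ under one common linear embedding, rather than three unrelated states that a malicious encoding could separate by fiat: compositionality alone permits the configuration context to depend on the density matrix. To close this gap I would appeal to operational correspondence on the reversible unitary super-operators that interconnect the states. The source steps $\qccsConf{\cont{\qccsOp{\hada}{q}}{P_{\mathsf{ch}}}}{\outerProd{0}{0}} \step \mathsf{S_{ce}}{\left( \outerProd{+}{+} \right)}$ and its analogues, together with the involutivity $\hada\hada = \I$ (so that applying $\qccsOp{\hada}{q}$ twice is the identity in \qCCS), force the \CQP emulation of $\qccsOp{\hada}{q}$ to be invertible on the reachable state space; an invertible transformation assembled from \CQP operations cannot genuinely collapse a qubit by measurement and is therefore effectively unitary, so the assignment of a source-qubit state to its \CQP representation is (up to global phase and fixed ancillas) linear, exactly as the obstruction requires. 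Discharging this while accounting for ancilla qubits, relative phases, and intermediate measurements that a clever encoding might insert, \ie making precise the paper's claim that measurement or additional qubits do not help, is where the bulk of the bookkeeping lies.
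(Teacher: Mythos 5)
Your proposal shares the paper's scaffolding---the same counterexample $\qccsOp{\mathcal{Q}}{q}$ from Example~\ref{exa:counterexample}, the same transport of reachability of $\success$ through success sensitiveness and Lemma~\ref{lem:mustSuccessSensitiveness}, and the same use of compositionality to obtain one fixed target process run on differently translated initial states---but the step that actually produces the contradiction is genuinely different. The paper collapses each branch of the target's behaviour into a single $2\times 2$ unitary and solves for its entries: must-reaching $\success$ on $\qubit{0}$ forces the first column, failure to reach it on $\qubit{+}$ forces the second, and the resulting matrix $\begin{pmatrix}1 & -1\\ 0 & \sqrt{2}\end{pmatrix}$ sends $\qubit{-}$ (or $\qubit{1}$, in the measurement case) to a vector of norm different from $1$; this requires a case split on whether branching arises from communication or from measurement, plus the auxiliary claim that $(1,0)^T$ and $(0,1)^T$ are the unique states on which $T_{\mathcal{M}}$ must, respectively cannot, unguard $\success$. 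Your linearity obstruction---every successful path $\pi$ determines a linear operator $M_\pi$ built from unitaries, ancilla embeddings and unnormalised projectors; $\pi$ is feasible from $v$ exactly when $M_\pi v \neq 0$; annihilating the images of $\qubit{+}$ and $\qubit{-}$ therefore annihilates that of $\qubit{0}$---is sound, treats both kinds of branching uniformly, needs only \emph{can}-reach for $\qubit{0}$ and \emph{cannot}-reach for $\qubit{+}$ and $\qubit{-}$, and dispenses with $\qubit{1}$ and the case split entirely; it is the cleaner and more robust core. Both arguments, however, stand or fall on the same delicate point that you correctly isolate: that the translations of $\outerProd{0}{0}$, $\outerProd{+}{+}$ and $\outerProd{-}{-}$ are images of $\qubit{0}$, $\qubit{+}$, $\qubit{-}$ under one common linear embedding. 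The paper closes this gap with an informal physical argument (the encoding cannot read the entries of $\rho$, so it must either use the qubit directly or measure it, yielding a $1$-qubit $\sigma$); your closure via the involutive Hadamard steps is a different and plausible device but no more rigorous: note that $\qccsConf{\cont{\qccsOp{\hada}{q}}{P}}{\outerProd{0}{0}}$ reduces to $\qccsConf{P}{\outerProd{+}{+}}$ rather than to $\mathsf{S_{ce}}{\left(\outerProd{+}{+}\right)}$, so the configurations you relate by $\hada$ are not literally the ones compositionality speaks about, and the inference ``invertible on the reachable states, hence effectively unitary, hence linear'' still needs real work in the presence of intermediate measurements and the fact that $\preceq$ is only a correspondence simulation.
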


\begin{proof}
	The proof is by contradiction, \ie we assume that there is an encoding $ \Enc{\cdot} $ from \OQS into \CQS that satisfies compositionality, operational correspondence \wrt a success respecting preorder, and success sensitiveness.
	In \OQS we start with a configuration that contains two qubits (represented as a density matrix in $ \rho $).
	The encoding translates this \OQS-configuration into a \CQS-configuration such that its state is captured in a vector $ \sigma $.
	The encoding may use the qubits inside $ \rho $ directly for $ \sigma $, or it may measure these qubits and uses the information gained in this measurement to construct $ \sigma $.
	Remember that it is impossible to determine the exact state of a qubit and hence the entries for the density matrix.
	Using the original qubits directly results in a 2-qubit vector $ \sigma $.
	From measuring the original qubits we cannot gain more than two bit information such that we again capture all the information in a 2-qubit vector $ \sigma $.
	In other words, we can assume that the encoding translates a 2-qubit density matrix $ \rho $ into a 2-qubit vector $ \sigma $, because there is no more information available to justify the use of more qubits in \CQS, \ie systems with more qubits won't provide more information.

	By compositionality, then there is a \CQS-context $ \Context{}{\opQ}{\hole} $ such that
	\allowdisplaybreaks
	\begin{align*}
		\Enc{\mathsf{S_{ce1}}\Tuple{\rho}} &= \ConfigCQS{\sigma}{\phi_2}{\Context{}{\opQ}{T_1}}\\
		\Enc{\mathsf{S_{ce2}}\Tuple{\rho}} &= \ConfigCQS{\sigma}{\phi_2}{\Context{}{\opQ}{T_2}}\\
		\Enc{\mathsf{S_{ce3}}\Tuple{\rho}} &= \ConfigCQS{\sigma}{\phi_2}{\Context{}{\opQ}{T_3}}\\
		\Enc{\mathsf{S_{ce4}}\Tuple{\rho}} &= \ConfigCQS{\sigma}{\phi_2}{\Context{}{\opQ}{T_4}}\\
		\Enc{\mathsf{S_{ce5}}\Tuple{\rho}} &= \ConfigCQS{\sigma}{\phi_2}{\Context{}{\opQ}{T_5}}\\
		\Enc{\mathsf{S_{ce6}}\Tuple{\rho}} &= \ConfigCQS{\sigma}{\phi_2}{\Context{}{\opQ}{T_6}}
	\end{align*}
	where
	\begin{align*}
		\Enc{\ConfigOQS{S_{00}}{\rho}} &= \ConfigCQS{\sigma}{\phi_{\mathcal{M}}}{T_1}\\
		\Enc{\ConfigOQS{S_{01}}{\rho}} &= \ConfigCQS{\sigma}{\phi_{\mathcal{M}}}{T_2}\\
		\Enc{\ConfigOQS{S_{10}}{\rho}} &= \ConfigCQS{\sigma}{\phi_{\mathcal{M}}}{T_3}\\
		\Enc{\ConfigOQS{S_{11}}{\rho}} &= \ConfigCQS{\sigma}{\phi_{\mathcal{M}}}{T_4}\\
		\Enc{\ConfigOQS{\opH\List{q_1}.S_{01}}{\rho}} &= \ConfigCQS{\sigma}{\phi_{\mathcal{M}}}{T_5}\\
		\Enc{\ConfigOQS{S_{00+11}}{\rho}} &= \ConfigCQS{\sigma}{\phi_{\mathcal{M}}}{T_6}
	\end{align*}
	and $ \sigma $ is the translation of $ \rho $.
	Since the \OQS-configurations in the source are parametric on $ \rho $, the behaviour of the resulting \CQS-configurations depends on $ \sigma $.
	By operational correspondence and success sensitiveness, these contexts have to behave exactly as their respective sources \wrt the reachability of success (including the reachability of success in all finite traces as in Lemma~\ref{lem:mustSuccessSensitiveness}).
	Since the behaviour of the translations depends only on $ \sigma $ as input, we can focus on the translation of $ \opQ\List{q_1} $ on the quantum register $ \sigma $ that $ \Context{}{\opQ}{\hole} $ constructs from the input $ \rho $.
	In \CQP as well as \CQS the only operators with direct influence on the quantum register are unitary transformations, measurement, and the creation of new qubits.
	Moreover, \eg by communication or the probability distributions after measurement \CQP-configurations or \CQS-configurations can introduce branching and thus provide different results on different branches.

	With the creation of new qubits the size of the vector is increased.
	Intuitively, $ \Context{}{\opQ}{\hole} $ gets as input a 2-qubit vector and has to produce another 2-qubit vector as output, because $ T_1 - T_6 $ and $ \Context{}{\opQ}{\hole} $ require a 2-qubit vector.
	Because of that, the creation of new qubits can only contribute to $ \Context{}{\opQ}{\hole} $ by allowing to set a qubit to $ \Ket{0} $.
	Since this can also be done by measurement followed by a bit-flip if 1 was measured, we do not need to consider the creation of new qubits, \ie this behaviour is subsumed by the other operations.

	Note that we consider 2-bit vectors.
	Measuring one qubit in \CQP or \CQS creates a probability distribution with two cases that consist of their respective probability, which can be zero, followed by the configuration in the respective case.
	The overall evolution of closed systems---and \CQP and \CQS can express only closed systems---can be described by a unitary transformation.
	Accordingly, for the way in that $ \Context{}{\mathcal{Q}}{\hole} $ manipulates the 2-qubit vector the only relevant effect of measurement is (1)~that it creates branches, (2)~that some of these branches might have a zero-probability \wrt particular inputs but not necessarily all inputs, and (3)~that the evolution of the 2-qubit vector in every of these branches is described by a unitary transformation, at least if we consider as inputs only the values $ \Ket{00} $, $ \Ket{01} $, $ \Ket{10} $, $ \Ket{11} $, $ \Ket{0+} $, and $ \frac{1}{\sqrt{2}}\Ket{00} + \frac{1}{\sqrt{2}}\Ket{11} $.

	There are two sources for branching: either branching results from the probability distribution after measurement or from communication.
	Since the matrix multiplication of two unitary transformations is again a unitary transformation, sequences of unitary transformations can be abbreviated by a single unitary transformation.
	Accordingly, if we consider a single branch without measurement in $ \Context{}{\mathcal{Q}}{\hole} $ from the beginning to the end, the transformation on the 2-qubit vector can be abbreviated by a single unitary transformation that is a $ 4 \times 4 $-matrix.

	Assume that all branches in $ \Context{}{\opQ}{\hole} $ result from communication, \ie $ \Context{}{\opQ}{\hole} $ does not use measurement.
	Then for every branch there is a unitary transformation $ U = \begin{pmatrix} u_{11} & u_{12} & u_{13} & u_{14}\\ u_{21} & u_{22} & u_{23} & u_{24}\\ u_{31} & u_{32} & u_{33} & u_{34}\\ u_{41} & u_{42} & u_{43} & u_{44} \end{pmatrix} $ that emulates $ \opQ\List{q_1} $ in this branch.
	Considering the behaviour of $ \mathsf{S_{ce1}}\Tuple{\OuterProduct{00}{00}} $ it follows
	\begin{align*}
		\begin{pmatrix}
			u_{11} & u_{12} & u_{13} & u_{14}\\
			u_{21} & u_{22} & u_{23} & u_{24}\\
			u_{31} & u_{32} & u_{33} & u_{34}\\
			u_{41} & u_{42} & u_{43} & u_{44}
		\end{pmatrix}
		\begin{pmatrix} 1\\ 0\\ 0\\ 0 \end{pmatrix} =
		\begin{pmatrix} 1\\ 0\\ 0\\ 0 \end{pmatrix}
		\quad
		\begin{array}{ll}
			\text{and therefore } & u_{11} = 1\\
			\text{and } & u_{21} = u_{31} = u_{41} = 0
		\end{array}
	\end{align*}
	for all branches, because $ \Ket{00} = \Transp{\Tuple{1, 0, 0, 0}} $ is the only state such that $ T_1 $ applied to this state always unguards $ \success $ and $ \mathsf{S_{ce1}}\Tuple{\OuterProduct{00}{00}} $ must reach success.
	Repeating this calculation for $ \mathsf{S_{ce2}}\Tuple{\OuterProduct{01}{01}} $, $ \mathsf{S_{ce3}}\Tuple{\OuterProduct{10}{10}} $, and $ \mathsf{S_{ce1}}\Tuple{\OuterProduct{11}{11}} $, we conclude that $ u_{ii} = 1 $ for all $ i \in \Set{1, 2, 3, 4} $ and $ u_{ij} = 0 $ for all $ i, j \in \Set{1, 2, 3, 4} $ with $ i \neq j $, \ie that $ U = \opI \otimes \opI $ is identity.
	But, if we apply this identity transformation $ U $ to $ \Ket{0+} $ we obtain $ \Ket{0+} $ and $ T_5 $ applied in this state cannot reach success, whereas $ \mathsf{S_{ce5}}\Tuple{\OuterProduct{0+}{0+}} $ may reach success.
	This is a contradiction, \ie there is no such unitary transformation that emulates $ \opQ\List{q_1} $.
	Therefore, our assumption that all branches in $ \Context{}{\opQ}{\hole} $ result from communication must be wrong, \ie $ \Context{}{\opQ}{\hole} $ has to measure.

	Of course $ \Context{}{\opQ}{\hole} $ may consist of a sequence of steps containing several measurements.
	From $ \mathsf{S_{ce1}}\Tuple{\OuterProduct{00}{00}} $, $ \mathsf{S_{ce2}}\Tuple{\OuterProduct{01}{01}} $, $ \mathsf{S_{ce3}}\Tuple{\OuterProduct{10}{10}} $, and $ \mathsf{S_{ce4}}\Tuple{\OuterProduct{11}{11}} $ it is obvious that measuring the first qubit does not contribute to the implementation of $ \opQ\List{q_1} $.
	It suffices to consider implementations of $ \Context{}{\opQ}{\hole} $ that measure only the second qubit.
	More precisely, we consider only the last measurement of the second qubit that is performed in $ \Context{}{\opQ}{\hole} $ in each of its branches.
	Without loss of generality we can assume that this measurement was performed \wrt the standard base, because all other cases can be implemented by a unitary transformation right before the measurement.
	Then there are two possible outcomes of every last measurement, $ \Ket{0} $ and $ \Ket{1} $, \ie there are two possible branches but one of them might occur with probability zero.
	As usual we ignore branches that occur with probability zero.
	All transformations in $ \Context{}{\opQ}{\hole} $ after the last measurement can again be subsumed in a single unitary transformation.
	Accordingly, $ \Context{}{\opQ}{\hole} $ does perform some arbitrary initial steps that may contain an arbitrary number of measurements and might produce an arbitrary number of branches and each branch with measurement ends with the final measurement of the second qubit that produces one or two branches whose behaviour after the final measurement can be described respectively by a single unitary transformation.

	We consider once more the case $ \mathsf{S_{ce1}}\Tuple{\OuterProduct{00}{00}} $.
	The last measurement of $ q_2 $ sets in every branch the qubit $ q_2 $ in $ \sigma $ to $ \Ket{0} $ or $ \Ket{1} $.
	Since $ \Ket{00} $ is the only state such that $ T_1 $ applied to this state always unguards $ \success $ and $ \mathsf{S_{ce1}}\Tuple{\OuterProduct{00}{00}} $ must reach success, the unitary transformation after the last measurement has to map the current state in every branch to $ \Ket{00} $.
	Let us call this unitary transformation $ U_0 $.
	Note that for instance $ U_0 = \opI \otimes \opI $ would do the job, if the first qubit is still in state $ \Ket{0} $ before its application.
	Similarly, in all branches in that $ 1 $ was measured, the unitary transformation has to result in $ \Ket{00} $.
	Let us call this transformation $ U_1 $ and note that \eg $ \opI \otimes \opX $ can do this, if the first qubit is still in state $ \Ket{0} $.
	Accordingly, in all branches in that the last measurement results in $ \Ket{0} $ this measurement is followed by $ U_0 $ and in all branches in that the last measurement results in $ \Ket{1} $ this measurement is followed by $ U_1 $, because this ensures that each branch of $ \Context{}{\opQ}{\hole} $ for $ \Ket{00} $ finally results in $ \Ket{00} $ as required by $ T_1 $.

	We apply the same argumentation for $ \Ket{01} $ instead of $ \Ket{00} $ and $ \mathsf{S_{ce2}}\Tuple{\OuterProduct{01}{01}} $ instead of $ \mathsf{S_{ce1}}\Tuple{\OuterProduct{00}{00}} $ to obtain the following: In all branches in that the last measurement results in $ \Ket{0} $ this measurement is followed by $ U_0 $ and in all branches in that the last measurement results in $ \Ket{1} $ this measurement is followed by some $ U_1 $ such that $ U_0, U_1 $ both ensure that the respective branch of $ \Context{}{\opQ}{\hole} $ for $ \Ket{01} $ finally results in $ \Ket{01} $.

	Note that this is not yet a contradiction.
	By compositionality, $ \Context{}{\opQ}{\hole} $ has to be implemented by the same term regardless of whether we start with $ \Ket{00} $ or $ \Ket{01} $ and, thus, the mentioned $ U_0 $ and $ U_1 $ indeed have to be the same in both cases.
	And, obviously, there is no $ U_0 $ that applied to $ \Ket{0} $ for the second qubit sometimes results in $ \Ket{0} $ and sometimes in $ \Ket{1} $.
	But we do not necessarily always have two branches as result of measurement.
	So there are so far still two plausible scenarios:
	Either if we start with $ \Ket{0} $ for the second qubit only $ 0 $ is measured and if we start with $ \Ket{1} $ for the second qubit only $ 1 $ is measured or vice versa.
	In the former case we could \eg pick $ U_0 = U_1 = \opI \times \opI $ and in the latter case we could \eg pick $ U_0 = U_1 = \opI \times \opX $ (if the first qubit remains in its initial state).
	However, we have a contradiction for the case $ \frac{1}{\sqrt{2}}\Ket{00} + \frac{1}{\sqrt{2}}\Ket{11} $.

	In the state $ \frac{1}{\sqrt{2}}\Ket{00} + \frac{1}{\sqrt{2}}\Ket{11} $ measuring the second qubit we obtain either $ 0 $ or $ 1 $ with equal probability.
	Because of that, the implementation of $ \Context{}{\opQ}{\hole} $ will have at least two branches with measurement on the second qubit such that in one branch after the last measurement of the second qubit $ U_0 $ is applied and in the other branch after the last measurement of the second qubit $ U_1 $ is applied.
	For both of the two plausible scenarios that are left, this means that in one branch the second qubit is set to $ \Ket{0} $ and in the other to $ \Ket{1} $.
	Note that the entanglement between the two qubits is destroyed (if not before then by this last measurement).
	Then it cannot be avoided that a subsequent measurement of both qubits will result in different values.
	This is in contradiction to $ \mathsf{S_{ce6}} $, because $ \mathsf{S_{ce6}} $ applied on the considered bell pair must reach success and therefore $ T_6 $ requires two qubits that always return the same value in measurement.

	Accordingly, our original assumption, \ie that there is an encoding $ \Enc{\cdot} $ from \OQS into \CQS that satisfies compositionality, operational correspondence \wrt a success respecting preorder, and success sensitiveness is wrong: there is no such encoding.
\end{proof}

As we claim, the counterexample in Example~\ref{exa:counterexample} can be expressed similarly, \ie with strongly bisimilar behaviour, in variants of \qCCS with measurement operators as in \cite{FengDuanJiYing07,FengDuanYing12}.
Moreover, even the full expressive power of \CQP does not help to correctly emulate this super-operator.
Hence, there is also no encoding from \qCCS into \CQP.

\begin{cor}
	There is no encoding from \qCCS with a measurement operator into \CQP that satisfies compositionality, operational correspondence \wrt a success respecting preorder, and success sensitiveness.
\end{cor}


\section{Quality Criteria for Quantum Based Systems}
\label{sec:criteriaQBS}

Sections~\ref{sec:encoding} and \ref{sec:separation} show that the quality criteria of Gorla in \cite{gorla10} can be applied to quantum based systems and are still meaningful in this setting.
They might, however, not be exhaustive, \ie there might be aspects of quantum based systems that are relevant but not sufficiently covered by this set of criteria.
To obtain these criteria, Gorla studied a large number of encodings, \ie this set of criteria was built upon the experience of many researchers and years of work.
Accordingly, we do not expect to answer the question 'what are good quality criteria for quantum based systems' now, but rather want to start the discussion.

A closer look at the criteria in Section~\ref{sec:criteria} reveals a first candidate for an additional quality criterion.
Name invariance ensures that encodings cannot cheat by treating names differently. It requires that good encodings preserve substitutions to some extend.
\CQP and \qCCS model the dynamics of quantum registers in fundamentally different ways, but both languages address qubits by qubit names.
It seems natural to extend name invariance to also cover qubit names.

As in \cite{gorla10}, we let our definition of qubit invariance depend on a renaming policy $ \varphi $, where this renaming policy is for qubit names.
The renaming policy translates qubit names of the source to tuples of qubit names in the target, \ie $ \varphi: \var \to \var^n $, where we require that $ \varphi(q) \cap \varphi(q') = \emptyset $ whenever $ q \neq q' $.

The new criterion \emph{qubit invariance}, then requires that encodings preserve and reflect substitutions on qubits modulo the renaming policy on qubits.

\begin{defi}[Qubit Invariance]
	The encoding $ \Enc{\cdot} $ is \emph{qubit invariant} if, for every $ S \in \sourceConfig $ and every substitution $ \gamma $ on qubit names, it holds that $ \Enc{S\gamma} = \Enc{S}\gamma' $, where $ \varphi(\gamma(q)) = \gamma'(\varphi(q)) $ for every $ q \in \var $.
\end{defi}

In \cite{gorla10}, name invariance allows the slightly weaker condition $ \Enc{S\gamma} \preceq \Enc{S}\gamma' $ for non-injective substitutions.
In contrast, substitutions on qubits always have to be injective such that they cannot violate the no-cloning principle.
Since $ \Enc{\cdot} $ translates qubit names to themselves and introduces no other qubit names, it satisfies qubit invariance for $ \varphi $ being the identity and $ \gamma' = \gamma $. The corresponding proof is given above in Lemma~\ref{lem:qubitInvariance}.

Note that the qubits discussed so far are so-called \emph{logical qubits}, \ie they are abstractions of the physical qubits.
To implement a single \emph{logical qubit} as of today several \emph{physical qubits} are necessary.
These additional physical qubits are used to ensure stability and fault-tolerance in the implementation of logical qubits.
Since the number of necessary physical qubits can be much larger than the number of logical qubits, already a small increase in the number of logical qubits might seriously limit the practicability of a system.
Accordingly, one may require that encodings preserve the number of logical qubits.

\begin{defi}[Size of Quantum Registers]
	An encoding $ \Enc{\cdot} $ \emph{preserves the size of quantum registers}, if for all $ S \in \sourceConfig $, the number of qubits in $ \Enc{S} $ is not greater than in $ S $.
\end{defi}

Again, the encoding $ \Enc{\cdot} $ in Definition~\ref{def:encoding} satisfies this criterion, which can be verified easily by inspection of the encoding function.

\begin{lem}
	\label{lem:sizeOfRegister}
	The encoding $ \Enc{\cdot} $ preserves the size of quantum registers, \ie for all $ S \in \sourceConfig $, the number of qubits in $ \Enc{S} $ is not greater than in $ S $.
\end{lem}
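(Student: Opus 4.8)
The plan is to prove the claim by direct inspection of the encoding function in Definition~\ref{def:encCQP}. The key observation is that in a \qCCS-configuration $\qccsConf{P}{\rho}$ the qubits live entirely in the density matrix $\rho$ (the process term only references them by name via $\var$), so the number of qubits in $\enc{S}$ is exactly the number of qubits carried by the density matrix produced by the encoding. Since only the first two lines of Definition~\ref{def:encCQP}—the translations of configurations and of probability distributions—construct this density matrix, it suffices to examine those two cases and then check that the homomorphic translation of the process term introduces no further qubits beyond those matched one-to-one with the source.

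First I would treat the non-distributed case $S = \cqpConf{\sigma}{\phi}{P}$ with $\sigma = q_0, \ldots, q_{n-1} = \qubit{\psi}$, so that $S$ carries exactly $n$ qubits. By Definition~\ref{def:encCQP}, $\enc{S} = \qccsConf{\qccsRestrict{\cqpEnc{P}}{\phi}}{\rho_{\sigma}}$ with $\rho_{\sigma} = \outerProd{\psi}{\psi}$. As $\qubit{\psi}$ is a $2^n$-dimensional vector, its outer product is a $2^n \times 2^n$ density matrix over the same qubit names $q_0, \ldots, q_{n-1}$, i.e.\ $\var = \set{q_0, \ldots, q_{n-1}}$. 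Hence the register of $\enc{S}$ has exactly $n$ qubits, matching $S$. The distribution case $S = \cqpProbDis{0 \leq i < 2^r}{p_i}{\cqpConf{\sigma_i}{\phi}{P\sub{i}{x}}}$ with each $\sigma_i = \qubit{\psi_i}$ carrying $n$ qubits is analogous: here $\enc{S}$ has state $\rho_{\boxplus} = \sum_i p_i \outerProd{\psi_i}{\psi_i}$, which is again a $2^n \times 2^n$ density matrix over the same $n$ qubit names, so $\enc{S}$ once more has exactly $n$ qubits.

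Finally I would note that the remaining (process-term) clauses of Definition~\ref{def:encCQP} cannot enlarge the register: they are homomorphic or prefix-preserving and address qubits only by name, with the sole qubit-producing clause being $\cqpEnc{\cqpNewQbit{x}{P}} = \cont{\qccsNew{\var}}{(\cqpEnc{P}\sub{q_{\length{\var}}}{x})}$, whose $\qccsNew{\var}$ adds a qubit to $\rho$ only upon reduction and in lockstep with the corresponding source step on \cqpQbitRule. Thus the static register size of $\enc{S}$ equals that of $S$, which in particular is not greater, establishing the lemma. I expect no genuine obstacle here: the argument is immediate from the definitions, and the only points requiring care are confirming that the outer-product construction preserves the qubit count (dimension $2^n$ corresponds to $n$ qubits in both the source vector and the target density matrix) and that no clause silently introduces an auxiliary qubit.
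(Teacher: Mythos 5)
Your proposal is correct and follows essentially the same route as the paper's own proof: a direct inspection of Definition~\ref{def:encCQP} showing that the translated density matrix carries exactly the qubits of the source register and that no clause introduces an auxiliary qubit except the translation of qubit creation, which adds one only in lockstep with the corresponding source step. The only difference is that you spell out the dimension count for the outer product explicitly, which the paper leaves implicit.
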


\begin{proof}
	By Definition~\ref{def:encoding}, the number of qubits in $ \Enc{S} $ is the same as the number of qubits in $ S $.
	Moreover, $ \Enc{\cdot} $ does not introduce new qubits in any of its cases except as the encoding of the creation of a new qubit in the source.
	Because of that, also the derivatives of source term translations have the same number of qubits as their respective source term equivalents.
	Thus, $ \Enc{\cdot} $ preserves the size of quantum registers.
\end{proof}

Similarly to success sensitiveness, requiring the preservation of the size of quantum registers on literal encodings is not enough.
To ensure that all reachable target terms preserve the size of quantum registers, we again link this criterion with the target term relation $ \preceq $.
More precisely, we require that $ \preceq $ is sensible to the size of quantum registers, \ie $ T_1 \preceq T_2 $ implies that the quantum registers in $ T_1 $ and $ T_2 $ have the same size.
The correspondence simulation $ \preceq $ that we used as target relation for the encoding $ \Enc{\cdot} $ is not sensible to the size of quantum registers, but we can easily turn it into such a relation.
Therefore, we simply add the condition that $ \Length{\rho} = \Length{\sigma} $ whenever $ \ConfigOQS{P}{\rho}\mathcal{R}\ConfigOQS{Q}{\sigma} $ to Definition~\ref{def:correspondenceSimulation}.
Fortunately, all of the already shown results remain valid for the altered version of $ \preceq $.

In contrast to \CQP, the semantics of \OQS yields a non-probabilistic transition system, where probabilities are captured in the density matrices.
The encoding $ \Enc{\cdot} $ translates probability distributions into non-deterministic choices.
Thereby, branches with zero probability are correctly eliminated, but all remaining branches are treated similarly and their probabilities are forgotten.
To check also the probabilities of branches, we can strengthen operational correspondence \eg to a labelled variant, where labels capture the probability of a step.
The challenge here is to create a meaningful criterion that correctly accumulates the probabilities in sequences of steps as \eg a single source term step might be translated into a sequence of target term steps, but the product of the probabilities contained in the sequence has to be equal to the probability of the single source term step.
As, to the best of our knowledge, there are no well-accepted probabilistic versions of operational correspondence.
Because of that, we started to study probabilistic versions of operational correspondence and the nature of the relation between source and target they imply.
Just recently we were able to publish three variants of probabilistic operational correspondence \cite{schmittPeters23}.
These criteria allow to more closely and more naturally connect the usually probabilistic quantum based systems.

Another important aspect is in how far the quality criteria capture the fundamental principles of quantum based systems such as the \emph{no-cloning principle}: By the laws of quantum mechanics, it is not possible to exactly copy a qubit.
Technically, such a copying would require some form of interaction with the qubit and this interaction would destroy its superposition, \ie alter its state.
Interestingly, the criteria of Gorla are even strong enough to observe a violation of this principle in the encoding from \CQS into \OQS, \ie if we allow \CQS to violate this principle but require that \OQS respects it, then we obtain a negative result.
Therefore, we remove the type system from \CQS. Without this type system, we can use the same qubit at different locations, violating the no-cloning principle.
As an example, consider $ S = \ConfigCQS{\sigma}{\phi}{\OutCQS{c}{q}{\nilCQS} \mid \OutCQS{c}{q}{\nilCQS}} $.
Then the encoding $ \Enc{\cdot} $ in Definition~\ref{def:encoding} is not valid any more, because $ \Enc{S} = \ConfigOQS{\ResOQS{\Tuple{\OutOQS{c}{q}{\nilOQS} \parOQS \OutOQS{c}{q}{\nilOQS}}}{\phi}}{\rho} $ violates condition~\ref{condB}.
Using $ S $ as counterexample, it should be possible to show that there exists no encoding that satisfies compositionality, operational correspondence, and success sensitiveness.

Of course, even if we succeed with this proof, this does not imply that the criteria are strong enough to sufficiently capture the no-cloning principle.
Indeed, the other direction is more interesting, \ie criteria that rule out encodings such that the source language respects the no-cloning principle but not all literal translations or their derivatives respect it.
We believe that capturing the no-cloning principle and the other fundamental principles of quantum based systems is an interesting research challenge.


\section{Conclusions}
\label{sec:conclusions}

We proved that \CQS can be encoded by \OQS \wrt the quality criteria compositionality, name invariance, operational correspondence, divergence reflection, and success sensitiveness.
Additionally, this encoding satisfies two new, quantum specific criteria: it is invariant to qubit names and preserves the size of quantum registers.
We think that these new criteria are relevant for translations between quantum based systems.

The encoding proves that the way in that \qCCS treats qubits|using density matrices and super-operators|can emulate the way in that \CQP treats qubits.
The other direction is more difficult.
We showed that there exists no encoding from \OQS into \CQS that satisfies compositionality, operational correspondence, and success sensitiveness and claim that this also implies that there is no encoding from \qCCS into \CQP.

The results themselves may not necessarily be very surprising.
The unitary transformations used in \CQS/\CQP are a subset of the super-operators used in \OQS/\qCCS and also density matrices can express more than the vectors used in \CQS/\CQP.
What our case study proves is that the quality criteria that were originally designed for classical systems are still meaningful in this quantum based setting.
They may, however, not be exhaustive.
Accordingly, in Section~\ref{sec:criteriaQBS} we start the discussion on quality criteria for this new setting of quantum based systems.
The first two candidate criteria that we propose, namely qubit invariance and preservation of quantum register sizes, are relevant, but rather basic.
Since the semantics of quantum based systems is often probabilistic, a variant of operational correspondence that requires the preservation and reflection of probabilities in the respective traces might be meaningful.
The encoding $ \Enc{\cdot} $ presented above does not satisfy probabilistic operational correspondence as presented in \cite{schmittPeters23}.
More difficult and thus also more interesting are criteria that capture the fundamental principles of quantum based systems such as the no-cloning principle.
Hereby, we pose the task of identifying such criteria as research challenge.


\bibliographystyle{alphaurl}
\bibliography{LMCS_encoding_CQP_qCCS_lit}


\begin{appendix}

\section{Type System of Closed Quantum Systems}
\label{sec:typeSystemCQS}

Lemma~\ref{lem:typingFreeQubitsCQS} states that:
\begin{quotation}
	If $ \Sigma \vdash P $ then $ \FreeQubits{P} \subseteq \Sigma $.
\end{quotation}

\begin{proof}[Proof of Lemma~\ref{lem:typingFreeQubitsCQS}]
	Assume $ \Sigma \vdash P $.
	We perform an induction on the structure of $ P $.
	\begin{description}
		\item[$ P = \nilCQS $] Then $ \FreeQubits{P} = \emptyset \subseteq \Sigma $.
		\item[$ P = \success $] Then $ \FreeQubits{P} = \emptyset \subseteq \Sigma $.
		\item[$ P = Q \mid R $] By \ruleTPar, then there are $ \Sigma_1, \Sigma_2 $ such that $ \Sigma_1 \vdash Q $, $ \Sigma_2 \vdash R $, and $ \Sigma = \Sigma_1 \cup \Sigma_2 $.
			By the induction hypothesis, then $ \FreeQubits{Q} \subseteq \Sigma_1 $ and $ \FreeQubits{R} \subseteq \Sigma_2 $.
			Since $ \FreeQubits{P} = \FreeQubits{Q} \cup \FreeQubits{R} $, then $ \FreeQubits{P} \subseteq \Sigma $.
		\item[$ P = \InpCQS{c}{x}{Q} $] By \ruleTIn, then $ c \in \names $, $ x \in \var \setminus \Sigma $, and $ \Sigma \cup \Set{x} \vdash Q $.
			By the induction hypothesis, then $ \FreeQubits{Q} \subseteq \Sigma \cup \Set{x} $.
			Since $ \FreeQubits{P} = \FreeQubits{Q} \setminus \Set{x} $, then $ \FreeQubits{P} \subseteq \Sigma $.
		\item[$ P = \OutCQS{c}{x}{Q} $] By \ruleTOut, then $ c \in \names $, $ x \in \var \cap \Sigma $, and $ \Sigma \setminus \Set{x} \vdash Q $.
			By the induction hypothesis, then $ \FreeQubits{Q} \subseteq \Sigma \setminus \Set{x} $.
			Since $ \FreeQubits{P} = \FreeQubits{Q} \cup \Set{x} $, then $ \FreeQubits{P} \subseteq \Sigma $.
		\item[$ P = \UnitaryCQS{x_1, \ldots, x_n}{U}{Q} $] By \ruleTTrans, then $ x_1, \ldots, x_n \in \var \cap \Sigma $, $ \vdash \At{U}{\OpType{n}} $, and $ \Sigma \vdash Q $.
			By the induction hypothesis, then $ \FreeQubits{Q} \subseteq \Sigma $.
			Since $ \FreeQubits{P} = \FreeQubits{Q} $, then $ \FreeQubits{P} \subseteq \Sigma $.
		\item[$ P = \MeasCQS{v'}{x_1, \ldots, x_n}{Q} $] By \ruleTMsure, then $ v' \in \binaries $, $ x_1, \ldots, x_n \in \var \cap \Sigma $, and $ \Sigma \vdash Q $.
			By the induction hypothesis, then $ \FreeQubits{Q} \subseteq \Sigma $.
			Since $ \FreeQubits{P} = \FreeQubits{Q} $, then $ \FreeQubits{P} \subseteq \Sigma $.
		\item[$ P = \NewCQS{c}{Q} $] By \ruleTNew, then $ c \in \names $ and $ \Sigma \vdash Q $.
			By the induction hypothesis, then $ \FreeQubits{Q} \subseteq \Sigma $.
			Since $ \FreeQubits{P} = \FreeQubits{Q} $, then $ \FreeQubits{P} \subseteq \Sigma $.
		\item[$ P = \QubitCQS{x}{Q} $] By \ruleTQbit, then $ x \in \var \setminus \Sigma $ and $ \Sigma \cup \Set{x} \vdash Q $.
			By the induction hypothesis, then $ \FreeQubits{Q} \subseteq \Sigma \cup \Set{x} $.
			Since $ \FreeQubits{P} = \FreeQubits{Q} \setminus \Set{x} $, then $ \FreeQubits{P} \subseteq \Sigma $.
		\item[$ P = \CondCQS{bv_1}{bv_2}{Q} $] By \ruleTCond, then $ bv_1 \in \binaries $ or $ \vdash \At{bv_1}{\binariesType} $, $ bv_2 \in \binaries $ or $ \vdash \At{bv_2}{\binariesType} $, and $ \Sigma \vdash Q $.
			By the induction hypothesis, then $ \FreeQubits{Q} \subseteq \Sigma $.
			Since $ \FreeQubits{P} = \FreeQubits{Q} $, then $ \FreeQubits{P} \subseteq \Sigma $.
			\qedhere
	\end{description}
\end{proof}

Well-typedness is preserved modulo structural congruence.

\begin{lem}
	If $ \Sigma \vdash P $ and $ P \equiv Q $ then $ \Sigma \vdash Q $.
	\label{lem:typingModuloStructuralCongruence}
\end{lem}

\begin{proof}
	Remember that we assume that there are no name clashes in $ P $ or $ Q $.
	The proof is then by straightforward induction on the rules of structural congruence.
\end{proof}

Well-typedness is also preserved modulo substitutions of variables for binary numbers.

\begin{lem}
	If $ \Sigma \vdash P $, $ v \in \binaries $, and $ bv \in \binaries $ or $ \vdash \At{bv}{\binariesType} $ then $ \Sigma \vdash P\Set{\Subst{bv}{v}} $.
	\label{lem:typingSubstitutionBinaries}
\end{lem}

\begin{proof}
	Assume $ \Sigma \vdash P $, $ v \in \binaries $, and $ bv \in \binaries $ or $ \vdash \At{bv}{\binariesType} $.
	We perform an induction on the structure of $ P $.
	\begin{description}
		\item[$ P = \nilCQS $] Then $ P = P\Set{\Subst{bv}{v}} $ and thus $ \Sigma \vdash P $ implies $ \Sigma \vdash P\Set{\Subst{bv}{v}} $.
		\item[$ P = \success $] Then $ P = P\Set{\Subst{bv}{v}} $ and thus $ \Sigma \vdash P $ implies $ \Sigma \vdash P\Set{\Subst{bv}{v}} $.
		\item[$ P = Q \mid R $] By \ruleTPar, then there are $ \Sigma_1, \Sigma_2 $ such that $ \Sigma_1 \vdash Q $, $ \Sigma_2 \vdash R $, $ \Sigma = \Sigma_1 \cup \Sigma_2 $, and $ \Sigma_1 \cap \Sigma_2 = \emptyset $.
			By the induction hypothesis, then $ \Sigma_1 \vdash Q\Set{\Subst{bv}{b}} $ and $ \Sigma_2 \vdash R\Set{\Subst{bv}{v}} $.
			Since $ P\Set{\Subst{bv}{v}} = Q\Set{\Subst{bv}{b}} \mid R\Set{\Subst{bv}{v}} $ and because of \ruleTPar, then $ \Sigma \vdash P\Set{\Subst{bv}{v}} $.
		\item[$ P = \InpCQS{c}{x}{Q} $] By \ruleTIn, then $ c \in \names $, $ x \in \var \setminus \Sigma $, and $ \Sigma \cup \Set{x} \vdash Q $.
			By the induction hypothesis, then $ \Sigma \cup \Set{x} \vdash Q\Set{\Subst{bv}{v}} $.
			Since $ P\Set{\Subst{b_2}{b_1}} = \InpCQS{c}{x}{\left( Q\Set{\Subst{bv}{v}} \right)} $ and because of \ruleTIn, then $ \Sigma \vdash P\Set{\Subst{bv}{v}} $.
		\item[$ P = \OutCQS{c}{x}{Q} $] By \ruleTOut, then $ c \in \names $, $ x \in \var \cap \Sigma $, and $ \Sigma \setminus \Set{x} \vdash Q $.
			By the induction hypothesis, then $ \Sigma \setminus \Set{x} \vdash Q\Set{\Subst{bv}{v}} $.
			Since $ P\Set{\Subst{bv}{v}} = \OutCQS{c}{x}{\left( Q\Set{\Subst{bv}{v}} \right)} $ and because of \ruleTOut, then $ \Sigma \vdash P\Set{\Subst{bv}{v}} $.
		\item[$ P = \UnitaryCQS{x_1, \ldots, x_n}{U}{Q} $] By \ruleTTrans, then $ x_1, \ldots, x_n \in \var \cap \Sigma $, $ \vdash \At{U}{\OpType{n}} $, and $ \Sigma \vdash Q $.
			By the induction hypothesis, then $ \Sigma \vdash Q\Set{\Subst{bv}{v}} $.
			Since $ P\Set{\Subst{bv}{v}} = \UnitaryCQS{x_1, \ldots, x_n}{U}{\left( Q\Set{\Subst{bv}{v}} \right)} $ and because of \ruleTTrans, then $ \Sigma \vdash P\Set{\Subst{bv}{v}} $.
		\item[$ P = \MeasCQS{v'}{x_1, \ldots, x_n}{Q} $] By \ruleTMsure, then $ v' \in \binaries $, $ x_1, \ldots, x_n \in \var \cap \Sigma $, and $ \Sigma \vdash Q $.
			By the induction hypothesis, then $ \Sigma \vdash Q\Set{\Subst{bv}{v}} $.
			If $ v' = v $ then $ P\Set{\Subst{bv}{v}} = P $, since $ v' $ is bound.
			Then $ \Sigma \vdash P $ implies $ \Sigma \vdash P\Set{\Subst{bv}{v}} $.
			Else if $ v' \neq v $ then $ P\Set{\Subst{bv}{v}} = \MeasCQS{v'}{x_1, \ldots, x_n}{\left( Q\Set{\Subst{bv}{v}} \right)} $.
			By \ruleTMsure, then $ \Sigma \vdash P\Set{\Subst{bv}{v}} $.
		\item[$ P = \NewCQS{c}{Q} $] By \ruleTNew, then $ c \in \names $ and $ \Sigma \vdash Q $.
			By the induction hypothesis, then $ \Sigma \vdash Q\Set{\Subst{bv}{v}} $.
			Since $ P\Set{\Subst{bv}{v}} = \NewCQS{c}{\left( Q\Set{\Subst{bv}{v}} \right)} $ and because of \ruleTNew, then $ \Sigma \vdash P\Set{\Subst{bv}{v}} $.
		\item[$ P = \QubitCQS{x}{Q} $] By \ruleTQbit, then $ x \in \var \setminus \Sigma $ and $ \Sigma \cup \Set{x} \vdash Q $.
			By the induction hypothesis, then $ \Sigma \cup \Set{x} \vdash Q\Set{\Subst{bv}{v}} $.
			Since $ P\Set{\Subst{bv}{v}} = \QubitCQS{x}{\left( Q\Set{\Subst{bv}{v}} \right)} $ and because of \ruleTQbit, then $ \Sigma \vdash P\Set{\Subst{bv}{v}} $.
		\item[$ P = \CondCQS{bv_1}{bv_2}{Q} $] By \ruleTCond, then $ bv_1 \in \binaries $ or $ \vdash \At{bv_1}{\binariesType} $, $ bv_2 \in \binaries $ or $ \vdash \At{bv_2}{\binariesType} $, and $ \Sigma \vdash Q $.
			By the induction hypothesis, then $ \Sigma \vdash Q\Set{\Subst{bv}{v}} $.
			Then $ P\Set{\Subst{bv}{v}} = \CondCQS{bv_1^*}{bv_2^*}{\left( Q\Set{\Subst{bv}{v}} \right)} $, where $ bv_1^* = bv $ if $ bv_1 = v $ and else $ bv_1^* = bv_1 $ and similarly $ bv_2^* \in \Set{ bv_2, bv } $.
			By \ruleTMsure and $ bv \in \binaries $ or $ \vdash \At{bv}{\binariesType} $, then $ \Sigma \vdash P\Set{\Subst{bv}{v}} $.
			\qedhere
	\end{description}
\end{proof}

Let $ \BoundQubits{P} $ denote the set of bound qubit (variables) in $ P $.
Well-typedness is preserved modulo adding qubit names to $ \Sigma $ that are not bound in $ P $.

\begin{lem}
	If $ \Sigma \vdash P $ and $ x \in \var \setminus \BoundQubits{P} $ then $ \Sigma \cup \Set{x} \vdash P $.
	\label{lem:weakening}
\end{lem}

\begin{proof}
	Assume $ \Sigma \vdash P $ and $ x \in \var \setminus \BoundQubits{P} $.
	The proof is by straightforward induction on the rules in Figure~\ref{fig:typingRulesCQS} to derive $ \Sigma \vdash P $.
	The only interesting cases are for \ruleTIn and \ruleTQbit.
	\begin{description}
		\item[\ruleTIn] Then $ P = \InpCQS{c}{y}{Q} $, $ c \in \names $, $ y \in \var \setminus \Sigma $, and $ \Sigma \cup \Set{y} \vdash Q $.
			Since $ x \in \var \setminus \BoundQubits{P} $, $ x \neq y $.
			By the induction hypothesis, then $ \Sigma \cup \Set{x, y} \vdash Q $.
			By \ruleTIn, then $ \Sigma \cup \Set{x} \vdash P $.
	\end{description}
	The case of \ruleTQbit is similar.
	Note that for \ruleTPar it does not matter to which parallel component we give the additional $ x $.
\end{proof}

Well-typedness is also preserved modulo removing qubit names from $ \Sigma $ that are not free in $ P $.

\begin{lem}
	If $ \Sigma \vdash P $ and $ x \in \var \setminus \FreeQubits{P} $ then $ \Sigma \setminus \Set{x} \vdash P $.
	\label{lem:strengthening}
\end{lem}

\begin{proof}
	Assume $ \Sigma \vdash P $ and $ x \in \var \setminus \FreeQubits{P} $.
	The proof is by straightforward induction on the rules in Figure~\ref{fig:typingRulesCQS} to derive $ \Sigma \vdash P $.
	The only interesting case is for \ruleTOut.
	\begin{description}
		\item[\ruleTOut] Then $ P = \OutCQS{c}{y}{Q} $, $ c \in \names $, $ y \in \var \cap \Sigma $, and $ \Sigma \setminus \Set{y} \vdash Q $.
			Since $ x \in \var \setminus \FreeQubits{P} $, $ x \neq y $.
			By the induction hypothesis, then $ \Sigma \setminus \Set{x, y} \vdash Q $.
			By \ruleTOut, then $ \Sigma \setminus \Set{x} \vdash P $.
			\qedhere
	\end{description}
\end{proof}

Well-typedness is preserved modulo substitutions of qubit names.
To prove this property we have to rely on the condition that substitutions on qubit names are not allowed to rename two qubits to the same qubit (see Section~\ref{sec:processCalculi}).
We use $ \mathsf{s} $ to denote substitutions on qubits of the form $ \Set{\Subst{q_1}{x_1}, \ldots, \Subst{q_n}{x_n}} $.
Let $ \Sigma\mathsf{s} $ be the result of applying the substitution $ \mathsf{s} $ simultaneously on all qubit names in the set $ \Sigma $.
Similarly, $ \tilde{x}\mathsf{s} $ is the result of applying the substitution $ \mathsf{s} $ simultaneously on all qubit names in $ \tilde{x} $.
Moreover, let $ \FreeQubits{\mathsf{s}} $ return all qubit names in the substitution $ \mathsf{s} $, \ie $ \FreeQubits{\Set{\Subst{q_1}{x_1}, \ldots, \Subst{q_n}{x_n}}} = \Set{ x_1, q_1, \ldots, x_n, q_n } $.
As usual we require for $ \mathsf{s} = \Set{\Subst{q_1}{x_1}, \ldots, \Subst{q_n}{x_n}} $ that the $ x_1, \ldots, x_n $ are pairwise distinct.
For the next Lemma we additionally explicitly require that also the $ q_1, \ldots, q_n $ are pairwise distinct.

\begin{lem}
	If $ \Sigma \vdash P $, $ \mathsf{s} = \Set{\Subst{q_1}{x_1}, \ldots, \Subst{q_n}{x_n}} $, $ \FreeQubits{\mathsf{s}} \in \var \setminus \BoundQubits{P} $, and $ q_1, \ldots, q_n $ are pairwise distinct, then $ \Sigma\mathsf{s} \vdash P\mathsf{s} $.
	\label{lem:typingSubstitutionQubits}
\end{lem}

\begin{proof}
	Assume $ \Sigma \vdash P $, $ \mathsf{s} = \Set{\Subst{q_1}{x_1}, \ldots, \Subst{q_n}{x_n}} $, $ \FreeQubits{\mathsf{s}} \in \var \setminus \BoundQubits{P} $, and $ q_1, \ldots, q_n $ are pairwise distinct.
	We perform an induction on the structure of $ P $.
	\begin{description}
		\item[$ P = \nilCQS $] Then $ P = P\mathsf{s} $.
			By \ruleTNil, then $ \vdash P\mathsf{s} $.
			By applying Lemma~\ref{lem:weakening} potentially several times, then $ \Sigma\mathsf{s} \vdash P\mathsf{s} $.
		\item[$ P = \success $] Then $ P = P\mathsf{s} $.
			By \ruleTSuc, then $ \vdash P\mathsf{s} $.
			By applying Lemma~\ref{lem:weakening} potentially several times, then $ \Sigma\mathsf{s} \vdash P\mathsf{s} $.
		\item[$ P = Q \mid R $] By \ruleTPar, then there are $ \Sigma_1, \Sigma_2 $ such that $ \Sigma_1 \vdash Q $, $ \Sigma_2 \vdash R $, $ \Sigma = \Sigma_1 \cup \Sigma_2 $, and $ \Sigma_1 \cap \Sigma_2 = \emptyset $.
			By Lemma~\ref{lem:typingFreeQubitsCQS}, then $ \FreeQubits{Q} \subseteq \Sigma_1 $ and $ \FreeQubits{R} \subseteq \Sigma_2 $.
			Then we can split $ \mathsf{s} $ into $ \mathsf{s}_1 = \Set{\Subst{q_{1, 1}}{x_{1, 1}}, \ldots, \Subst{q_{1, n_1}}{x_{1, n_1}}} $ and $ \mathsf{s}_2 = \Set{\Subst{q_{2, 1}}{x_{2, 1}}, \ldots, \Subst{q_{2, n_2}}{x_{2, n_2}}} $, \ie $ \mathsf{s} = \mathsf{s}_1 \cup \mathsf{s}_2 $, and $ x_{1, 1}, \ldots, x_{1, n_1} \notin \FreeQubits{R} $, $ x_{2, 1}, \ldots, x_{2, n_2} \notin \FreeQubits{Q} $, and $ \Set{x_{1, 1}, \ldots, x_{1, n_1}} \cap \Set{x_{2, 1}, \ldots, x_{2, n_2}} = \emptyset $.
			Then $ P\mathsf{s} = Q\mathsf{s}_1 \mid R\mathsf{s}_2 $.
			Since $ \BoundQubits{P} = \BoundQubits{Q} \cup \BoundQubits{R} $, we have $ \FreeQubits{\mathsf{s}_1} \notin \BoundQubits{Q} $ and $ \FreeQubits{\mathsf{s}_2} \notin \BoundQubits{R} $.
			By the induction hypothesis, then $ \Sigma_1\mathsf{s}_1 \vdash Q\mathsf{s}_1 $ and $ \Sigma_2\mathsf{s}_2 \vdash R\mathsf{s}_2 $.
			Because the $ q_1, \ldots, q_n $ are pairwise distinct and $ \Sigma_1 \cap \Sigma_2 = \emptyset $ and since substitutions on qubits cannot rename two qubits to the same qubit, then $ \left( \Sigma_1\mathsf{s}_1 \right) \cap \left( \Sigma_2\mathsf{s}_2 \right) = \emptyset $ and $ \left( \Sigma_1\mathsf{s}_1 \right) \cup \left( \Sigma_2\mathsf{s}_2 \right) = \Sigma\mathsf{s} $.
			By \ruleTPar, then $ \Sigma\mathsf{s} \vdash P\mathsf{s} $.
		\item[$ P = \InpCQS{c}{x}{Q} $] By \ruleTIn, then $ c \in \names $, $ x \in \var \setminus \Sigma $, and $ \Sigma \cup \Set{x} \vdash Q $.
			Note that $ \BoundQubits{P} = \BoundQubits{Q} \cup \Set{x} $.
			By the induction hypothesis, then $ \left( \Sigma \cup \Set{x} \right)\mathsf{s} \vdash Q\mathsf{s} $.
			Since $ \FreeQubits{\mathsf{s}} \notin \BoundQubits{P} $, we have $ x \notin \Set{x_1, \ldots, x_n} $.
			Then $ P\mathsf{s} = \InpCQS{c}{x}{\left( Q\mathsf{s} \right)} $ and $ \left( \Sigma \cup \Set{x} \right)\mathsf{s} = \Sigma\mathsf{s} \cup \Set{x} $.
			By \ruleTIn, then $ \Sigma\mathsf{s} \vdash P\mathsf{s} $.
		\item[$ P = \OutCQS{c}{x}{Q} $] By \ruleTOut, then $ c \in \names $, $ x \in \var \cap \Sigma $, and $ \Sigma \setminus \Set{x} \vdash Q $.
			Note that $ \BoundQubits{P} = \BoundQubits{Q} $.
			If $ x \notin \Set{x_1, \ldots, x_n} $, then $ P\mathsf{s} = \OutCQS{c}{x}{\left( Q\mathsf{s} \right)} $.
			Remember that substitutions on qubits are not allowed to rename two qubits to the same qubit.
			Then either (1)~$ x \notin \Set{q_1, \ldots, q_n} $ or (2)~$ x = q_i \in \Set{q_1, \ldots, q_n} $ but $ x_i \notin \FreeQubits{Q} $.
			\begin{enumerate}[(1)]
				\item By the induction hypothesis, then $ \left( \Sigma \setminus \Set{x} \right) \mathsf{s} \vdash Q\mathsf{s} $ and $ \left( \Sigma \setminus \Set{x} \right) \mathsf{s} = \Sigma\mathsf{s} \setminus \Set{x} $.
					By \ruleTOut, then $ \Sigma\mathsf{s} \vdash P\mathsf{s} $.
				\item In this case, we can ignore the substitution $ \Subst{q_i}{x_i} $, \ie $ \mathsf{s}' = \mathsf{s} \setminus \Set{\Subst{q_i}{x_i}} $ and $ Q\mathsf{s} = Q\mathsf{s}' $ as well as $ P\mathsf{s} = P\mathsf{s}' $.
					By the induction hypothesis, then $ \left( \Sigma \setminus \Set{x} \right) \mathsf{s}' \vdash Q\mathsf{s}' $ and we have that $ \left( \Sigma \setminus \Set{x} \right) \mathsf{s}' = \Sigma\mathsf{s}' \setminus \Set{x} $.
					By \ruleTOut, then $ \Sigma\mathsf{s}' \vdash P\mathsf{s}' $.
					If $ x_i \notin \Sigma $ then also $ \Sigma\mathsf{s} \vdash P\mathsf{s} $.
					Else if $ x_i \in \Sigma $, then $ x_i \in \Sigma\mathsf{s}' $.
					By Lemma~\ref{lem:strengthening} and since $ x_i \notin \FreeQubits{Q} $, then $ \Sigma\mathsf{s}' \setminus \Set{x_i} \vdash P\mathsf{s}' $.
					By Lemma~\ref{lem:weakening} and since $ q_i \notin \BoundQubits{P} $, then $ \left( \Sigma\mathsf{s}' \setminus \Set{x_i} \right) \cup \Set{q_i} \vdash P\mathsf{s}' $.
					If $ x_i \notin \Set{q_1, \ldots, q_{i-1}, q_{i+1}, \ldots, q_n} $ then $ \Sigma\mathsf{s} \vdash P\mathsf{s} $.
					Else we apply once more Lemma~\ref{lem:weakening} to add the respective $ q_j $ and have again $ \Sigma\mathsf{s} \vdash P\mathsf{s} $.
			\end{enumerate}
			Else $ x = x_i \in \Set{x_1, \ldots, x_n} $.
			Then $ P\mathsf{s} = \OutCQS{c}{q_i}{\left( Q\mathsf{s} \right)} $.
			By Lemma~\ref{lem:typingFreeQubitsCQS}, $ \Sigma \setminus \Set{x} \vdash Q $ implies $ x \notin \FreeQubits{Q} $.
			Then we can ignore the substitution $ \Subst{q_i}{x_i} $ for $ Q $, \ie $ \mathsf{s}' = \mathsf{s} \setminus \Set{\Subst{q_i}{x_i}} $ and $ Q\mathsf{s} = Q\mathsf{s}' $.
			By the induction hypothesis, then $ \left( \Sigma \setminus \Set{x} \right) \mathsf{s}' \vdash Q\mathsf{s}' $.
			Since the substitution cannot rename two qubits to the same qubit, then $ \left( \Sigma \setminus \Set{x} \right) \mathsf{s}' = \left( \Sigma\mathsf{s} \right) \setminus \Set{q_i} $.
			By \ruleTOut, then $ \Sigma\mathsf{s} \vdash P\mathsf{s} $.
		\item[$ P = \UnitaryCQS{\tilde{x}}{U}{Q} $] By \ruleTTrans, then $ \tilde{x} \in \var \cap \Sigma $, $ \vdash \At{U}{\OpType{n}} $, and $ \Sigma \vdash Q $.
			By the induction hypothesis, then $ \Sigma\mathsf{s} \vdash Q\mathsf{s} $.
			Since $ P\mathsf{s} = \UnitaryCQS{\tilde{x}\mathsf{s}}{U}{\left( Q\mathsf{s} \right)} $ and because of \ruleTTrans, then $ \Sigma\mathsf{s} \vdash P\mathsf{s} $.
		\item[$ P = \MeasCQS{v'}{\tilde{x}}{Q} $] By \ruleTMsure, then $ v' \in \binaries $, $ \tilde{x} \in \var \cap \Sigma $, and $ \Sigma \vdash Q $.
			By the induction hypothesis, then $ \Sigma\mathsf{s} \vdash Q\mathsf{s} $.
			Since $ P\mathsf{s} = \MeasCQS{v'}{\tilde{x}\mathsf{s}}{\left( Q\mathsf{s} \right)} $ and because of \ruleTMsure, then $ \Sigma\mathsf{s} \vdash P\mathsf{s} $.
		\item[$ P = \NewCQS{c}{Q} $] By \ruleTNew, then $ c \in \names $ and $ \Sigma \vdash Q $.
			By the induction hypothesis, then $ \Sigma\mathsf{s} \vdash Q\mathsf{s} $.
			Since $ P\mathsf{s} = \NewCQS{c}{\left( Q\mathsf{s} \right)} $ and because of \ruleTNew, then $ \Sigma\mathsf{s} \vdash P\mathsf{s} $.
		\item[$ P = \QubitCQS{x}{Q} $] By \ruleTQbit, then $ x \in \var \setminus \Sigma $ and $ \Sigma \cup \Set{x} \vdash Q $.
			By the induction hypothesis, then $ \left( \Sigma \cup \Set{x} \right) \mathsf{s} \vdash Q\mathsf{s} $.
			Since $ \FreeQubits{\mathsf{s}} \notin \BoundQubits{P} $, $ x \notin \FreeQubits{\mathsf{s}} $ and thus $ \left( \Sigma \cup \Set{x} \right) \mathsf{s} = \Sigma\mathsf{s} \cup \Set{x} $.
			Since $ P\mathsf{s} = \QubitCQS{x}{\left( Q\mathsf{s} \right)} $ and because of \ruleTQbit, then $ \Sigma\mathsf{s} \vdash P\mathsf{s} $.
		\item[$ P = \CondCQS{bv_1}{bv_2}{Q} $] By \ruleTCond, then $ bv_1 \in \binaries $ or $ \vdash \At{bv_1}{\binariesType} $, $ bv_2 \in \binaries $ or $ \vdash \At{bv_2}{\binariesType} $, and $ \Sigma \vdash Q $.
			By the induction hypothesis, then $ \Sigma\mathsf{s} \vdash Q\mathsf{s} $.
			Since $ P\mathsf{s} = \CondCQS{bv_1}{bv_2}{\left( Q\mathsf{s} \right)} $ and because of \ruleTMsure, then $ \Sigma\mathsf{s} \vdash P\mathsf{s} $.
			\qedhere
	\end{description}
\end{proof}

Well-typedness is also preserved modulo substitutions of channel names.
Let $ \BoundChan{P} $ return the set of bound names in $ P $.

\begin{lem}
	If $ \Sigma \vdash P $ and $ a, c \in \names \setminus \BoundChan{P} $ then $ \Sigma \vdash P\Set{\Subst{a}{c}} $.
	\label{lem:typingSubstitutionNames}
\end{lem}

\begin{proof}
	Assume $ \Sigma \vdash P $ and $ a, c \in \names \setminus \BoundChan{P} $.
	We perform an induction on the structure of $ P $.
	\begin{description}
		\item[$ P = \nilCQS $] Then $ P = P\Set{\Subst{a}{c}} $ and thus $ \Sigma \vdash P $ implies $ \Sigma \vdash P\Set{\Subst{a}{c}} $.
		\item[$ P = \success $] Then $ P = P\Set{\Subst{a}{c}} $ and thus $ \Sigma \vdash P $ implies $ \Sigma \vdash P\Set{\Subst{a}{c}} $.
		\item[$ P = Q \mid R $] By \ruleTPar, then there are $ \Sigma_1, \Sigma_2 $ such that $ \Sigma_1 \vdash Q $, $ \Sigma_2 \vdash R $, $ \Sigma = \Sigma_1 \cup \Sigma_2 $, and $ \Sigma_1 \cap \Sigma_2 = \emptyset $.
			By the induction hypothesis, then $ \Sigma_1 \vdash Q\Set{\Subst{a}{c}} $ and $ \Sigma_2 \vdash R\Set{\Subst{a}{c}} $.
			Since $ P\Set{\Subst{a}{c}} = Q\Set{\Subst{a}{c}} \mid R\Set{\Subst{a}{c}} $ and because of \ruleTPar, then $ \Sigma \vdash P\Set{\Subst{a}{c}} $.
		\item[$ P = \InpCQS{d}{x}{Q} $] By \ruleTIn, then $ d \in \names $, $ x \in \var \setminus \Sigma $, and $ \Sigma \cup \Set{x} \vdash Q $.
			By the induction hypothesis, then $ \Sigma \cup \Set{x} \vdash Q\Set{\Subst{a}{c}} $.
			Since $ P\Set{\Subst{a}{c}} = \InpCQS{d^*}{x}{\left( Q\Set{\Subst{a}{c}} \right)} $ with $ d^* \in \Set{a, d} $ and because of \ruleTIn, then $ \Sigma \vdash P\Set{\Subst{a}{c}} $.
		\item[$ P = \OutCQS{d}{x}{Q} $] By \ruleTOut, then $ d \in \names $, $ x \in \var \cap \Sigma $, and $ \Sigma \setminus \Set{x} \vdash Q $.
			By the induction hypothesis, then $ \Sigma \setminus \Set{x} \vdash Q\Set{\Subst{a}{c}} $.
			Since $ P\Set{\Subst{a}{c}} = \OutCQS{d^*}{x}{\left( Q\Set{\Subst{a}{c}} \right)} $ with $ d^* \in \Set{ a, d } $ and because of \ruleTOut, then $ \Sigma \vdash P\Set{\Subst{a}{d}} $.
		\item[$ P = \UnitaryCQS{x_1, \ldots, x_n}{U}{Q} $] By \ruleTTrans, then $ x_1, \ldots, x_n \in \var \cap \Sigma $, $ \vdash \At{U}{\OpType{n}} $, and $ \Sigma \vdash Q $.
			By the induction hypothesis, then we have $ \Sigma \vdash Q\Set{\Subst{a}{c}} $.
			Since $ P\Set{\Subst{a}{c}} = \UnitaryCQS{x_1, \ldots, x_n}{U}{\left( Q\Set{\Subst{a}{c}} \right)} $ and because of \ruleTTrans, then $ \Sigma \vdash P\Set{\Subst{a}{c}} $.
		\item[$ P = \MeasCQS{v}{x_1, \ldots, x_n}{Q} $] By \ruleTMsure, then $ v \in \binaries $, $ x_1, \ldots, x_n \in \var \cap \Sigma $, and $ \Sigma \vdash Q $.
			By the induction hypothesis, then we have $ \Sigma \vdash Q\Set{\Subst{a}{c}} $.
			Since $ P\Set{\Subst{a}{c}} = \MeasCQS{v}{x_1, \ldots, x_n}{\left( Q\Set{\Subst{a}{c}} \right)} $ and because of \ruleTMsure, then $ \Sigma \vdash P\Set{\Subst{a}{c}} $.
		\item[$ P = \NewCQS{d}{Q} $] By \ruleTNew, then $ d \in \names $ and $ \Sigma \vdash Q $.
			By the induction hypothesis, then $ \Sigma \vdash Q\Set{\Subst{a}{c}} $.
			Since $ a, c \notin \BoundChan{P} $, $ d \notin \Set{a, c} $.
			Then $ P\Set{\Subst{a}{c}} = \NewCQS{d}{\left( Q\Set{\Subst{a}{c}} \right)} $.
			By \ruleTNew, then $ \Sigma \vdash P\Set{\Subst{a}{c}} $.
		\item[$ P = \QubitCQS{x}{Q} $] By \ruleTQbit, then $ x \in \var \setminus \Sigma $ and $ \Sigma \cup \Set{x} \vdash Q $.
			By the induction hypothesis, then $ \Sigma \cup \Set{x} \vdash Q\Set{\Subst{a}{c}} $.
			Since $ P\Set{\Subst{a}{c}} = \QubitCQS{x}{\left( Q\Set{\Subst{a}{c}} \right)} $ and because of \ruleTQbit, then $ \Sigma \vdash P\Set{\Subst{a}{c}} $.
		\item[$ P = \CondCQS{bv_1}{bv_2}{Q} $] By \ruleTCond, then $ bv_1 \in \binaries $ or $ \vdash \At{bv_1}{\binariesType} $, $ bv_2 \in \binaries $ or $ \vdash \At{bv_2}{\binariesType} $, and $ \Sigma \vdash Q $.
			By the induction hypothesis, then $ \Sigma \vdash Q\Set{\Subst{a}{c}} $.
			Since $ P\Set{\Subst{a}{c}} = \CondCQS{bv_1}{bv_2}{\left( Q\Set{\Subst{a}{c}} \right)} $ and because of \ruleTMsure, then $ \Sigma \vdash P\Set{\Subst{a}{c}} $.
			\qedhere
	\end{description}
\end{proof}

Lemma~\ref{lem:preservationCQS} states:
\begin{quotation}
	If $ \Sigma \vdash P $ and $ \ConfigCQS{\sigma}{\phi}{P} \step \boxplus_{0 \leq i < 2^r} p_i \bullet \ConfigCQS{\sigma_i'}{\phi'}{P_i} $ or if $ \Sigma \vdash P_k $ for all $ 0 \leq k < 2^t $ and $ \boxplus_{0 \leq k < 2^t} p_k' \bullet \ConfigCQS{\sigma}{\phi}{P_k'} \step \boxplus_{0 \leq i < 2^r} p_i \bullet \ConfigCQS{\sigma_i'}{\phi'}{P_i} $ then there is some $ \Sigma' \in \Set{\Sigma, \Sigma \cup \Set{q_n} } $ for some fresh $ q_n $ such that $ \Sigma' \vdash P_i $ for all $ 0 \leq i < 2^{r} $.
\end{quotation}

\begin{proof}[Proof of Lemma~\ref{lem:preservationCQS}]
	Assume $ \Sigma \vdash P $ and $ \ConfigCQS{\sigma}{\phi}{P} \step \boxplus_{0 \leq i < 2^r} p_i \bullet \ConfigCQS{\sigma_i'}{\phi'}{P_i} $ or if $ \Sigma \vdash P_k $ for all $ 0 \leq k < 2^t $ and $ \boxplus_{0 \leq k < 2^t} p_k \bullet \ConfigCQS{\sigma}{\phi}{P_k} \step \boxplus_{0 \leq i < 2^r} p_i \bullet \ConfigCQS{\sigma_i'}{\phi'}{P_i} $.
	We perform an induction on the reduction rules in Figure~\ref{fig:semanticsCQS}.
	\begin{description}
		\item[\ruleRMeasureCQS] Then $ P = \MeasCQS{v}{q_1, \ldots, q_{r - 1}}{Q} $ and all $ P_i = Q\Set{\Subst{\Binary{i}}{v}} $ for all $ 0 \leq i < 2^r $.
			Fix some $ i $ with $ 0 \leq i < 2^r $.
			By \ruleTMsure, then $ v \in \binaries $ and $ \Sigma \vdash Q $.
			By \ruleTBin, $ \vdash \At{\Binary{i}}{\binariesType} $.
			By Lemma~\ref{lem:typingSubstitutionBinaries}, then $ \Sigma \vdash P_i $.
		\item[\ruleRTransCQS] Then $ P = \UnitaryCQS{q_0, \ldots, q_{r' - 1}}{U}{Q} $, $ r = 0 $, there is just one $ i $ such that $ 0 \leq i < 2^r $, and $ P_i = P_0 = Q $.
			By \ruleTTrans, then $ \Sigma \vdash Q $, \ie $ \Sigma \vdash P_i $.
		\item[\ruleRPermCQS] Then $ r = 0 $, there is just one $ i $ such that $ 0 \leq i < 2^r $, and $ P_i = P_0 = P\pi $, where $ \pi $ is a permutation of qubit names that are free, \ie $ \FreeQubits{\pi} \subseteq \FreeQubits{P} $.
			By Lemma~\ref{lem:typingFreeQubitsCQS}, then $ \FreeQubits{\pi} \subseteq \Sigma $.
			Then $ \Sigma\pi = \Sigma $.
			By Lemma~\ref{lem:typingSubstitutionQubits}, then $ \Sigma \vdash P $ implies $ \Sigma \vdash P_i $.
		\item[\ruleRProbCQS] Then $ P_j' = Q\Set{\Subst{\Binary{j}}{v}} $, $ r = 0 $, there is just one $ i $ such that $ 0 \leq i < 2^r $, and $ P_i = P_0 = Q\Set{\Subst{\Binary{j}}{v}} = P_j' $ for some $ 0 \leq j < 2^t $.
			Hence, $ \Sigma \vdash P_j' $ implies $ \Sigma \vdash P_i $.
		\item[\ruleRNewCQS] Then $ P = \NewCQS{c}{Q} $, $ r = 0 $, there is just one $ i $ such that $ 0 \leq i < 2^r $, and $ P_i = P_0 = Q\Set{\Subst{a}{c}} $, where $ a $ is fresh.
			By \ruleTNew, then $ c \in \names $ and $ \Sigma \vdash Q $.
			By Lemma~\ref{lem:typingSubstitutionNames}, then $ \Sigma \vdash P_i $.
		\item[\ruleRQbitCQS] Then $ P = \QubitCQS{x}{Q} $, $ r = 0 $, there is just one $ i $ such that $ 0 \leq i < 2^r $, and $ P_i = P_0 = Q\Set{\Subst{q_n}{x}} $ for some fresh $ q_n $.
			By \ruleTQbit, $ x \in \var \setminus \Sigma $ and $ \Sigma \cup \Set{x} \vdash Q $.
			Because we assume the absence of name clashes and since no qubit variable has a name of the form $ q_j $, $ x \notin \BoundQubits{Q} $.
			Since $ q_n $ is fresh, $ q_n \notin \BoundQubits{Q}  $.
			Note that $ \Sigma \subseteq \left( \Sigma \cup \Set{x}\right) \Set{\Subst{q_n}{x}} $.
			By Lemma~\ref{lem:typingSubstitutionQubits}, then $ \left( \Sigma \cup \Set{x}\right) \Set{\Subst{q_n}{x}} \vdash P_i $.
		\item[\ruleRCommCQS] Then $ P = \OutCQS{c}{q}{Q} \mid \InpCQS{c}{x}{R} $, $ r = 0 $, there is just one $ i $ such that $ 0 \leq i < 2^r $, and $ P_i = P_0 = Q \mid R\Set{\Subst{q}{x}} $.
			By \ruleTPar, then there are $ \Sigma_1, \Sigma_2 $ such that $ \Sigma_1 \vdash \OutCQS{c}{q}{Q} $, $ \Sigma_2 \vdash \InpCQS{c}{x}{R} $, $ \Sigma_1 \cap \Sigma_2 = \emptyset $, and $ \Sigma_1 \cup \Sigma_2 = \Sigma $.
			By \ruleTOut, then $ c \in \names $, $ q \in \var \cap \Sigma_1 $, and $ \Sigma_1 \setminus \Set{q} \vdash Q $.
			By \ruleTIn, then $ x \in \var \setminus \Sigma_2 $ and $ \Sigma_2 \cup \Set{x} \vdash R $.
			Since $ q \in \Sigma_1 $ and $ \Sigma_1 \cap \Sigma_2 = \emptyset $, $ q \notin \Sigma_2 $.
			Because we assume that there are no name clashes for $ P $, $ x, q \notin \BoundQubits{R} $.
			By Lemma~\ref{lem:typingSubstitutionQubits}, then $ \left( \Sigma_2 \cup \Set{x} \right) \Set{\Subst{q}{x}} \vdash R\Set{\Subst{q}{x}} $.
			Since $ x \notin \Sigma_2 $, $ \left( \Sigma_2 \cup \Set{x} \right) \Set{\Subst{q}{x}} = \Sigma_2 \cup \Set{q} $.
			Note that $ \left( \Sigma_1 \setminus \Set{q} \right) \cap \left( \Sigma_2 \cup \Set{q} \right) = \emptyset $ and $ \left( \Sigma_1 \setminus \Set{q} \right) \cup \left( \Sigma_2 \cup \Set{q} \right) = \Sigma $.
			By \ruleTPar, then $ \Sigma \vdash P_i $.
		\item[\ruleRParCQS] Then $ P = Q \mid R $, $ \ConfigCQS{\sigma}{\phi}{Q} \step \boxplus_{0 \leq i < 2^r} p_i \ConfigCQS{\sigma_i'}{\phi}{Q_i} $, and $ P_i = Q_i \mid R $ for all $ 0 \leq i < 2^r $.
			Fix some $ i $ with $ 0 \leq i < 2^r $.
			By \ruleTPar, then there are $ \Sigma_1, \Sigma_2 $ such that $ \Sigma_1 \vdash Q $, $ \Sigma_2 \vdash R $, $ \Sigma_1 \cap \Sigma_2 = \emptyset $, and $ \Sigma_1 \cup \Sigma_2 = \Sigma $.
			By the induction hypothesis, then there is some $ \Sigma_1' \in \Set{\Sigma_1, \Sigma_1' \cup \Set{q}} $ for some fresh $ q $ such that $ \Sigma_1' \vdash Q_i $.
			Since $ q $ is fresh, $ \Sigma_1' \cap \Sigma_2 = \emptyset $.
			By \ruleTPar, then $ \Sigma' \vdash P_i $, where $ \Sigma' \in \Set{\Sigma, \Sigma' \cup \Set{q}} $.
		\item[\ruleRCongCQS] Then $ P \equiv Q $, $ \ConfigCQS{\sigma}{\phi}{Q} \step \boxplus_{0 \leq i < 2^r} p_i \ConfigCQS{\sigma_i'}{\phi}{Q_i'} $, and $ P_i \equiv Q_i' $ for all $ 0 \leq i < 2^r $.
			Fix some $ i $ with $ 0 \leq i < 2^r $.
			By Lemma~\ref{lem:typingModuloStructuralCongruence}, then $ \Sigma \vdash Q $.
			By the induction hypothesis, then there is some $ \Sigma' \in \Set{\Sigma, \Sigma' \cup \Set{q}} $ for some fresh $ q $ such that $ \Sigma' \vdash Q_i' $.
			By Lemma~\ref{lem:typingModuloStructuralCongruence}, then $ \Sigma' \vdash P_i $.
		\item[\ruleRCondCQS] Then $ P = \CondCQS{b}{b'}{Q} $, $ b = b' $, $ r = 0 $, there is just one $ i $ such that $ 0 \leq i < 2^r $, and $ P_i = P_0 = Q $.
			By \ruleTCond, then $ \Sigma \vdash P_i $.
			\qedhere
	\end{description}
\end{proof}

Finally, Lemma~\ref{lem:typingCQSUniqueOwnership} states:
\begin{quotation}
	If $ \Sigma \vdash P \mid Q $ then $ \FreeQubits{P} \cap \FreeQubits{Q} = \emptyset $.
\end{quotation}

\begin{proof}[Proof of Lemma~\ref{lem:typingCQSUniqueOwnership}]
	Assume $ \Sigma \vdash P \mid Q $.
	By \ruleTPar, then there are $ \Sigma_1, \Sigma_2 $ such that $ \Sigma_1 \vdash P $, $ \Sigma_2 \vdash Q $, $ \Sigma_1 \cap \Sigma_2 = \emptyset $, and $ \Sigma_1 \cup \Sigma_2 = \Sigma $.
	By Lemma~\ref{lem:typingFreeQubitsCQS}, then $ \FreeQubits{P} \subseteq \Sigma_1 $ and $ \FreeQubits{Q} \subseteq \Sigma_2 $.
	Since $ \Sigma_1 \cap \Sigma_2 = \emptyset $, then $ \FreeQubits{P} \cap \FreeQubits{Q} = \emptyset $.
\end{proof}

\end{appendix}

\end{document}